\newcounter{mnotecount}[section]
\numberwithin{equation}{section}
\def\nablaslash{\mbox{$\nabla \mkern -13mu /$ \!}}
\newcommand{\B}{\mathbf}
\newcommand{\half}{\tfrac{1}{2}}         %
\newcommand{\eps}{\epsilon}
\newcommand{\veps}{\varepsilon}
\newcommand{\inttau}{\int_{\mathcal{D}(0,\tau)}}
\newcommand{\R}{r^2+a^2}
\newcommand{\KDelta}{\Delta}
\newcommand{\di}{\mathrm{d}} 
\newcommand{\p}{\partial}
\newcommand{\intMinfty}{\int_{\mathcal{D}(0,\tau)\cap [6M,\infty)}}
\newcommand{\intMcut}{\int_{\mathcal{D}(0,\tau)\cap [5M,6M]}}
\theoremstyle{plain}
\newtheorem{thm}{Theorem}
\newtheorem{lemma}[thm]{Lemma}
\newtheorem{prop}[thm]{Proposition}
\newtheorem{remark}[thm]{Remark}
\title{Uniform Energy Bound and Morawetz Estimate for Extreme Components of Spin Fields in the Exterior of a Slowly Rotating Kerr Black Hole II: Linearized Gravity}
\author{Siyuan Ma}
\address{Laboratoire Jacques-Louis Lions,
Sorbonne Université, Campus Jussieu,
4 place Jussieu 75005 Paris, France.\\
\email{siyuan.ma@sorbonne-universite.fr}\\
Albert Einstein Institute,
Am M\"uhlenberg 1,
D-14476 Potsdam, Germany}
\begin{document}

\allowdisplaybreaks

%
%
%

\begin{abstract}
This second part of the series treats spin $\pm2$ components (or extreme components), that satisfy the Teukolsky master equation, of the linearized gravity
in the exterior of a slowly rotating Kerr black hole.
For each of these two components, after performing a first-order differential operator once and twice, the resulting equations together with the Teukolsky master equation itself constitute a linear spin-weighted wave system. An energy and Morawetz estimate for spin $\pm 2$ components is proved by treating this system. This is a first step in a joint work \cite{andersson2019stability} in addressing the linear stability of slowly rotating Kerr metrics.
\end{abstract}

\maketitle
\section{Introduction}

The nonlinear stability conjecture of Kerr black holes says that metrics of the subextremal Kerr family of spacetimes $(\mathcal{M},g=g_{M,a})$ ($|a|< M$) are (expected to be) stable against small perturbations of initial data as solutions to the vacuum Einstein equations
\begin{align}\label{eq:EinsteinVacuumEq}
\text{\textbf{Ric}}[g]_{\mu\nu}=0,
\end{align}
$\text{\textbf{Ric}}[g]_{\mu\nu}$ being the Ricci curvature tensor of the metric. An important step toward the resolution of this conjecture of nonlinear stability of Kerr metrics is to show the linear stability, i.e., to show the asymptotic decay of linearized gravitational perturbations (also called \textquotedblleft{linearized gravity\textquotedblright}) around Kerr metrics.

As a beginning step in proving linear stability of Kerr metrics, we consider the gauge invariant extreme components of the Weyl tensor which satisfy the well-known Teukolsky master equation \cite{Teu1972PRLseparability} and govern the dynamics of linearized gravity,
and prove both a uniform bound of a positive definite energy and a Morawetz estimate
for these extreme components on a slowly rotating Kerr background (where $|a|/M\ll 1$ is sufficiently small). In a joint work \cite{andersson2019stability}, this basic energy and Morawetz estimate, whereas called \textquotedblleft{Basic decay condition\textquotedblright} or \textquotedblleft{BEAM condition\textquotedblright} in  \cite{andersson2019stability}, is utilized  to obtain
further strong decay estimates for the extreme components and the full linear stability of slowly rotating Kerr metrics.

\subsection{Kerr Metric}
For the purpose that this paper can be read independently, we review in this subsection the setup of the Kerr metric  and notation from the first part \cite{Ma17spin1Kerrlatest} of this series.

A Kerr spacetime $(\mathcal{M},g=g_{M,a})$ \cite{kerr63} has a metric given in Boyer--Lindquist (B--L) coordinates \cite{boyer:lindquist:1967} $(t,r,\theta,\phi)$ by
\begin{align}\label{eq:KerrMetricBoyerLindquistCoord}
g_{M,a}= & -\left(1-\tfrac{2Mr}{\Sigma} \right) \di t^2 -\tfrac{2Mar \sin^2\theta}{\Sigma}(\di t \di \phi + \di \phi \di t) \nonumber\\
 & + \tfrac{\Sigma}{\Delta} \di r^2 + \Sigma \di \theta^2 +\tfrac{\sin^2\theta}{\Sigma} \left[(r^2+a^2)^2 -a^2\Delta \sin^2\theta\right]\di \phi^2
\end{align}
with
\begin{align}\label{eq:kerrfunctions}
\Delta(r)&= r^2 -2Mr +a^2 \   \   \   \   \   \text{    and    }  & \Sigma(r,\theta) = r^2+a^2 \cos^2\theta ,
\end{align}
and describes a rotating, stationary (with $\partial_t$ Killing), axisymmetric (with $\partial_{\phi}$ Killing), asymptotically flat solution to vacuum Einstein equations \eqref{eq:EinsteinVacuumEq}. The Schwarzschild metric \cite{schw1916} is obtained by setting $a=0$.

The region we consider is the domain of outer communication (DOC)
\begin{equation}\label{def:DOC}
\mathcal{D}=\overline{\left\{(t,r,\theta,\phi)\in \mathbb{R}\times (r _+,\infty)\times \mathbb{S}^2\right\}},
\end{equation}
where $r_+=M+\sqrt{M^2-a^2}$ is the value of the larger root of $\Delta(r)=0$ and corresponds to the location of the event horizon. By symmetry (cf. Section \ref{sect:MainTheorems}),
we focus only on the future development with boundary the future event horizon $\mathcal{H}^+$.
In this paper, a slowly rotating Kerr spacetime should always be understood as the DOC of a Kerr spacetime endowed with the Kerr metric $g=g_{M,a}$ with sufficiently small $|a|/M\ll1$.

The tortoise coordinate $r^*$ is defined by:
\begin{align}
\frac{\di r^*}{\di r}=\frac{r^2+a^2}{\Delta},\   \   \   \   \   \    r^*(3M)=0,
\end{align}
and we call $(t,r^*,\theta,\phi)$ the tortoise coordinate system.
However, both the B-L and tortoise coordinate systems fail to extend across the future event horizon $\mathcal{H}^+$ due to the singularity in the metric coefficients. Instead,
an ingoing Kerr coordinate system $(v,r,\theta,\tilde{\phi})$, which is regular on $\mathcal{H}^+$, is defined by:
\begin{equation}\label{def:IngoingEddiFinkerCoord}
\left\{
  \begin{array}{ll}
    \di v=\di t +\di r^*, \\
    \di \tilde{\phi}=\di \phi +a(r^2 +a^2)^{-1}\di r^*.\\
  \end{array}
\right.
\end{equation}
Moreover, via gluing the coordinate system $(\vartheta=v-r, r, \theta,\tilde{\phi})$ near horizon with the B--L coordinate system $(t,r,\theta,\phi)$ away from horizon smoothly, a global Kerr coordinate system $(t^*,r, \theta,\phi^*)$ can be given by
\begin{equation}\label{def:globalkerrcoord}
\left\{
  \begin{array}{ll}
    t^*=t+\chi_{1}(r)\left(r^*(r)-r-r^*(r_0)+r_0\right),   \\
    \phi^*=\phi+\chi_{1}(r)\acute{\phi}(r)\ \ \text{mod}\ 2\pi, \ \ \di \acute{\phi}/\di r=a/\Delta.
  \end{array}
\right.
\end{equation}
The smooth cutoff function $\chi_1(r)$ here equals $1$ in $[r_+,M+r_0/2]$ and identically vanishes for $r\geq r_0$ with $r_0=r_0(M)$ fixed in Section \ref{sect:Redshift}, and is chosen such that on the initial spacelike hypersurface
\begin{equation}\label{def:InitialHypersurface}
\Sigma_{0}=\left\{(t^*,r,\theta,\phi^*)|t^*=0\right\}\cap \mathcal{D},
\end{equation}
there exist two universal positive constants $c(M)$ and $C(M)$ so that
\begin{equation}\label{eq:initialtimehypersurfacegradientmodular}
c(M)\leq-g(\nabla t^*,\nabla t^*)|_{\Sigma_0}\leq C(M).
\end{equation}
Here the initial hypersurface could be taken as $\{t^*=D\}$ hypersurface for any real value $D$, but for convenience, we take it as in \eqref{def:InitialHypersurface}.

In these coordinate systems, it is manifest that
\begin{equation}\label{eq:VFieldTandPhi}
\partial_{t^*}=\partial_t\triangleq T\ \ \text{and}\ \  \partial_{\phi^*}=\partial_{\tilde{\phi}}=\partial_{\phi}.
\end{equation}
Denote $\varphi_{\tau}$ as the $1$-parameter family of diffeomorphisms generated by $T$ and define constant-time spacelike hypersurfaces satisfying \eqref{eq:initialtimehypersurfacegradientmodular} as well:
\begin{equation}\label{def:constanttimeHypersurface}
\Sigma_{\tau}=\varphi_{\tau}\left(\Sigma_0\right)=\left\{(t^*,r,\theta,\phi^*)|t^*=\tau\right\}\cap \mathcal{D}.
\end{equation}
We finally adopt the notations for any $0\leq \tau_1< \tau_2$ that
\begin{align*}
\mathcal{D}(\tau_1, \tau_2)=\bigcup_{\tau\in [\tau_1, \tau_2]}\Sigma_{\tau}, \quad \text{and}\quad
\mathcal{H}^+(\tau_1, \tau_2)=\partial\mathcal{D}(\tau_1, \tau_2) \cap \mathcal{H}^+.
\end{align*}
The reader may refer to the Penrose diagram Fig. \ref{fig:penrosediagram}.
\begin{figure}[!h]
\centering
\raisebox{-0.5\height}{
\includegraphics{penrosediagram}}
\caption{Penrose diagram}
\label{fig:penrosediagram}
\end{figure}

We shall make use of a volume element for the hypersurface $\Sigma_{\tau} (\tau\geq0)$
\begin{equation}\label{def:volumeformhypersurface}
\di \text{Vol}_{\Sigma_{\tau}}=\Sigma \di r \sin \theta \di \theta \di \phi^* \ \  \text{in global Kerr coordinates},
\end{equation}
and the volume form of the manifold is
\begin{align}\label{eq:VolumeForm}
\di {\text{Vol}}_{\mathcal{M}}=
\left\{
  \begin{array}{ll}
    \Sigma \di t\di r\sin\theta \di \theta \di \phi & \text{in B--L coordinates,} \\
\Sigma \di t^*\di r\sin\theta \di \theta \di \phi^* & \text{in global Kerr coordinates}.
  \end{array}
  \right.
\end{align}
Note that $\di \text{Vol}_{\Sigma_{\tau}}$ is a convenient reference volume form in calculations and in stating the integral estimates, but it is not the induced volume form on $\Sigma_\tau$. Unless otherwise specified, we will always suppress these volume forms associated to the integrals in this paper.

\subsection{Linearized Gravity and Teukolsky Master Equation}\label{sect:lingraandTME}
Following Newman--Penrose (N--P) formalim  \cite{newmanpenrose62,newmanpenrose63errata}, we obtain the complete five N--P components
\begin{align}\label{eq:MaxwellNPcomponentswithnosuperscript}
\Phi_0 = &-\B{W}_{l m l m} ,\ \ \ \ \Phi_1=-\B{W}_{l n l m},\ \ \ \ \Phi_2 = -\B{W}_{l m \overline{m} n} ,\notag\\
&\Phi_3=-\B{W}_{l n \overline{m} n},\ \ \ \ \ \Phi_4 = -\B{W}_{n\overline{m}n\overline{m}}
\end{align}
by projecting the Weyl tensor $\B{W}_{\alpha\beta\gamma\delta}$ onto the Kinnersley null tetrad $(l,n,m,\overline{m})$ \cite{Kinnersley1969tetradForTypeD} in B--L coordinates:
\begin{align}\label{eq:Kinnersley tetrad}
l^\mu &= \tfrac{1}{\Delta}(r^2+a^2 , \Delta , 0 , a), \notag\\
n^\nu &= \tfrac{1}{2\Sigma} (r^2+a^2 , - \Delta , 0 , a), \notag\\
m^\mu &= \tfrac{1}{\sqrt{2} \bar{\rho}}\left(i a \sin{\theta},0 , 1, \tfrac{i}{\sin{\theta}}\right),
\end{align}
and $\overline{m}^{\mu}$ and $\bar{\rho}$ being the complex conjugate of $m^{\mu}$ and $\rho = r- i a \cos{\theta}$ respectively.
The full set of N-P equations, comprising the commutation relations, the Ricci identities, the eliminant relations and the Bianchi identities in \cite[Chapter 1.8]{MR1647491},
is then a coupled first-order differential system linking the tetrad, the spin coefficients and these five N--P components.
On a Kerr background,
\begin{align}\label{eq:KerrbackgroundNPcomps}
&\Phi_0=\Phi_1=\Phi_3=\Phi_4=0,&\Phi_2 =-M\bar{\rho}^{-3}.
\end{align}
We perturb in the N--P equations all the tetrad components, the spin coefficients and the five N-P components by $l^T=l +l^P$, $\kappa^T=\kappa+\kappa^P$,\footnote{$\kappa$ is one of the spin coefficients used in \cite[Chapter 1.8]{MR1647491}.} $\Phi_0^T=\Phi_0+\Phi_0^P$, etc, and the complete set of equations for linearized gravity is then obtained from the N--P equations by keeping the perturbation terms (with superscript $P$) only to first order. The perturbed extreme components $\Phi_0^T$ and $\Phi_4^T$ (which are equal to $\Phi_0^P$ and $\Phi_4^P$) for linearized gravity
are the \textquotedblleft{ingoing and outgoing radiative parts,\textquotedblright}
and are invariant under gauge transformations and infinitesimal tetrad rotations. From now on, we will drop the superscript and still denote these perturbed extreme components as $\Phi_0$ and $\Phi_4$.

Teukolsky \cite{Teu1972PRLseparability} derived the decoupled equations on Kerr backgrounds for the spin $s=\pm2$ components
\begin{equation}
\begin{split}
\psi_{[+2]}= \Delta^2 \Phi_0  \ \ \text{ and } \ \ & \psi_{[-2]}=\Delta^{-2}\rho^4\Phi_4,
\end{split}
\label{eq:spinsfields}
\end{equation}
and showed that these decoupled equations are in fact separable and governed by a single master equation--the celebrated \emph{Teukolsky Master Equation} (TME)--given in B--L coordinates by
\begin{equation}\label{eq:TME}
\begin{split}
& -\left[\tfrac{(r^2+a^2)^2}{\Delta} -a^2 \sin^2{\theta} \right] \tfrac{\partial^2 \psi_{[s]}}{\partial t^2} - \tfrac{4Mar}{\Delta} \tfrac{\partial^2 \psi_{[s]}}{\partial t \partial \phi}-\left[\tfrac{a^2}{\Delta} -\tfrac{1}{\sin^2{\theta}} \right] \tfrac{\partial^2 \psi_{[s]}}{\partial \phi^2}   \\
&  +\Delta^{s} \tfrac{\partial}{\partial r} \left( \Delta^{-s+1} \tfrac{\partial \psi_{[s]}}{\partial r} \right) + \tfrac{1}{\sin{\theta}} \tfrac{\partial}{\partial \theta} \left( \sin{\theta} \tfrac{\partial \psi_{[s]}}{\partial \theta}\right) +2s \left[ \tfrac{a(r-M)}{\Delta} + \tfrac{i \cos{\theta}}{\sin^2{\theta} } \right] \tfrac{\partial \psi_{[s]}}{\partial \phi} \\
&  +2s\left[ \tfrac{M(r^2-a^2)}{\Delta} -r -ia \cos{\theta} \right] \tfrac{\partial \psi_{[s]}}{\partial t}- (s^2 \cot^2{\theta} +s) \psi_{[s]} = 0 .
\end{split}
\end{equation}
In fact, TME is valid for general spin $s$ fields with $s=\frac{n}{2}, n\in \mathbb{Z}$. In particular, the $s=0$ case is the scalar wave equation, and the $s=\pm 1$ cases are the governing equations of spin $\pm 1$ components of the Maxwell field.

The Kinnersley tetrad is, however, singular on $\mathcal{H}^+$ in ingoing Kerr coordinates, suggesting that the perturbed N--P components
are not all regular there. We perform a null rotation by
\begin{equation}\label{null rotation}
\left\{
  \begin{array}{ll}
    l \rightarrow \tilde{l}=\Delta/(2\Sigma)\cdot l, \\
    n \rightarrow \tilde{n}=(2\Sigma)/\Delta \cdot n,\\
    m \rightarrow m,
  \end{array}
\right.
\end{equation}
and the resulting tetrad $(\tilde{l},\tilde{n},m,\overline{m})$, namely the Hawking--Hartle (H--H) tetrad \cite{HHtetrad72}, is in fact regular up to and on $\mathcal{H}^+$ in global Kerr coordinates.
The regular extreme components of linearized gravity
in the regular H--H tetrad are then
\begin{align}\label{def:regularNPComps}
\left\{
  \begin{array}{ll}
    \widetilde{\Phi_0}(\B{W})&=-\B{W}_{\tilde{l} m\tilde{l}m}
    =\tfrac{1}{4\Sigma^2}\psi_{[+2]}, \\
    \widetilde{\Phi_4}(\B{W})&=-\B{W}_{\tilde{n}\overline{m} \tilde{n}\overline{m}}
    =\tfrac{4\Sigma^2}{\rho^4}\psi_{[-2]}.
  \end{array}
\right.
\end{align}
The results in this paper will be with respect to complex scalars $\widetilde{\Phi_0}$ and $\widetilde{\Phi_4}$.

\subsection{Coupled Systems}

Denote the future-directed ingoing and outgoing principal null vector fields in B--L coordinates
\begin{align}\label{def:VectorFieldYandV}
Y&\triangleq \tfrac{(r^2+a^2)\partial_t +a\partial_{\phi}}{\Delta}-\partial_r, \ &\ V&\triangleq \tfrac{(r^2+a^2)\partial_t+
a\partial_{\phi}}{\Delta}+\partial_r.
\end{align}
From TME \eqref{eq:TME}, the equations for $\psi_{[+2]}$ and $\psi_{[-2]}$  are
\begin{subequations}\label{eq:TME0orderPosiandNegaS2}
\begin{align}\label{eq:TME0orderS2}
&\left(\Sigma\Box_g + 4i\left(\tfrac{\cos\theta}{\sin^2 \theta}\partial_{\phi}-a\cos \theta \partial_t\right) - (4\cot^2\theta +2)\right)\psi_{[+2]} =-4Z\psi_{[+2]},\\
\label{eq:TME0orderNegaS2}
&\left(\Sigma\Box_g - 4i\left(\tfrac{\cos\theta}{\sin^2 \theta}\partial_{\phi}-a\cos \theta \partial_t\right) - (4\cot^2\theta -2)\right) \psi_{[-2]} =4Z\psi_{[-2]},
\end{align}
\end{subequations}
with $Z=(r-M)Y-2r\partial_t$.
Construct from $\psi_{[+2]}$ and $\psi_{[-2]}$ the quantities
\begin{subequations}\label{eq:DefOfphi012BothSpinS2}
\begin{align}\label{eq:DefOfphi012PosiSpinS2}
\left\{
  \begin{array}{ll}
    \phi^0_{+2}&=\psi_{[+2]}/r^4;\\
\phi^1_{+2}&=\left(rYr\right)(\phi^0_{+2});\\ \phi^2_{+2}&=\left(rYr\right)
\left(rYr\right)(\phi^0_{+2}),
  \end{array}
\right.
\end{align}
and
\begin{align}\label{eq:DefOfphi012NegaSpinS2}
\left\{
  \begin{array}{ll}
 \phi^0_{-2}&=\Delta^2/r^4\psi_{[-2]};\\
\phi^1_{-2}&=-\left(rVr\right)(\phi^0_{-2});\\
\phi^2_{-2}&=\left(rVr\right)
\left(rVr\right)(\phi^0_{-2}).
  \end{array}
\right.
\end{align}
\end{subequations}
The upper index here denotes the number of times the differential operator $rYr$ or $-(rVr)$ is performed.
Note that though it is not $V$ but rather $\frac{\Delta}{r^2+a^2} V$ which is a regular vector field on $\mathcal{H}^+$, by the second relation in \eqref{def:regularNPComps}, the variables $\left\{\phi_{-2}^i\right\}_{i=0,1,2}$ in \eqref{eq:DefOfphi012NegaSpinS2} are indeed smooth up to and on future horizon if the regular N--P scalar $\widetilde{\Phi_4}$ is. In global Kerr coordinates, the regular vector field $Y$ equals $-\partial_r +\partial_{t^*}$ in $[r_+, M+r_0/2]$ and is $\tfrac{\R}{\Delta}\partial_{t^*}+\tfrac{a}{\Delta}\partial_{\phi^*}-\partial_r$ for $r\geq r_0$.

The rescalings in the definitions of $\phi_{\pm 2}^0$ are such that one can rewrite the first order $Z$-derivative terms in terms of $\phi_{\pm 2}^1$ plus first order derivative terms with $a$-dependent coefficients. From the commutator relations in Appendix \ref{sect:commutatorwaveandYV}, one can derive the equations satisfied by $\phi_{\pm 2}^1$ and $\phi_{\pm 2}^2$. The coupled systems of equations for these quantities are
\begin{subequations}\label{eq:ReggeWheeler Phi^012KerrS2}
\begin{align}
\label{eq:ReggeWheeler Phi^0KerrS2}
\mathbf{L}^0_{+2}\phi^0_{+2}
=F^{0}_{+2}=&\tfrac{4(r^2-3Mr+2a^2)}{r^3}\phi^1_{+2}
-\tfrac{8(a^2\partial_t+a\partial_{\phi})\phi^0_{+2}}{r},\\
\label{eq:ReggeWheeler Phi^1KerrS2}
\mathbf{L}^1_{+2}\phi^1_{+2} =F^1_{+2}=&\tfrac{2(r^2-3Mr+2a^2)}{r^3}\phi^2_{+2}
+\tfrac{6Mr-12a^2}{r}\phi^0_{+2}\notag\\
&-\tfrac{4(a^2\partial_t+a\partial_{\phi})\phi^1_{+2}}{r}
-6(a^2\partial_t+a\partial_{\phi})\phi^0_{+2},\\
\label{eq:ReggeWheeler Phi^2KerrS2}
\mathbf{L}^1_{+2}\phi^2_{+2} =F^2_{+2}
=&-8(a^2\partial_t+a\partial_{\phi})\phi^1_{+2}-12a^2\phi^0_{+2},
\end{align}
\end{subequations}
and
\begin{subequations}\label{eq:ReggeWheeler Phi^012KerrNegaS2}
\begin{align}\label{eq:ReggeWheeler Phi^0KerrNegaS2}
\mathbf{L}^0_{-2}\phi^0_{-2}
=F^{0}_{-2}=&\tfrac{4(r^2-3Mr+2a^2)}{r^3}\phi^1_{-2}
+\tfrac{8(a^2\partial_t+a\partial_{\phi})\phi^0_{-2}}{r},\\
\label{eq:ReggeWheeler Phi^1KerrNegaS2}
\mathbf{L}^1_{-2}\phi^1_{-2} =F^{1}_{-2}=&\tfrac{2(r^2-3Mr+2a^2)}{r^3}\phi^2_{-2}
+\tfrac{6Mr-12a^2}{r}\phi^0_{-2}\notag\\
&+\tfrac{4(a^2\partial_t+a\partial_{\phi})\phi^1_{-2}}{r}
+6(a^2\partial_t+a\partial_{\phi})\phi^0_{-2},\\
\label{eq:ReggeWheeler Phi^2KerrNegaS2}
\mathbf{L}^1_{-2}\phi^2_{-2}=F^{2}_{-2}
=&8(a^2\partial_t+a\partial_{\phi})\phi^1_{-2}-12a^2\phi^0_{-2},
\end{align}
\end{subequations}
respectively.\footnote{This application of the first-order differential operators to the spin $\pm 2$ components is closely related to \emph{Chandrasekhar
transformation} \cite{chandrasekhar1975linearstabSchw}.}
The subscript $+2$ or $-2$ here indicates the spin weight $s=\pm2$, and the operators $\mathbf{L}^0_s$ and $\mathbf{L}^1_s$, given by
\begin{subequations}\label{def:SWRW01operatorS2}
\begin{align}
\label{def:SWRW0operatorS2}
\mathbf{L}^0_s&=\Sigma \Box_g+2is\left(\tfrac{\cos\theta}{\sin^2 \theta}\partial_{\phi}-a\cos \theta \partial_t\right)-s^2\left(\cot^2 \theta+\tfrac{r^2+2Mr-2a^2}{2r^2}\right),\\
\label{def:SWRWoperatorS2}
\mathbf{L}^1_s&=\Sigma \Box_g+2is\left(\tfrac{\cos\theta}{\sin^2 \theta}\partial_{\phi}-a\cos \theta \partial_t\right)-s^2\left(\cot^2 \theta+\tfrac{r^2-2Mr+2a^2}{r^2}\right),
\end{align}
\end{subequations}
are both \emph{spin-weighted wave operators}, but with different potentials. The equations for $\phi^i_s$ in \eqref{eq:ReggeWheeler Phi^012KerrS2} and \eqref{eq:ReggeWheeler Phi^012KerrNegaS2} are in either form of the following equations:
\begin{subequations}
\begin{align}\label{eq:RewrittenFormofISWWEphi0OpeForm}
\mathbf{L}_s^0\psi&=F;\\
\label{eq:RewrittenFormofSWRWEOpeForm}
\mathbf{L}_s^1\psi&=F,
\end{align}
\end{subequations}
both of which are called  in this paper as \emph{inhomogeneous spin-weighted wave equations} (ISWWE).
When there is no confusion of which spin component we are treating, we may suppress the subscript of $\phi^i_s$ and simply write as $\phi^i$.
\begin{remark}
After making the substitutions $\partial_t\leftrightarrow -i\omega$, $\partial_{\phi}\leftrightarrow im$, and separating the operators $\mathbf{L}_s^k$ $(k=0,1)$, the angular parts are the spin-weighted spheroidal harmonic operator of angular Teukolsky equation. The radial operator of $\mathbf{L}_s^1$ is the sum of the radial part of the rescaled scalar wave operator $\Sigma \Box_g$ and a potential $s^2(r^2-\Delta-a^2)/r^2$,
and reduces to the radial operator for Regge--Wheeler equation \cite{ReggeWheeler1957} when on Schwarzschild background $(a=0)$, while the one of $\mathbf{L}_s^0$ is the sum of the radial part of $\Sigma \Box_g$ and another potential $s^2(\Delta+a^2)/(2r^2)$. See more details in Section \ref{sect:decompSchwS2} for Schwarzschild case and Section \ref{sect:SeparateAngAndRadialEqs} for Kerr case.
\end{remark}

\subsection{Main Theorem}\label{sect:MainTheorems}

The TME admits a symmetry that $\Delta^s \psi_{[-s]}(-t,r,\theta,-\phi)$ and $\psi_{[s]}(t,r,\theta,\phi)$ satisfy the same equation; hence we focus only on the future time development in this paper, and one can obtain the analogous estimates in the past time direction.

For any complex-valued smooth function $\psi: \mathcal{M}\rightarrow \mathbb{C}$ with spin weight $s$, we define in global Kerr coordinates that  for any $\tau\geq0$,
\begin{equation}\label{eq:ModuloSquareofDeris}
|\partial\psi(t^*,r,\theta,\phi^*)|^2=|\partial_{t^*}\psi|^2
+|\partial_r\psi|^2+|\nablaslash\psi|^2,
\end{equation}
\begin{equation}
 {E}_{\tau}(\psi)= \int_{\Sigma_{\tau}}|\partial\psi|^2,
\end{equation}
and in ingoing Kerr coordinates that  for any $\tau_2>\tau_1\geq 0$,
\begin{equation}
 {E}_{\mathcal{H}^+(\tau_1,\tau_2)}(\psi)= \int_{\mathcal{H}^+(\tau_1,\tau_2)}(|\partial_v\psi|^2
 +|\nablaslash\psi|^2)r^2\di v\sin\theta \di \theta \di \tilde{\phi}.
\end{equation}
The $\nablaslash$ used here are not the standard rotational angular derivatives $\check{\nablaslash}$ on sphere $\mathbb{S}^2(t^*,r)$, but the spin-weighted version of them, i.e., $\nablaslash$ could be any one of $\nablaslash_j$ $(j=1,2,3)$ defined by
\begin{align}\label{SpinWeightedAngularDerivaBasisOnSphere}
\left\{
  \begin{array}{ll}
    r\nablaslash_1&=r\check{\nablaslash}_1-\tfrac{is\cos\phi}{\sin\theta}
    =(-\sin\phi\partial_{\theta}-
\tfrac{\cos\phi}{\sin\theta}\cos\theta\partial_{\phi^*})
-\tfrac{is\cos\phi}{\sin\theta},\\
r\nablaslash_2&=r\check{\nablaslash}_2-\tfrac{is\sin\phi}{\sin\theta}
=(\cos\phi\partial_{\theta}
-\tfrac{\sin\phi}{\sin\theta}\cos\theta\partial_{\phi^*})
-\tfrac{is\sin\phi}{\sin\theta},\\
r\nablaslash_3&=r\check{\nablaslash}_3=\partial_{\phi^*}.
  \end{array}
\right.
\end{align}
In global Kerr coordinates, we can express the squared absolute value of $\nablaslash\psi$ as
\begin{align}\label{def:nablaslashModuleSquare}
|\nablaslash\psi|^2=\sum_{i=1,2,3}\left|\nablaslash_i \psi\right|^2&=\frac{1}{r^2}\Big(|\partial_{\theta}\psi|^2
+\left|\tfrac{\cos\theta\partial_{\phi^*}\psi+
is\psi}{\sin\theta}\right|^2
+|\partial_{\phi^*}\psi|^2\Big)\notag\\
&=\frac{1}{r^2}\Big(|\partial_{\theta}\psi|^2
+\left|\tfrac{\partial_{\phi^*}\psi+
is\cos\theta\psi}{\sin\theta}\right|^2
+s^2|\psi|^2\Big).
\end{align}
In particular, note from \eqref{def:nablaslashModuleSquare} that $|\nablaslash\psi|^2$, and thus
$|\partial\psi|^2$, already have control over $r^{-2}|\psi|^2$ if the spin weight $s\neq 0$.
The same expressions \eqref{SpinWeightedAngularDerivaBasisOnSphere} and \eqref{def:nablaslashModuleSquare} hold true in B--L coordinates and ingoing Kerr coordinates from \eqref{eq:VFieldTandPhi}. For convenience of calculations, we may always refer to these expressions with $\partial_{\phi}$ in place of $\partial_{\phi^*}$.

For any smooth function $\psi$ with spin weight $s$, we define for any multi-index $\mathrm{k}=(\mathrm{k}_1,\mathrm{k}_2,\mathrm{k}_3,\mathrm{k}_4,\mathrm{k}_5)$ with $\mathrm{k}_i\geq 0$ $(i=1,2,3,4,5)$ that
\begin{equation}\label{def:multiindexderi}
\partial^{\mathrm{k}} \psi=\partial_{t^*}^{\mathrm{k}_1}\partial_r^{\mathrm{k}_2}\nablaslash_{1}^{\mathrm{k}_3}
\nablaslash_{2}^{\mathrm{k}_4}\nablaslash_{3}^{\mathrm{k}_5}\psi.
\end{equation}
Define the length of such a multi-index $\mathrm{k}$ by
\begin{align}
\vert \mathrm{k}\vert ={} \mathrm{k}_1+\mathrm{k}_2+\mathrm{k}_3
+\mathrm{k}_4+\mathrm{k}_5 .
\end{align}
Denote a few Morawetz densities by\footnote{We should distinguish among these different notations that a tilde means that there is no extra $r^{-\delta}$ power in the coefficients of $\partial_r$- and $\partial_{t^*}$-derivatives term and a subscript $\text{deg}$ means there is the trapping degeneracy in the trapped region, and vice versa.}
\begin{subequations}\label{def:mathbbMpsiWholeSet}
\begin{align}
\label{def:mathbbMpsi}
\mathbb{M}_{\text{deg}}(\psi)
={}&r^{-1-\delta}|\partial_{r}\psi|^2
+\chi_{\text{trap}}(r) (r^{-1-\delta}|\partial_{t^*} \psi|^2+r^{-1}|\nablaslash\psi|^2)+r^{-3}|\psi|^2,\\
\label{def:MpsiII}
\mathbb{M}(\psi)={}&r^{-1-\delta}(|\partial_{r}\psi|^2+|\partial_{t^*} \psi|^2)+r^{-1}|\nablaslash\psi|^2+r^{-3}|\psi|^2,\\
\label{def:widetildemathbbMpsi}
\widetilde{\mathbb{M}}_{\text{deg}}(\psi)
={}&r^{-1}|\partial_{r}\psi|^2
+\chi_{\text{trap}}(r)r^{-1}(|\partial_{t^*} \psi|^2+|\nablaslash\psi|^2)+r^{-3}|\psi|^2,\\
\label{def:Mpsi}
\widetilde{\mathbb{M}}(\psi)={}&
r^{-1}|\partial\psi|^2+r^{-3}|\psi|^2.
\end{align}
\end{subequations}
Here, $\chi_{\text{trap}}(r)=1-\eta_{[r^-_{\text{trap}}, r^+_{\text{trap}}]}(r)$, $\eta_{[r^-_{\text{trap}}, r^+_{\text{trap}}]}(r)$ is the indicator function in the radius region bounded by $r^{-}_{\text{trap}}=2.9M<3M<r^{+}_{\text{trap}}=3.1M$, and $\delta\in (0,1/2)$ is an arbitrary constant.

\begin{thm}\label{thm:EneAndMorEstiExtremeCompsNoLossDecayVersion2}
Consider the linearized gravity in the DOC of a slowly rotating Kerr spacetime $(\mathcal{M},g=g_{M,a})$. Given any smooth\footnote{In fact, the N--P components should be viewed as sections of a complex line bundle. Therefore, \textquotedblleft{smooth\textquotedblright}  means that these components and their derivatives to any order with respect to $(\partial_{t^*},\partial_r, \nablaslash_1, \nablaslash_2, \nablaslash_3)$ are continous.} regular extreme components $\widetilde{\Phi_0}$, $\widetilde{\Phi_4}$ as in Section \ref{sect:lingraandTME} which vanish near spatial infinity,
then for any $0<\delta<1/2$ and nonnegative integer $n$, there exist universal constants $\veps_0$, $R=R(M)$ and $C=C(M,\delta,\Sigma_0,n)=C(M,\delta,\Sigma_{\tau},n)$ such that for all $|a|/M\leq \veps_0$ and any $\tau\geq 0$,
we have the following energy and Morawetz estimate for the regular extreme components:
\begin{subequations}
\label{eq:MoraEstiFinal(2)KerrRegularpsiBothSpinComp}
\begin{align}
\label{eq:MoraPosiSpinMain}
&\sum_{\vert \mathrm{k}\vert\leq n}\int_{\mathcal{D}(0,\tau)} \left(\mathbb{M}_{\text{deg}}(\partial^{\mathrm{k}}\Phi^{(2)}_0)
+\widetilde{\mathbb{M}}(\partial^{\mathrm{k}}\Phi^{(1)}_0)
+\widetilde{\mathbb{M}}(\partial^{\mathrm{k}}\Phi^{(0)}_0)\right)\notag\\
&+\sum_{\vert \mathrm{k}\vert\leq n}\sum_{i=0}^2\left({E}_{\tau}(\partial^{\mathrm{k}}\Phi^{(i)}_0)
+{E}_{\mathcal{H}^+(0,\tau)}(\partial^{\mathrm{k}}\Phi^{(i)}_0)\right)
\leq {}
C\sum_{\vert \mathrm{k}\vert\leq n}\sum_{i=0}^2{E}_{0}(\partial^{\mathrm{k}}\Phi^{(i)}_0),\\
\label{eq:MoraNegaSpinMain}
&\sum_{\vert \mathrm{k}\vert\leq n}\int_{\mathcal{D}(0,\tau)} \left(\mathbb{M}_{\text{deg}}(\partial^{\mathrm{k}}\Phi^{(2)}_4)
+\mathbb{M}(\partial^{\mathrm{k}}\Phi^{(1)}_4)
+\mathbb{M}(\partial^{\mathrm{k}}\Phi^{(0)}_4)\right)\notag\\
&
+\sum_{\vert \mathrm{k}\vert\leq n}\sum_{i=0}^2\left({E}_{\tau}(\partial^{\mathrm{k}}\Phi^{(i)}_4)
+{E}_{\mathcal{H}^+(0,\tau)}(\partial^{\mathrm{k}}\Phi^{(i)}_4)\right)
\leq {}
C\sum_{\vert \mathrm{k}\vert\leq n}\sum_{i=0}^2{E}_{0}(\partial^{\mathrm{k}}\Phi^{(i)}_4).
\end{align}
\end{subequations}
Here, the set $(\Phi^{(0)}_j, \Phi^{(1)}_j, \Phi^{(2)}_j)$ for $j=0,4$ takes
\begin{subequations}
\begin{align}
\label{def:varphi012positive}
\Phi^{(0)}_{0}&=r^{4-\delta}\widetilde{\Phi_0},
& \Phi^{(1)}_{0}&=r^{4-\delta}Y\widetilde{\Phi_0},
& \Phi^{(2)}_{0}&=r^4YY\widetilde{\Phi_0}; \\
\label{def:varphi012negative}
\Phi^{(0)}_{4}&=\widetilde{\Phi_4},
& \Phi^{(1)}_{4}&=\tfrac{r\Delta}{\R}V(r\Phi^{(0)}_{4}),
& \Phi^{(2)}_{4}&=\tfrac{r\Delta}{\R}V(r\Phi^{(1)}_{4}).
\end{align}
\end{subequations}
\end{thm}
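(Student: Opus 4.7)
\medskip

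\noindent\textbf{Proof plan.}
The plan is to reduce the theorem to combined energy and Morawetz estimates for the hierarchy $(\phi^0_s,\phi^1_s,\phi^2_s)$ defined in \eqref{eq:DefOfphi012BothSpinS2}, since each $\Phi^{(i)}_j$ differs from the corresponding $\phi^i_{\pm 2}$ only by an $r$-weight (with a $\delta$-loss built in at the $i=0,1$ level for the $+2$ spin, reflecting the asymptotic behavior of the ingoing radiative part near infinity). Once the estimates for $(\phi^0_s,\phi^1_s,\phi^2_s)$ are in hand, a routine Hardy/weighted-Morawetz repackaging converts them into \eqref{eq:MoraPosiSpinMain}--\eqref{eq:MoraNegaSpinMain}. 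So the heart of the matter is the coupled system \eqref{eq:ReggeWheeler Phi^012KerrS2} (resp. \eqref{eq:ReggeWheeler Phi^012KerrNegaS2}), driven by two "building-block" inhomogeneous spin-weighted wave equations \eqref{eq:RewrittenFormofISWWEphi0OpeForm}--\eqref{eq:RewrittenFormofSWRWEOpeForm}.

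First I would establish, on a slowly rotating Kerr background, the two baseline estimates for the ISWWE: an energy estimate using $T$ (plus a small correction involving $\Phi$ to handle the ergoregion since $T$ fails to be timelike there) together with the red-shift multiplier of Proposition \ref{prop:RedShiftEstiInhomoSWRWE} near $\mathcal{H}^+$, and a Morawetz estimate built from a radial multiplier of the form $f(r)\partial_{r^*}+g(r)$ tuned so that the bulk density is positive away from the trapping set. For $\mathbf{L}^1_s$, the radial part reduces on Schwarzschild to the Regge--Wheeler operator, so a Chandrasekhar-type multiplier degenerating quadratically at $r=3M$ delivers $\widetilde{\mathbb M}_{\text{deg}}$; for $\mathbf{L}^0_s$ the different potential likewise yields a Morawetz bulk with trapping degeneracy at the corresponding critical radius. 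The Kerr errors, proportional to $|a|$, are absorbed by combining the multiplier current with $T$-energy via smallness of $\veps_0$. All this is done for a general source $F$ and produces inhomogeneous estimates where $\int |F|\cdot|\partial\psi|$ on the right can be handled by Cauchy--Schwarz against the bulk Morawetz density.

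Next I would run the hierarchy in the order $\phi^2\to\phi^1\to\phi^0$. Apply the $\mathbf{L}^1_s$ estimate to $\phi^2$ with source $F^2$ from \eqref{eq:ReggeWheeler Phi^2KerrS2} or \eqref{eq:ReggeWheeler Phi^2KerrNegaS2}: since $F^2$ is $O(|a|)$, the coupling can be absorbed into the top-level degenerate bulk of $\phi^1$ and $\phi^0$ using smallness of $\veps_0$. This yields a degenerate Morawetz and energy estimate for $\phi^2$ modulo an $\veps_0$-fraction of the $\phi^0,\phi^1$ bulks. Apply next the $\mathbf{L}^1_s$ estimate to $\phi^1$ with source $F^1$: the crucial structural point is that the coefficient $\tfrac{2(r^2-3Mr+2a^2)}{r^3}$ multiplying $\phi^2$ in $F^1$ vanishes at $r=3M$ up to $O(a^2)$, so in the coupling integral the trapping degeneracy of $\phi^2$'s Morawetz is exactly compensated, and the contribution reduces to the degenerate bulk of $\phi^2$ plus an $\veps_0$-error. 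The remaining $\phi^0$-terms in $F^1$ are controlled by $T$-energies (no derivatives of $\phi^0$ costs) via Hardy/Poincar\'e. Finally, apply the $\mathbf{L}^0_s$ estimate to $\phi^0$ with source $F^0$; here the $\phi^1$-term is absorbed by the already non-degenerate Morawetz one is building for $\phi^1$. Summing the three estimates with carefully chosen relative weights lets the couplings cancel, producing the full energy plus Morawetz estimate for $(\phi^0,\phi^1,\phi^2)$ at the $n=0$ level.

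For the higher-derivative version ($|k|\le n$), I would commute the system with the Killing fields $T=\partial_{t^*}$ and $\Phi=\partial_{\phi^*}$ (which preserve the structure and produce no commutator terms), handle radial and angular derivatives by combining the above estimate with the ISWWE itself to trade $\Box_g$ for lower-order terms, and use the red-shift commutation near $\mathcal{H}^+$ to gain transversal regularity there. The weight-translation from $\phi^i_s$ to $\Phi^{(i)}_j$ is then a pointwise/Hardy computation: the $r^{-\delta}$ loss in $\Phi^{(0)}_0,\Phi^{(1)}_0$ reflects that the ingoing radiative part only supports a non-degenerate large-$r$ Morawetz bulk $\widetilde{\mathbb M}$ rather than the full $\mathbb M$ (unlike the outgoing one where the natural $r$-weights permit $\mathbb M$). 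The main obstacle is the fine balancing at the photon sphere in step three: making sure the factor $r-3M$ (up to $O(a^2)$) in the coupling of $\phi^2$ into the $\phi^1$-equation is exploited quantitatively against the $\chi_{\text{trap}}$ degeneracy, while simultaneously keeping the $|a|$-smallness sufficient to absorb all rotation-induced cross terms; this is what forces the trapping degeneracy to appear only at the top level $\phi^{(2)}$ and allows $\widetilde{\mathbb M}$ / $\mathbb M$ (non-degenerate) at levels $\phi^{(0)}, \phi^{(1)}$.
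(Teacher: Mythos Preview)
Your overall strategy---prove energy and Morawetz estimates separately for each of $\phi^0_s,\phi^1_s,\phi^2_s$ from the coupled systems \eqref{eq:ReggeWheeler Phi^012KerrS2}--\eqref{eq:ReggeWheeler Phi^012KerrNegaS2} and then sum with carefully chosen weights---matches the paper, and you correctly identify the role of the $(r-3M)$ factor in the $\phi^2$-coupling of $F^1_s$. However, there is a genuine gap in how you obtain the \emph{non-degenerate} Morawetz for $\phi^0$ and $\phi^1$. The $(r-3M)$ vanishing is not the mechanism: the source $F^1_{\pm2}$ also contains the term $6M\phi^0$ (already on Schwarzschild), which carries no such factor, and the $T$-multiplier error it generates forces a non-degenerate $\eps_1\int\widetilde{\mathbb{M}}(r^{2-\delta}\phi^1)$ on the right (cf.\ \eqref{eq:estiphi1kerrS2}, \eqref{eq:Estierrortermphi1Schw}); similarly the $\phi^1$-source in the $\phi^0$-equation leaves $\eps_0^{-1}\int r^{-3}|\phi^1|^2$ (cf.\ \eqref{eq:estiphi0kerrS2}). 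If one tries to absorb these directly after summing, the parameter constraints become incompatible (one needs simultaneously $A_0\gg\eps_1^{-1}$ and $A_0\eps_0^{-1}\ll 1$ with $\eps_0\ll A_0^{-1}$). The paper closes this via two ingredients you do not mention, both exploiting the \emph{definitions} \eqref{eq:DefOfphi012BothSpinS2} as transport relations rather than the wave equations: (i) the estimates of Section~\ref{sect:anEstiforphi1STinte} (Propositions~\ref{prop:estiphi1posi}--\ref{prop:estiphi1nega}), which integrate $\phi^2=rYr\,\phi^1$ (resp.\ $-rVr$) along the principal null direction to bound $\int|\phi^1|^2/r^2$ by $\hat{\eps}_1\int|\phi^1|^2/r+\hat{\eps}_1^{-1}\int|\phi^2|^2/r^3$, supplying the extra small parameter $\hat{\eps}_1$ needed to close; and (ii) the elliptic equivalence \eqref{eq:equivalentoftwobulktermsinMorawestiSubeqs}, which uses $Y\phi^i\sim r^{-2}\phi^{i+1}$ to express the globally timelike derivative $H\phi^i$ in terms of $\phi^{i+1},\phi^i,\partial_r\phi^i$ and thereby upgrade the degenerate Morawetz of $\phi^0,\phi^1$ to non-degenerate once the degenerate $\phi^2$-bulk is added in.

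A secondary methodological point: on Kerr the paper does not work in physical space for the Morawetz estimate but carries out a full frequency decomposition (Fourier in $t$ and spin-weighted spheroidal harmonics; Sections~\ref{sect:SeparateAngAndRadialEqs}--\ref{sect:summingandIEDcurrentesti}), reducing to a radial Schr\"odinger equation and importing the frequency-localised multiplier construction from \cite{Ma17spin1Kerrlatest,dafermos2010decay}. Your physical-space proposal with $O(|a|)$-absorption is plausible for $|a|\ll M$ but is a different route from the one taken here.
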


\begin{remark}
The trapping degeneracy for the Morawetz densities $\mathbb{M}_{\text{deg}}(\partial^{\mathrm{k}}\Phi^{(2)}_0)$ and $\mathbb{M}_{\text{deg}}(\partial^{\mathrm{k}}\Phi^{(2)}_4)$ with $\vert {\mathrm{k}} \vert\leq n-1$ can actually be removed. We shall only focus on proving the $n=0$ case until Section \ref{sect:highorderS2}. As shown in Section \ref{sect:highorderS2}, the general $n\geq 0$ cases follow from an induction in $n$.
\end{remark}
\begin{remark}
As stated above, consider the $n=0$ case.
The energy and Morawetz estimate \eqref{eq:MoraEstiFinal(2)KerrRegularpsiBothSpinComp} is obtained by treating systems \eqref{eq:ReggeWheeler Phi^012KerrS2} and \eqref{eq:ReggeWheeler Phi^012KerrNegaS2} of $\phi^i_s$ and
is a single estimate at three levels of regularity for each extreme component, since $\phi^2_s$ involves at most second order derivatives of $\phi^0_s$. Therefore, in spite of the well-known trapping phenomenon, we prove Morawetz estimates for $\phi^0_s$ and $\phi^1_s$ which are in fact non-degenerate in the trapped region. However, the three levels of regularity must be treated simultaneously. On one hand, to estimate the inhomogeneous terms on the right-hand side of \eqref{eq:ReggeWheeler Phi^012KerrS2} and \eqref{eq:ReggeWheeler Phi^012KerrNegaS2}, it is necessary to eliminate the trapping degeneracy in the Morawetz estimates for $\phi^0_s$ and $\phi^1_s$ by considering one more order of derivative; on the other hand, it is possible to close the three estimates simultaneously, because the right-hand side of \eqref{eq:ReggeWheeler Phi^012KerrS2} and \eqref{eq:ReggeWheeler Phi^012KerrNegaS2} are at two level of regularity at most, involving no derivatives of $\phi^2_s$ and at most one of $\phi^0_s$ and $\phi^1_s$.

Note that systems \eqref{eq:ReggeWheeler Phi^012KerrS2} and \eqref{eq:ReggeWheeler Phi^012KerrNegaS2} are, however, not weakly coupled anymore as in the spin-$1$ case \cite{Ma17spin1Kerrlatest}, a fact caused by the presence of the $\phi^1_{s}$ term in \eqref{eq:ReggeWheeler Phi^0KerrS2} and \eqref{eq:ReggeWheeler Phi^0KerrNegaS2} or the $\phi^0_{s}$ term in \eqref{eq:ReggeWheeler Phi^1KerrS2} and \eqref{eq:ReggeWheeler Phi^1KerrNegaS2}. Take system \eqref{eq:ReggeWheeler Phi^012KerrS2} for $s=+2$ for example. Our approach here relies on an estimate bounding $\phi^1_{+2}$ from $\phi^2_{+2}$ by employing the differential relation \eqref{eq:DefOfphi012PosiSpinS2} between them, which enables us to treat the system in a rough (but accurate in the Schwarzschild case) sense that the error term in the Morawetz estimate for \eqref{eq:ReggeWheeler Phi^0KerrS2} arising from the inhomogeneous term  can be controlled by adding a large amount of Morawetz estimate of \eqref{eq:ReggeWheeler Phi^2KerrS2} to it, cf. Section \ref{sect:outlineproof}.
\end{remark}

\subsection{Previous Results}
We briefly review the existed results in the literature on scalar wave equation and Maxwell equations in the exterior of Schwarzschild and Kerr black holes. The uniform boundedness of scalar wave on a Schwarzschild background is first proved in \cite{kaywald87Schw}, and a robust, powerful tool--Morawetz estimate \cite{morawetz1968time}, or integrated local energy decay estimate--was used first in \cite{blue2003semilinear} and then in some followup works like \cite{bluesoffer09phase,dafrod09red}. For the scalar wave on the Kerr background, the decay results are shown in three different approaches \cite{tataru2011localkerr,larsblue15hidden,dafermos2016decay} where the first two are restricted to slowly rotating Kerr and the last one is for full subextremal Kerr.
Decay behaviours for Maxwell field are proved in \cite{blue08decayMaxSchw} on Schwarzschild, and on some spherically symmetric backgrounds or non-stationary asymptotically flat backgrounds in
\cite{metcalfe2014PWdecayMaxBH,sterbenz2015decayMaxSphSym}. These works start with estimating the middle component from a decoupled, separable Fackerell--Ipser equation \cite{fackerell:ipser:EM}. This approach is further generalized to the slowly rotating Kerr case in \cite{larsblue15Maxwellkerr} to show the uniform boundedness of a positive definite energy and the convergence property of the Maxwell field to a stationary Coulomb field. In contrast, one can treat first the extreme components satisfying the TME by applying some first-order differential operators used here to the extreme components, and the new quantitities satisfy an equation similar to the Fackerell--Ipser equation (or spin-weighted version of Fackerell--Ipser equation). This is carried out in \cite{Fede2016MaxwellSchw} for the Schwarzschild case and in our first part \cite{Ma17spin1Kerrlatest} of this series for the slowly rotating Kerr case.

There are typically two ways of linearizing the vacuum Einstein equations, one via metric perturbations and the other via tetrad perturbations. The linear stability of Schwarzschild metric under metric perturbations was resolved recently in  \cite{DRG16linearstabSchw,Hung2017linearstabSchw}. The former one starts from a Regge--Wheeler \cite{ReggeWheeler1957} type equation satisfied by some scalar constructed by applying a physical-space version of fixed-frequency \emph{Chandrasekhar
transformation} \cite{chandrasekhar1975linearstabSchw}  to some Riemann curvature components (closely related to extreme components in N--P formalism), and the later one carries out a detailed study on the Regge--Wheeler--Zerilli--Moncrief \cite{ReggeWheeler1957,Zerilli1970evenparity,moncrief74gravitational} system. The energy, Morawetz, and pointwise decay estimates for this system are obtained in \cite{Jinhua17LinGraSchw} as well.

On a non-static Kerr background, as mentioned already, if one linearizes the vacuum Einstein equations via tetrad perturbations, the extreme components of the Weyl tensor in Newman--Penrose formalism satisfy a decoupled, separable wave-like equation--the TME \eqref{eq:TME} and govern the dynamics of linearized gravity. These two extreme components are closely related to each other by differential relations: after decomposing spin $\pm2$ components into modes, differential relations between the radial parts of the modes with opposite extreme spin weights, as well as between the angular parts, are derived in \cite{starobinsky1973amplification,TeuPress1974III} and are known as \emph{Teukolsky-Starobinsky Identities}. See the version of these identities in the physical space in \cite{aksteiner2019new}. In \cite{whiting1989mode}, it is shown that the TME admits no mode with frequency having positive imaginary part, or in another way, no exponentially growing mode solution exists, under the assumption of  no incoming radiation condition. This mode stability result is recently generalized in \cite{2015AnHP...16..289S,andersson2017mode} to the case of real frequencies. Note that the mode stability result \cite{2015AnHP...16..289S} for scalar field is indispensable in the work \cite{dafermos2016decay} treating the scalar wave on the full subextremal Kerr backgrounds, and we expect our mode stability result \cite{andersson2017mode} for general spin fields will play an essential role in generalizing the results from the slowly rotating Kerr case considered in this work to the full subextremal Kerr case. Linear stability of slowly rotating Kerr spacetimes is proved in \cite{andersson2019stability,hafner2019linear}, and it is shown in \cite{andersson2019stability} that the linear stability of a subextremal Kerr spacetime can be proved under an assumption of a basic energy and Morawetz estimate in the same form of the main result of this work in the full subextremal Kerr backgrounds. During the submission of this work, the authors in \cite{DHR2019TME} obtain a similar result as that in this work. We notice also the works \cite{finster2016linear} which
discusses the stability problem for each azimuthal mode solution to TME and \cite{klainerman2017global} which proves a first nonlinear stability result for the Schwarzschild metric (though under axially symmetric polarized perturbations).

\subsection{Outline of the Proof}\label{sect:outlineproof}
It is convenient for the latter discussions to introduce the variables which are non-degenerate at $\mathcal{H}^+$
\begin{align}\label{def:widetildephi01KerrNegaS2}
\widetilde{\phi^0_{-2}}&=\Delta^{-2}r^4\phi^0_{-2},&
\widetilde{\phi^1_{-2}}&=\Delta^{-1}r^2\phi^1_{-2},
\end{align}
and we may suppress the subindex and simply write as $\widetilde{\phi^0}$ and $\widetilde{\phi^1}$. Let $\mu_0$, $\mu_1$, $\mu_2$, $\eps_0$, $\eps_1$, $\hat{\eps}_1$,  $A_0$, and $A_2$ be small positive constants to be fixed.
Define two quantities for spin $\pm2$ components respectively that
\begin{subequations}\label{eq:MasterInitialEnergyandError}
\begin{align}\label{eq:MasterInitialEnergyandErrorPosi}
\Xi_{+2}(0,\tau)={}&
{E}_{0}(r^{4-\delta}\phi^0_{+2})
+{E}_{0}(r^{2-\delta}\phi^1_{+2})+
{E}_{0}(\phi^2_{+2})\notag\\
&+{\veps_0^{1/2}}\sum_{\varphi\in
\{r^{4-\delta}\phi^0_{+2},r^{2-\delta}\phi^1_{+2},\phi^2_{+2}\}}
\left(
E_{\tau}(\varphi)
+E_{\mathcal{H}^+(0,\tau)}(\varphi)\right)\notag\\
&
+\veps_0^{1/2}\int_{\mathcal{D}(0,\tau)}
\left(\widetilde{\mathbb{M}}(r^{4-\delta}\phi^0_{+2})
+\widetilde{\mathbb{M}}(r^{2-\delta}\phi^1_{+2})
+\mathbb{M}_{\text{deg}}(\phi^2_{+2})\right),\\
\label{eq:MasterInitialEnergyandErrorNega}
\Xi_{-2}(0,\tau)={}&
{E}_{0}(\widetilde{\phi^0})
+{E}_{0}(\widetilde{\phi^1})+{E}_{0}(\phi^2_{-2})
+\int_{\Sigma_0}r\left(|\nablaslash \widetilde{\phi^0}|^2+|\nablaslash \widetilde{\phi^1}|^2\right)
\notag\\
&+\veps_0^{1/2}
\sum_{\varphi\in\{\widetilde{\phi^0},
\widetilde{\phi^1},\phi_{-2}^2\}}\left(E_{\tau}(\widetilde{\varphi})+
{E}_{\mathcal{H}^+(0,\tau)}(\varphi)\right)
\notag\\
&
+\veps_0^{1/2}\int_{\mathcal{D}(0,\tau)}
\left({\mathbb{M}}(\widetilde{\phi^0})
+{\mathbb{M}}(\widetilde{\phi^1})
+\mathbb{M}_{\text{deg}}(\phi^2_{-2})\right).
\end{align}
\end{subequations}
We say $F_1\lesssim_{\veps_0} F_2$ for two functions $F_1$ and $F_2$ in the region $\mathcal{D}(0,\tau)$ if there exists a universal constant $C$ and a constant $C_1=C_1(\mu_0,\mu_1,\mu_2,\eps_0,\eps_1,\hat{\eps}_1, A_0,A_2)$ \footnote{The dependence of $\hat{\eps}_1$  in $C_1$ is not needed for spin $-2$ component.} such that
\begin{subequations}\label{def:AlesssimaB}
\begin{align}\label{def:AlesssimaBposi}
F_1 \leq {}& CF_2+C_1\Xi_{+2}(0,\tau)
\end{align}
or
\begin{align}\label{def:AlesssimaBnega}
F_1 \leq{}& CF_2+C_1\Xi_{-2}(0,\tau),
\end{align}
\end{subequations}
depending on which spin component we are considering.
We now give the outline of the proof of estimates \eqref{eq:MoraEstiFinal(2)KerrRegularpsiBothSpinComp} for spin $+2$ and $-2$ components separately.

\subsubsection{Spin $+2$ Component}
We will first obtain in Section \ref{sect:pfMainthmSchwS2} and Section \ref{sect:finishpfS2Kerr} the following energy and Morawetz estimates for $\phi^0$, $\phi^1$ and $\phi^2$ defined from the spin $+2$ component:
\begin{subequations}\label{eq:estiphi02hatphi1kerrS2}
\begin{align}
\label{eq:estiphi0kerrS2}
\hspace{6ex}&\hspace{-6ex}{E}_{\tau}(r^{4-\delta}\phi^0)
+{E}_{\mathcal{H}^+(0,\tau)}(r^{4-\delta}\phi^0)
+\int_{\mathcal{D}(0,\tau)} \widetilde{\mathbb{M}}_{\text{deg}}(r^{4-\delta}\phi^0)\notag\\
\lesssim_{\veps_0}{}&
 \mu_0^{-1}\int_{\mathcal{D}(0,\tau)} \left(\eps_0 \widetilde{\mathbb{M}}(r^{4-\delta}\phi^0)
+\frac{\hat{\eps}_1}{\eps_0} \widetilde{\mathbb{M}}(r\phi^1)+\frac{1}{\eps_0\hat{\eps}_1} \mathbb{M}_{\text{deg}}(\phi^2)\right)\notag\\
&+\mu_0\int_{\mathcal{D}(0,\tau)}\left(
\eps_1 \widetilde{\mathbb{M}}(r^{2-\delta}\phi^1)+
\eps_1^{-1}
\widetilde{\mathbb{M}}_{\text{deg}}(r^{4-\delta}\phi^0)
+\eps_1^{-1}\mathbb{M}_{\text{deg}}(\phi^2)\right)\notag\\
&+\mu_0\int_{\mathcal{D}(0,\tau)}\left(
\widetilde{\mathbb{M}}_{\text{deg}}(r^{4-\delta}\phi^0)
+\widetilde{\mathbb{M}}_{\text{deg}}(r^{2-\delta}\phi^1)
+\mathbb{M}_{\text{deg}}(\phi^2)\right),\\
\label{eq:estiphi1kerrS2}
\hspace{6ex}&\hspace{-6ex}{E}_{\tau}(r^{2-\delta}\phi^1)
+{E}_{\mathcal{H}^+(0,\tau)}(r^{2-\delta}\phi^1)
+\int_{\mathcal{D}(0,\tau)} \widetilde{\mathbb{M}}_{\text{deg}}(r^{2-\delta}\phi^1)\notag\\
\lesssim_{\veps_0} {}&\mu_1^{-1}\int_{\mathcal{D}(0,\tau)} \Big(\eps_1 \widetilde{\mathbb{M}}(r^{2-\delta}\phi^1)+
\eps_1^{-1}
\widetilde{\mathbb{M}}_{\text{deg}}(r^{4-\delta}\phi^0)
+\eps_1^{-1}\mathbb{M}_{\text{deg}}(\phi^2)\Big)\notag\\
&+\mu_1\int_{\mathcal{D}(0,\tau)}\left(\eps_0 \widetilde{\mathbb{M}}(r^{4-\delta}\phi^0)
+\frac{\hat{\eps}_1}{\eps_0} \widetilde{\mathbb{M}}(r\phi^1)+\frac{1}{\eps_0\hat{\eps}_1} \mathbb{M}_{\text{deg}}(\phi^2)\right)\notag\\
&+\mu_1\int_{\mathcal{D}(0,\tau)}\left(
\widetilde{\mathbb{M}}_{\text{deg}}(r^{4-\delta}\phi^0)
+\widetilde{\mathbb{M}}_{\text{deg}}(r^{2-\delta}\phi^1)
+\mathbb{M}_{\text{deg}}(\phi^2)\right),\\
\label{eq:estiphi2kerrS2}
\hspace{6ex}&\hspace{-6ex}{E}_{\tau}(\phi^2)+{E}_{\mathcal{H}^+(0,\tau)}(\phi^2)
+\int_{\mathcal{D}(0,\tau)} \mathbb{M}_{\text{deg}}(\phi^2)\notag\\
\lesssim_{\veps_0} {}&
\mu_2\int_{\mathcal{D}(0,\tau)}\left(
\widetilde{\mathbb{M}}_{\text{deg}}(r^{4-\delta}\phi^0)
+\widetilde{\mathbb{M}}_{\text{deg}}(r^{2-\delta}\phi^1)
+\mathbb{M}_{\text{deg}}(\phi^2)\right)\notag\\
&+\mu_2\int_{\mathcal{D}(0,\tau)} \left(\eps_0 \widetilde{\mathbb{M}}(r^{4-\delta}\phi^0)
+\frac{\hat{\eps}_1}{\eps_0} \widetilde{\mathbb{M}}(r\phi^1)+\frac{1}{\eps_0\hat{\eps}_1} \mathbb{M}_{\text{deg}}(\phi^2)\right)\notag\\
&+\mu_2\int_{\mathcal{D}(0,\tau)} \Big(\eps_1 \widetilde{\mathbb{M}}(r^{2-\delta}\phi^1)+
\eps_1^{-1}
\widetilde{\mathbb{M}}_{\text{deg}}(r^{4-\delta}\phi^0)
+\eps_1^{-1}\mathbb{M}_{\text{deg}}(\phi^2)\Big).
\end{align}
\end{subequations}
Here, all the parameters appearing above are small constants to be fixed, and we have assumed that $\veps_0$ is much smaller compared to these parameters.
We add an $A_0$ multiple of estimate \eqref{eq:estiphi0kerrS2} and an $A_2$ multiple of \eqref{eq:estiphi2kerrS2} to estimate \eqref{eq:estiphi1kerrS2}, and find the left-hand side (LHS) of the obtained estimate is larger than
\begin{align}
\label{eq:finalpart:+2:LHS}
&c\inttau \left(A_0 \widetilde{\mathbb{M}}_{\text{deg}}(r^{4-\delta}\phi^0)
+\widetilde{\mathbb{M}}_{\text{deg}}(r^{2-\delta}\phi^1)
+A_2\mathbb{M}_{\text{deg}}(\phi^2)\right)\notag\\
&+c\inttau \left(
\widetilde{\mathbb{M}}(r^{4-\delta}\phi^0)
+\widetilde{\mathbb{M}}(r^{2-\delta}\phi^1)\right).
\end{align}
Here, the bound over the second line comes from the following relations
\begin{subequations}\label{eq:equivalentoftwobulktermsinMorawestiSubeqs}
\begin{align}
\label{eq:equivalentoftwobulktermsinMorawesti}
\hspace{4ex}&\hspace{-4ex}
{E}_{\tau}(r^{4-\delta}\phi^0)
+{E}_{0}(r^{4-\delta}\phi^0)
+\int_{\mathcal{D}(0,\tau)}
\left(\widetilde{\mathbb{M}}_{\text{deg}}(r^{4-\delta}\phi^0)
+\widetilde{\mathbb{M}}_{\text{deg}}(r^{2-\delta}\phi^1)\right)\notag\\
\sim {}&
{E}_{\tau}(r^{4-\delta}\phi^0)
+{E}_{0}(r^{4-\delta}\phi^0)
+\int_{\mathcal{D}(0,\tau)}\left(\widetilde{\mathbb{M}}
(r^{4-\delta}\phi^0)+
\widetilde{\mathbb{M}}_{\text{deg}}
(r^{2-\delta}\phi^1)\right),\\
\label{eq:equivalentoftwobulktermsinMorawesti20}
\hspace{4ex}&\hspace{-4ex}
{E}_{\tau}(r^{4-\delta}\phi^0)
+{E}_{\tau}(r^{2-\delta}\phi^1)
+{E}_{0}(r^{4-\delta}\phi^0)
+{E}_{0}(r^{2-\delta}\phi^1)
\notag\\
+\hspace{4ex}&\hspace{-4ex}\int_{\mathcal{D}(0,\tau)}\left(
\widetilde{\mathbb{M}}_{\text{deg}}(r^{4-\delta}\phi^0)
+\widetilde{\mathbb{M}}_{\text{deg}}(r^{2-\delta}\phi^1)
+\mathbb{M}_{\text{deg}}(\phi^2)\right)\notag\\
\sim &
{E}_{\tau}(r^{4-\delta}\phi^0)
+{E}_{\tau}(r^{2-\delta}\phi^1)
+{E}_{0}(r^{4-\delta}\phi^0)
+{E}_{0}(r^{2-\delta}\phi^1)
\notag\\
&+\int_{\mathcal{D}(0,\tau)}\left(
\widetilde{\mathbb{M}}(r^{4-\delta}\phi^0)
+\widetilde{\mathbb{M}}(r^{2-\delta}\phi^1)
+\mathbb{M}_{\text{deg}}(\phi^2)\right).
\end{align}
\end{subequations}
In the trapped region, $\widetilde{\mathbb{M}}_{\text{deg}}(r^{4-\delta}\phi^0)+
\widetilde{\mathbb{M}}_{\text{deg}}(r^{2-\delta}\phi^1)$ bounds over $|Y\phi^0|^2$, $|\partial_{r^*}\phi^0|^2$  and $|\phi^0|^2$,
and hence over $|\phi^0|^2$ and $|H\phi^0|^2$, $H=\partial_t+a/(r^2+a^2)\partial_{\phi}$ being a globally timelike vector field in the interior of $\mathcal{D}$ with
$-g(H,H)=\Delta\Sigma/(r^2+a^2)^2$.
Away from the horizon, the wave operator can be rewritten as a sum of $H^2$, a second order elliptic operator and up to first order operators. Let $\tilde{\chi}(r)$ be a smooth cutoff function which equals to $1$ in $[2.8M, 3.2M]$ and vanishes for $[r_+,2.7M]\cup[3.3M,+\infty)$. By multiplying the wave equation of $r^{4-\delta}\phi^0$  by $\tilde{\chi}(r)r^{4-\delta}\phi^0$, integrating over $\mathcal{D}(0,\tau)$ and applying integration by parts, one obtains an integral of $|\partial_{r^*}\phi^0|^2+|\nablaslash \phi^0|^2$ over $\mathcal{D}(0,\tau)\cap[2.8M,3.2M]$ is bounded by a constant times a sum of energy fluxes on $\Sigma_{\tau}$ and $\Sigma_0$, an integral of $|H\phi^0|^2$ over $\mathcal{D}(0,\tau)\cap[2.7M,3.3M]$ and an integral of  $\Re(\tilde{\chi}(r) \overline{\phi^0}L(\phi^0, H\phi^0, \partial_{r^*}\phi^0, \nablaslash \phi^0, \phi^1))$ over $\mathcal{D}(0,\tau)\cap[2.7M,3.3M]$, where $L$ is a linear operator in its all arguments and has coefficients depending only on $r, M$ and $a$.
Relation \eqref{eq:equivalentoftwobulktermsinMorawesti} then follows from applying the Cauchy--Schwarz inequality to the last integral.  Relation \eqref{eq:equivalentoftwobulktermsinMorawesti20} can be similarly justified.
We now
fix the parameters one by one to satisfy
\begin{align}
&\mu_1\ll1, \quad \eps_1\ll \mu_1, \quad A_0\gg \mu_1^{-1}\eps_1^{-1},\notag\\
&\mu_0\ll\min\{\eps_1,A_0^{-1}\},\quad
\eps_0\ll A_0^{-1}\mu_0, \quad \hat{\eps}_1\ll \min\{A_0^{-1}\mu_0\eps_0,\eps_0\mu_1^{-1}\},\notag\\
&A_2\gg (A_0\mu_0+\mu_1^{-1})\eps_1^{-1}
+(A_0\mu_0^{-1}+\mu_1)\eps_0^{-1}\hat{\eps}_1^{-1},\notag\\
&\mu_2\ll A_2^{-1}\min\{\eps_1A_0, {\eps_0}{\hat{\eps}_1^{-1}},\eps_1^{-1},\eps_0^{-1}, \eps_1 A_2, \eps_0\hat{\eps}_1A_2, 1\}.
\end{align}
Then for sufficiently small $\veps_0$, all the spacetime integrals on the right-hand side (RHS) of the gained estimate can be absorbed by \eqref{eq:finalpart:+2:LHS}, and the terms with $\sqrt{\veps_0}$-dependent coefficients in $\lesssim_{\veps_0}$ are absorbed by the LHS of the gained estimate. Therefore, we arrive at:
\begin{align}\label{eq:estiMoraphi012Kerrposicomp}
\hspace{4ex}&\hspace{-4ex}\sum_{\varphi\in \{
r^{4-\delta}\phi^0_{+2}, r^{2-\delta}\phi^1_{+2}, \phi^2_{+2}\}} \left(E_{\tau}(\varphi)
+E_{\mathcal{H}^+(0,\tau)}(\varphi)
\right)\notag\\
\hspace{4ex}&\hspace{-4ex}+\int_{\mathcal{D}(0,\tau)}
\left(\widetilde{\mathbb{M}}(r^{4-\delta}\phi^0_{+2})
+\widetilde{\mathbb{M}}(r^{2-\delta}\phi^1_{+2})
+\mathbb{M}_{\text{deg}}(\phi^2_{+2})\right)\notag\\
\lesssim{}&
\sum_{\varphi\in \{
r^{4-\delta}\phi^0_{+2}, r^{2-\delta}\phi^1_{+2}, \phi^2_{+2}\}} E_{0}(\varphi).
 \end{align}
Estimate \eqref{eq:MoraPosiSpinMain} with $n=0$ then follows from \eqref{eq:estiMoraphi012Kerrposicomp}.

\subsubsection{Spin $-2$ Component}

We will prove in Section \ref{sect:pfMainthmSchwS2} and Section \ref{sect:finishpfS2Kerr}
the following energy and Morawetz estimates for $\widetilde{\phi^0}$, $\widetilde{\phi^1}$ and $\phi^2$ constructed from the spin $-2$ component:
\begin{subequations}\label{eq:estiphi02hatphi1kerrS2Nega}
\begin{align}\label{eq:estiphi0kerrS2Nega}
\hspace{5ex}&\hspace{-5ex}{E}_{\tau}(\widetilde{\phi^0})+{E}_{\mathcal{H}^+(0,\tau)}(\widetilde{\phi^0})
+\int_{\mathcal{D}(0,\tau)} \mathbb{M}_{\text{deg}}(\widetilde{\phi^0})\notag\\
\lesssim_{\veps_0} {}&\int_{\mathcal{D}(0,\tau)} \left[\mu_0^{-1}\left(\eps_0{\mathbb{M}}(\widetilde{\phi^0})
+\eps_0^{-1}  \mathbb{M}_{\text{deg}}(\phi^2)\right)
+\mu_0\left(\eps_1 {\mathbb{M}}(\widetilde{\phi^1})+
\eps_1^{-1}
\mathbb{M}_{\text{deg}}(\phi^2)\right)\right]\notag\\
&+\mu_0\inttau \left(
{\mathbb{M}}_{\text{deg}}(\widetilde{\phi^0})
+{\mathbb{M}}_{\text{deg}}(\widetilde{\phi^1})
+\mathbb{M}_{\text{deg}}(\phi^2)\right),\\
\label{eq:estiphi1kerrS2Nega}
\hspace{5ex}&\hspace{-5ex}{E}_{\tau}(\widetilde{\phi^1})
+{E}_{\mathcal{H}^+(0,\tau)}(\widetilde{\phi^1})
+\int_{\mathcal{D}(0,\tau)} \mathbb{M}_{\text{deg}}(\widetilde{\phi^1})\notag\\
\lesssim_{\veps_0} {} &
\int_{\mathcal{D}(0,\tau)} \left[\mu_1^{-1}\left(\eps_1 {\mathbb{M}}(\widetilde{\phi^1})+
\eps_1^{-1}
\mathbb{M}_{\text{deg}}(\phi^2)\right)
+\mu_1\left(\eps_0{\mathbb{M}}(\widetilde{\phi^0})
+\eps_0^{-1}  \mathbb{M}_{\text{deg}}(\phi^2)\right)\right]\notag\\
&+\mu_1\inttau \left(
{\mathbb{M}}_{\text{deg}}(\widetilde{\phi^0})
+{\mathbb{M}}_{\text{deg}}(\widetilde{\phi^1})
+\mathbb{M}_{\text{deg}}(\phi^2)\right)
,\\
\label{eq:estiphi2kerrS2Nega}
\hspace{5ex}&\hspace{-5ex}{E}_{\tau}(\phi^2)+{E}_{\mathcal{H}^+(0,\tau)}(\phi^2)
+\int_{\mathcal{D}(0,\tau)} \mathbb{M}_{\text{deg}}(\phi^2)\notag\\
\lesssim_{\veps_0} {}&
\int_{\mathcal{D}(0,\tau)} \mu_2\left(\eps_0{\mathbb{M}}(\widetilde{\phi^0})
+\eps_0^{-1}  \mathbb{M}_{\text{deg}}(\phi^2)
+\eps_1 {\mathbb{M}}(\widetilde{\phi^1})+
\eps_1^{-1}
\mathbb{M}_{\text{deg}}(\phi^2)\right)\notag\\
&+\mu_2\inttau \left(
{\mathbb{M}}_{\text{deg}}(\widetilde{\phi^0})
+{\mathbb{M}}_{\text{deg}}(\widetilde{\phi^1})
+\mathbb{M}_{\text{deg}}(\phi^2)\right).
\end{align}
\end{subequations}
Similarly as the spin $+2$ case, the parameters in the above estimates are small constants to be fixed.
We add an $A_0$ multiple of \eqref{eq:estiphi0kerrS2Nega} and an $A_2$ multiple of \eqref{eq:estiphi2kerrS2Nega} to the estimate \eqref{eq:estiphi1kerrS2Nega} and find the LHS of the gained estimate bounds over
\begin{align}
\label{eq:LHSsummaryspin-2Kerr}
&\sum_{i=0,1}\left({E}_{\tau}(\widetilde{\phi^i})
+{E}_{\mathcal{H}^+(0,\tau)}(\widetilde{\phi^i})\right)
+\left({E}_{\tau}(\phi^2)
+{E}_{\mathcal{H}^+(0,\tau)}(\phi^2)\right)\notag\\
&+c\int_{\mathcal{D}(0,\tau)} \left(
A_0{\mathbb{M}}_{\text{deg}}(\widetilde{\phi^0})
+{\mathbb{M}}_{\text{deg}}(\widetilde{\phi^1})
+A_2\mathbb{M}_{\text{deg}}(\phi^2)
+{\mathbb{M}}(\widetilde{\phi^0})
+{\mathbb{M}}(\widetilde{\phi^1})
\right).
\end{align}
The reason that the terms ${\mathbb{M}}(\widetilde{\phi^0})
+{\mathbb{M}}(\widetilde{\phi^1})$ where trapping degeneracy is removed are present here is the same as showing the relations \eqref{eq:equivalentoftwobulktermsinMorawestiSubeqs} for the spin $+2$ component.
Fix the parameters in the following order
\begin{align}
&\mu_1\ll 1, \quad \eps_1\ll\mu_1,\quad A_0\gg \mu_1, \quad \mu_0\ll A_0^{-1},\notag\\
&\eps_0\ll A_0^{-1}\mu_0, \quad
A_2\gg A_0(\mu_0\eps_1^{-1}+\mu_0^{-1}\eps_0^{-1})
+\mu_1\eps_0^{-1}+\mu_1^{-1}\eps_1^{-1},\notag\\
&\mu_2\ll \min\{A_2^{-1}\eps_0^{-1},A_2^{-1}\eps_1^{-1}, \eps_0,\eps_1\}
\end{align}
such that the RHS of the obtained estimate can be absorbed by \eqref{eq:LHSsummaryspin-2Kerr}.
Hence it holds true for sufficiently small $\veps_0$ that
\begin{align}\label{eq:estiMoraphi012Kerrnegacomp}
\hspace{6ex}&\hspace{-6ex}\sum_{i=0,1}\left({E}_{\tau}(\widetilde{\phi^i})
+{E}_{\mathcal{H}^+(0,\tau)}(\widetilde{\phi^i})\right)
+\left({E}_{\tau}(\phi^2)
+{E}_{\mathcal{H}^+(0,\tau)}(\phi^2)\right)\notag\\
\hspace{6ex}&\hspace{-6ex}+\int_{\mathcal{D}(0,\tau)} \left({\mathbb{M}}(\widetilde{\phi^0})
+{\mathbb{M}}(\widetilde{\phi^1})+\mathbb{M}_{\text{deg}}(\phi^2)
\right)\notag\\
\lesssim_{\veps_0} {}&
{E}_{0}(\widetilde{\phi^0})+{E}_{0}(\widetilde{\phi^1})
+E_0\left(\phi^2\right)
\notag\\
\lesssim {}&
{E}_{0}(\widetilde{\phi^0})+{E}_{0}(\widetilde{\phi^1})
+E_0\left(\phi^2\right),
 \end{align}
where in the last step, we use Proposition \ref{prop:InitialEnergyControlsMore} to estimate the integral term $\int_{\Sigma_0}r(|\nablaslash \widetilde{\phi^0}|^2+|\nablaslash \widetilde{\phi^1}|^2)$ implicit in $\lesssim_{\veps_0}$ and take $\veps_0$ sufficiently small such that the $\sqrt{\veps_0}$-dependent terms are absorbed by the LHS of \eqref{eq:estiMoraphi012Kerrnegacomp}.
 From the estimate \eqref{eq:estiMoraphi012Kerrnegacomp},
the desired estimate \eqref{eq:MoraNegaSpinMain} is proved for the other regular N-P component $\widetilde{\Phi_4}$ in the case of $n=0$.

\subsection*{Overview of the Paper}
In Section \ref{sect:PrelimandNotations}, we give some preliminaries and introduce some further notations. Red-shift estimates near horizon and Morawetz estimates in large radius region for different quantities are proved in Section \ref{sect:redshiftMoralargerall}. Afterwards, we derive in Section \ref{sect:anEstiforphi1STinte} some \textit{a priori} estimates on any fixed subextremal Kerr background by considering the definitions \eqref{eq:DefOfphi012BothSpinS2} in the context of transport equations. The basic estimates \eqref{eq:estiphi02hatphi1kerrS2} and \eqref{eq:estiphi02hatphi1kerrS2Nega} are proved in Section \ref{sect:pfMainthmSchwS2} on Schwarzschild and in Section \ref{sect:finishpfS2Kerr} on a slowly rotating Kerr background.
These complete the proof of the estimates \eqref{eq:MoraEstiFinal(2)KerrRegularpsiBothSpinComp} based on the discussions in Section \ref{sect:outlineproof} for $n=0$ and Section \ref{sect:highorderS2} for $n\geq 1$.

\section{Preliminaries and Further Notation}\label{sect:PrelimandNotations}

\subsection{Well-posedness Theorem}\label{sect:LWPandGlobalExistenceLinearWaveSystem}
We refer to \cite[Section 2.1]{Ma17spin1Kerrlatest} for the well-posedness (WP) theorem for a general system of linear wave equations, which is cited from \cite[Chapter 3.2]{bar2007wave}. Similarly as the reduction in \cite[Section 2.1]{Ma17spin1Kerrlatest}, we can assume that the regular extreme N--P components are smooth and of compact support on the initial hypersurface $\Sigma_0$.

\subsection{Generic constants and general rules}
\label{sect:genericconsts}
Constants $C$ and $c$, depending only on $\veps_0$, $M$, $\delta$ and $\Sigma_0$, are always understood as large constants and small constants respectively, and may change value from term to term throughout this paper based on the algebraic rules: $C+C=C$, $CC=C$, $Cc\leq C$, etc. When there is no confusion, the dependence on $M$, $\veps_0$, $\delta$, and $\Sigma_0$ may always be suppressed. Once the constants $\veps_0$ and $0<\delta<1/2$ in Theorem \ref{thm:EneAndMorEstiExtremeCompsNoLossDecayVersion2} are chosen, these constants can be made to be only dependent on $M$.

For any two functions $F_1$ and $F_2$, $F_1\lesssim F_2$ means that there exists a constant $C$ such that $F_1\leq CF_2$ holds everywhere. $F_1\sim F_2$  indicates that $F_1\lesssim F_2$ and $F_2\lesssim F_1$,
and we say that $F_1$ \emph{is equivalent to} $F_2$.

The standard Laplacian on unit $2$-sphere is denoted as $\triangle_{\mathbb{S}^2}$, and the volume form $\di \sigma_{\mathbb{S}^2}$ on unit $2$-sphere is $\sin\theta \di \theta \di \phi^*$ or $\sin\theta \di \theta \di \phi$ depending on which coordinate system is used.

Some cutoff functions will be used in this paper. Denote $\chi_R(r)$ to be a smooth cutoff function utilized in Section \ref{sect:MorawetzLarger} which is $1$ for $r\geq R$ and vanishes identically for $r\leq R-M$, and $\chi_0(r)$ a smooth cutoff function which equals $1$ for $r\leq r_0$ and is identically zero for $r\geq r_1$; see Section \ref{sect:redshiftMoralargerall} for the choices of $r_0$ and $r_1$. The function $\chi$ is a smooth cutoff both to the future time and to the past time, which will be applied to the solution in the proof of Theorem \ref{prop:MoraEstiAlmostScalarWave}.

An overline or a bar will always denote the complex conjugate, $\Re(\cdot)$ denotes the real part, and \textquotedblleft{LHS\textquotedblright}  and \textquotedblleft{RHS\textquotedblright}  are short for \textquotedblleft{left-hand side(s)\textquotedblright} and \textquotedblleft{right-hand side(s)\textquotedblright}, respectively.

Throughout this paper, whenever we talk about \textquotedblleft{choosing some multiplier for some equation\textquotedblright}, it should always be understood as multiplying the equation by this chosen multiplier, taking the real part and finally integrating in the spacetime region $\mathcal{D}(0,\tau)$ (or $\mathcal{D}(\tau_1,\tau_2)$) in global Kerr coordinate system with respect to the measure $\Sigma \sin \theta \di t^*\di r\di \theta \di \phi^*$. If necessary, an integration by parts may be performed as well.

\section{Estimates near Horizon and in Large Radius Region}\label{sect:redshiftMoralargerall}

Morawetz estimates in large radius region and red-shift estimates near horizon for different quantities are proved in this section. We emphasize that all the $R_0$ in the estimates in this whole section can be \textit{a priori} different, so do all the $r_0$ and the $r_1$, but we will take the minimal $r_0$, the maximal $r_1$ and the maximal $R_0$ among them such that the estimates hold true uniformly with the constants $C$ independent on these parameters, and still denote them as $r_0$, $r_1$ and $R_0$.

For a complex scalar $\psi$ with spin weight $s$, we shall define  a (rescaled) spin-weighted wave operator
\begin{align}
\label{eq:SigmaTildeBoxpsi}
\Sigma\widetilde{\Box}_g\psi
\triangleq {}&\left\{\partial_r(\Delta\partial_r)
-\tfrac{\left((r^2+a^2)\partial_t+a\partial_{\phi}\right)^2}{\Delta}\right.\notag\\
&\left.\ \ +
\tfrac{1}{\sin \theta}\tfrac{\partial}{\partial \theta}\left(\sin \theta \tfrac{\partial }{\partial \theta}\right)+\Big(\tfrac{\partial_{\phi}+is\cos \theta}{\sin \theta} +a\sin \theta\partial_{t}\Big)^2\right\}\psi.
\end{align}
The operator $\Sigma\widetilde{\Box}_g$ is the same as the rescaled scalar wave operator $\Sigma \Box_g$ except that $(\tfrac{\partial_{\phi}+is\cos \theta}{\sin \theta} +a\sin \theta\partial_{t})^2$ is now in place of the operator $(\tfrac{\partial_{\phi}}{\sin \theta}+a\sin \theta\partial_{t})^2$ in the expansion of $\Sigma \Box_g$.

\subsection{Morawetz Estimate in Large Radius Region}\label{sect:MorawetzLarger}
Recall (see e.g. \cite{dafermos2011bdedness,luk2012vector}) that for any $0<\delta<1/2$, there exist constants $R_0\gg 4M$ and $C=C(\delta)$ such that for all $R\geq R_0$, one can choose a multiplier
\begin{equation}
X_w\bar{\psi}=-\tfrac{1}{\Sigma}\left(f(r)\partial_{r^*}
+\tfrac{1}{4}w(r)\right)\bar{\psi}
\end{equation}
with
\begin{subequations}
\begin{align}
f&=\chi_R(r)\cdot(1-r^{-\delta}),\\
w&=2\partial_{r^*}f+4\tfrac{1-2M/r}{r}f-2\delta\tfrac{1-2M/r}{r^{1+\delta}}f
\end{align}
\end{subequations}
for the inhomogeneous rescaled scalar wave equation
\begin{equation}\label{eq:InhomoRescaledScalarWaveEq}
\Sigma \Box_g\psi=G,
\end{equation}
and achieve the following Morawetz estimate in large $r$ region for any $\tau_2>\tau_1$:
\begin{align}\label{eq:MoraLargerScalarWaveKerr}
\hspace{4ex}&\hspace{-4ex}c\int_{\mathcal{D}(\tau_1,\tau_2)\cap\{r\geq R\}}
\left\{\frac{|\partial_{r^*}\psi|^2}{r^{1+\delta}}+\frac{|\partial_t \psi|^2}{r^{1+\delta}}+\frac{|\check{\nablaslash}\psi|^2}{r}+
\frac{|\psi|^2}{r^{3+\delta}}\right\}\notag\\
\leq{} &C\bigg(\check{E}_{\tau_1}^{R-M}(\psi)+\check{E}_{\tau_2}^{R-M}(\psi)
+\int_{\mathcal{D}(\tau_1,\tau_2)\cap\{R-M\leq r\leq R\}}\left(|\check{\partial} \psi|^2+|\psi|^2\right)\bigg)\notag\\
&+\int_{\mathcal{D}(\tau_1,\tau_2)}\Re\left(G\cdot X_w\bar{\psi}\right).
\end{align}
Here, $\chi_R(r)$ is defined as in Section \ref{sect:genericconsts},
$\check{\nablaslash}$ are the standard rotational angular derivatives on sphere $\mathbb{S}^2(t,r)$ as in \eqref{SpinWeightedAngularDerivaBasisOnSphere}, and
\begin{equation}
\check{E}_{\tau}^{R-M}(\psi)\sim \int_{\Sigma_{\tau}\cap\{r\geq R-M\}}|\check{\partial} \psi|^2= \int_{\Sigma_{\tau}\cap\{r\geq R-M\}}( |\partial_{t^*}\psi|^2
+|\partial_r\psi|^2
+|\check{\nablaslash}\psi|^2).
\end{equation}
Using this and defining
\begin{align}
{E}_{\tau}^{R-M}(\psi)\sim \int_{\Sigma_{\tau}\cap\{r\geq R-M\}}|{\partial} \psi|^2,
\end{align}
we show the following Morawetz estimate in large $r$ region.
\begin{prop}\label{prop:ImprovedMoraEstiLargerSWRWE1}
In a subextremal Kerr spacetime, for any fixed $0<\delta<\half$, and for any solution $\psi$ to \eqref{eq:RewrittenFormofISWWEphi0OpeForm} or  \eqref{eq:RewrittenFormofSWRWEOpeForm}, there exists a constant $R_0(M)$ and a universal constant $C$ such that for all $R\geq R_0$, the following estimate holds for any $\tau_2>\tau_1$:
\begin{align}\label{eq:ImprovedMoraEstiLargerSWRWE}
\int_{\mathcal{D}(\tau_1,\tau_2)\cap\{r\geq R\}}
\mathbb{M}(\psi)
\leq {}&
C{E}^{R-M}_{\tau_1}(\psi)
+C{E}_{\tau_2}^{R-M}(\psi)\notag\\
&
+C\int_{\mathcal{D}(\tau_1,\tau_2)\cap\{R-M\leq r\leq R\}}|\partial \psi|^2\notag\\
&+C\bigg|\int_{\mathcal{D}(\tau_1,\tau_2)\cap\{r\geq R-M\}}\Re\left(FX_w\bar{\psi}\right)\bigg|.
\end{align}
\end{prop}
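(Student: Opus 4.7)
The plan is to apply the vector-field/multiplier method with the stated multiplier $X_w\bar\psi = -\Sigma^{-1}\bigl(f(r)\partial_{r^*} + \tfrac14 w(r)\bigr)\bar\psi$, which is exactly the Morawetz construction used for the Maxwell analogue in \cite{Ma17spin1Kerrlatest} adapted to the two potentials of $\mathbf{L}^0_s$ and $\mathbf{L}^1_s$. I would first rewrite both equations in the uniform divergence form \eqref{eq:RewrittenFormofSWRWE}/\eqref{eq:RewrittenFormofISWWEphi0}, so that the principal part is $\Sigma\widetilde\Box_g\psi$ and the right-hand side is $\bigl(4ias\cos\theta\,\partial_t + s^2 V(r)\bigr)\psi + F$ with $V(r)=O(1)$ at infinity. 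Multiplying by $X_w\bar\psi$, integrating by parts over $\mathcal{D}(\tau_1,\tau_2)$ against the volume form $\Sigma\,dt^*\,dr\sin\theta\,d\theta\,d\phi^*$, and taking real parts, the principal contribution splits into a bulk term and a boundary contribution on $\Sigma_{\tau_1}\cup\Sigma_{\tau_2}$; the horizon and spatial-infinity boundary terms vanish because $f\equiv 0$ for $r\le R-1$ and $\psi$ is compactly supported (cf.\ Section \ref{sect:LWPandGlobalExistenceLinearWaveSystem}).

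Second, I would show positivity of the bulk term. The choice $f=\chi_R(r)(1-r^{-\delta})$ makes $f'\sim \delta r^{-1-\delta}$ for $r\ge R$, producing a positive multiple of $|\partial_{r^*}\psi|^2$. The $w$-correction $w = 2\partial_{r^*}f + 4(1-2M/r)r^{-1}f - 2\delta(1-2M/r)r^{-1-\delta}f$ is designed exactly so that the resulting coefficients of $|H\psi|^2$ (with $H=\partial_t+a(r^2+a^2)^{-1}\partial_\phi$) and $|\nablaslash\psi|^2$ are also positive multiples of $r^{-1-\delta}$ and $r^{-1}$ respectively, and so that the lower-order $\Box_g f$-type artefact from integrating $w\partial_{r^*}$ by parts cancels. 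Combining, the bulk term dominates $\int \mathbb{M}(\psi)$ on $\{r\ge R\}$, provided $R\ge R_0(M)$ is large enough so that the $a$-dependent cross terms (of size $|a|/r^{2+\delta}$) and the potential contributions are subordinate to the leading positive terms.

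Third, I would absorb the new spin-weighted features. The modification $(\partial_\phi+is\cos\theta)/\sin\theta$ in $\Sigma\widetilde\Box_g$ is already incorporated into $|\nablaslash\psi|^2$ via \eqref{def:nablaslashModuleSquare}. The extra zero-order source $(4ias\cos\theta\,\partial_t+s^2V)\psi$ on the right-hand side of \eqref{eq:RewrittenFormofSWRWE}/\eqref{eq:RewrittenFormofISWWEphi0} is paired with $X_w\bar\psi$ and, after one further integration by parts that converts $\partial_t\psi\cdot\partial_{r^*}\bar\psi$ into a commutator plus boundary terms, is controlled by a small multiple of the bulk $\int\mathbb{M}(\psi)$ (using that $\mathbb{M}(\psi)$ already controls $s^2 r^{-3}|\psi|^2$ through \eqref{def:nablaslashModuleSquare}) plus standard boundary terms bounded by $E_{\tau_1}(\psi)+E_{\tau_2}(\psi)$. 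The cutoff error from $\chi_R'$ is supported in $r\in[R-1,R]$ and appears as the spacetime integral $\int_{\mathcal{D}(\tau_1,\tau_2)\cap\{R-1\le r\le R\}}|\partial\psi|^2$ on the right-hand side of \eqref{eq:ImprovedMoraEstiLargerSWRWE}, while $F$ is left as the source term $\bigl|\int\Re(F\,X_w\bar\psi)\bigr|$.

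The main obstacle is verifying the sign of the bulk quadratic form, particularly checking that the slowly-rotating corrections and the two different potentials $s^2(\Delta+a^2)/r^2$ and $s^2(r^2+2Mr-2a^2)/(2r^2)$ (both $O(1)$ at infinity) are strictly subleading to the positive $\delta r^{-1-\delta}$ coefficients generated by $f$ once $R$ is chosen large enough; this is the routine but delicate computation already carried out for the analogous multiplier in \cite[Sect.~3.1]{Ma17spin1Kerrlatest}, and no new mechanism is required here.
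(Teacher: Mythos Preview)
Your proposal is correct and follows essentially the same approach as the paper: both rewrite the equations in the form \eqref{eq:RewrittenFormofSWRWE}/\eqref{eq:RewrittenFormofISWWEphi0}, apply the identical multiplier $X_w\bar\psi=-\Sigma^{-1}(f\partial_{r^*}+\tfrac14 w)\bar\psi$ with the stated $f,w$, and defer the bulk-positivity computation to \cite[Sect.~3.1]{Ma17spin1Kerrlatest}. Your write-up is in fact more explicit than the paper's, which simply states that one ``easily obtains'' the result by following the Maxwell argument; your identification of the three ingredients (boundary terms from the cutoff supported in $[R-1,R]$, absorption of the spin-weighted lower-order terms $4ias\cos\theta\,\partial_t\psi$ and $s^2V\psi$ via one integration by parts and largeness of $R$, and the untouched source term $F$) matches exactly what that deferred computation entails.
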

\begin{remark}
Recall here the definition of the Morawetz density $\mathbb{M}(\psi)$ in \eqref{def:mathbbMpsiWholeSet}. This estimate will be applied to $\psi=\phi_s^i$ defined in \eqref{eq:DefOfphi012PosiSpinS2} and \eqref{eq:DefOfphi012NegaSpinS2} with the corresponding inhomogeneous term $F=F_s^i$ in \eqref{eq:ReggeWheeler Phi^012KerrS2} and \eqref{eq:ReggeWheeler Phi^012KerrNegaS2}.
\end{remark}

\begin{proof}
Using definition \eqref{eq:SigmaTildeBoxpsi}, the form \eqref{eq:RewrittenFormofSWRWEOpeForm}
reads
\begin{align}\label{eq:RewrittenFormofSWRWE}
\Sigma\widetilde{\Box}_g\psi
={}&
\left(4ias\cos\theta\partial_t+s^2\tfrac{\Delta+a^2}{r^2}\right)\psi+F,
\end{align}
and equation
\eqref{eq:RewrittenFormofISWWEphi0OpeForm} can be rewritten as
\begin{align}\label{eq:RewrittenFormofISWWEphi0}
\Sigma\widetilde{\Box}_g\psi
={}&
\left(4ias\cos\theta\partial_t+s^2\tfrac{r^2+2Mr-2a^2}{2r^2}\right)\psi+F.
\end{align}
Since the difference between the operator $(\tfrac{\partial_{\phi}+is\cos\theta}{\sin \theta}+a\sin \theta\partial_{t})^2$ in $\Sigma\widetilde{\Box}_g$ and $(\tfrac{\partial_{\phi}}{\sin \theta}+a\sin \theta\partial_{t})^2$ in the expansion of $\Sigma \Box_g$ has terms with coefficients independent of $r$, we achieve the same type of Morawetz estimate in large $r$ region by utilizing the same multiplier $X_w\bar{\psi}$, with $|\nablaslash\psi|^2-s^2|\psi|^2/r^2$ and $|\partial \psi|^2-s^2|\psi|^2/r^2$ in place of $|\check{\nablaslash}\psi|^2$ and $|\check{\partial} \psi|^2$, $E^{R-M}(\psi)$ replacing $\check{E}^{R-M}(\psi)$, and $G$ replaced by the the RHS of \eqref{eq:RewrittenFormofSWRWE} or \eqref{eq:RewrittenFormofISWWEphi0} in \eqref{eq:MoraLargerScalarWaveKerr}.

Consider equation \eqref{eq:RewrittenFormofSWRWE}. The bulk term coming from the source term
\begin{equation}\label{eq:sourcetermMoraLarger}
G=\left(4ias\cos\theta\partial_t+s^2\tfrac{\Delta+a^2}{r^2}\right)\psi+F
\end{equation}
is
\begin{align}
\label{eq:Errortermlarger:case1}
\int_{\mathcal{D}(\tau_1,\tau_2)}-\frac{1}{\Sigma}\Re\left(\left(\left(
4ias\cos\theta\partial_t+s^2\frac{\Delta+a^2}{r^2}\right)\psi+F\right)
\left(f\partial_{r^*}+\frac{1}{4}w\right)\bar{\psi}\right).
\end{align}
The term $\int_{\mathcal{D}(\tau_1,\tau_2)}
-\Re\left(\tfrac{\Delta+a^2}{r^2}\psi w\bar{\psi}\right)$ is bounded by $\int_{\mathcal{D}(\tau_1,\tau_2)\cap[R,\infty)}-\tfrac{c|\psi|^2}{r^3}$, and an integration by parts applied to the term $\int_{\mathcal{D}(\tau_1,\tau_2)}
\tfrac{-4}{\Sigma}\Re\left(\tfrac{\Delta+a^2}{r^2}\psi f\partial_{r^*}\bar{\psi}\right)$ eventually implies that the term \eqref{eq:Errortermlarger:case1}
is bounded for $R$ large enough by
\begin{align}
&-\int_{\mathcal{D}(\tau_1,\tau_2)\cap[R,\infty)}\frac{c|\psi|^2}{r^3}
+C\int_{\mathcal{D}(\tau_1,\tau_2)\cap[R-M,R]} |\partial\psi|^2\notag\\
&
+\int_{\mathcal{D}(\tau_1,\tau_2)\cap[R-M,\infty)}
\left\{\frac{Ca^2}{r^2}
|\partial\psi|^2
+\Re\left(FX_w\bar{\psi}\right)\right\}.
\end{align}
We can move this first term to the LHS and find from \eqref{def:nablaslashModuleSquare} that the resulting LHS bounds over
\begin{align}
\hspace{4ex}&\hspace{-4ex}\int_{\mathcal{D}(\tau_1,\tau_2)\cap\{r\geq R\}}
c\left\{\frac{|\partial_{r^*}\psi|^2}{r^{1+\delta}}+\frac{|\partial_t \psi|^2}{r^{1+\delta}}+\frac{|{\nablaslash}\psi|^2-s^2|\psi|^2/r^2}{r}+
\frac{|\psi|^2}{r^{3+\delta}}+\frac{|\psi|^2}{r^3}\right\}\notag\\
\geq{}& c\int_{\mathcal{D}(\tau_1,\tau_2)\cap\{r\geq R\}}
\left\{\frac{|\partial_{r^*}\psi|^2}{r^{1+\delta}}+\frac{|\partial_t \psi|^2}{r^{1+\delta}}+\frac{|{\nablaslash}\psi|^2}{r}+
\frac{|\psi|^2}{r^{3}}\right\}.
\end{align}
The spacetime integral of $\tfrac{Ca^2}{r^2}|\partial\psi|^2$ over $\mathcal{D}(\tau_1,\tau_2)\cap\{r\geq R\}$ is absorbed by this LHS for sufficiently large $R$. Therefore, the estimate \eqref{eq:ImprovedMoraEstiLargerSWRWE} holds true for \eqref{eq:RewrittenFormofSWRWE}, and hence \eqref{eq:RewrittenFormofSWRWEOpeForm}.

The proof of the Morawetz estimate \eqref{eq:ImprovedMoraEstiLargerSWRWE} in large radius region for \eqref{eq:RewrittenFormofISWWEphi0OpeForm} follows in the same way, and we omit it.
\qed
\end{proof}

In fact, an improved Morawetz estimate in the large radius region for spin $+2$ component can be proved. This can be seen by expanding the $\phi^1_{+2}$ term on the RHS of \eqref{eq:ReggeWheeler Phi^0KerrS2} into first order derivatives of $\phi^0_{+2}$ and observing that the sign of the coefficient of $\partial_t \phi^0_{+2}$ provides a damping effect for sufficiently large $r$.
\begin{prop}\label{prop:ImprovedMoraEstiLargerSWRWE}
Let $0<\delta <1/2$ be given. In a subextremal Kerr spacetime,  there exists a constant $R_0(M)$ and a universal constant $C$ such that for all $R\geq R_0$ and any $\tau_2>\tau_1$, the following estimates hold true for $\phi_{+2}^0$ and $\phi^1_{+2}$ respectively:
\begin{subequations}\label{eq:MoraEstirphi0larger}
\begin{align}\label{eq:MoraInftyr4minusdeltaphi0}
\hspace{4ex}&\hspace{-4ex}\int_{\Sigma_{\tau_2}\cap [R,+\infty)}\left|\partial\left(r^{4-\delta}\phi^{0}_{+2}\right)\right|^2
+\int_{\mathcal{D}(\tau_1,\tau_2)\cap [R,\infty)}r^{-1}\left|\partial \left(r^{4-\delta}\phi^{0}_{+2}\right)\right|^2\notag\\
\lesssim &
\int_{\Sigma_{\tau_2}\cap [R-M,R)}\left|\partial \left(r^{4-\delta}\phi^{0}_{+2}\right)\right|^2+\int_{\Sigma_{\tau_1}\cap [R-M,+\infty)}\left|\partial \left(r^{4-\delta}\phi^{0}_{+2}\right)\right|^2\notag\\
&+\int_{\mathcal{D}(\tau_1,\tau_2)\cap [R-M,R]}\left|\partial \left(r^{4-\delta}\phi^{0}_{+2}\right)\right|^2,\\
\label{eq:MoraInftyEstir2minusdeltaphi1}
\hspace{4ex}&\hspace{-4ex}\int_{\Sigma_{\tau_2}\cap [R,+\infty)}\left|\partial\left(r^{2-\delta}\phi^{1}_{+2}\right)\right|^2
+\int_{\mathcal{D}(\tau_1,\tau_2)\cap [R,\infty)}r^{-1}\left|\partial \left(r^{2-\delta}\phi^{1}_{+2}\right)\right|^2\notag\\
\lesssim &
\int_{\Sigma_{\tau_2}\cap [R-M,R)}\left|\partial \left(r^{2-\delta}\phi^{1}_{+2}\right)\right|^2
+\int_{\Sigma_{\tau_1}\cap [R-M,+\infty)}\left|\partial \left(r^{2-\delta}\phi^{1}_{+2}\right)\right|^2\notag\\
&
+\int_{\mathcal{D}(\tau_1,\tau_2)\cap [R-M,\infty)}\tfrac{\left|\partial \left(r^{4-\delta}\phi^{0}_{+2}\right)\right|^2}{r^2}
+\int_{\mathcal{D}(\tau_1,\tau_2)\cap [R-M,R]}\left|\partial \left(r^{2-\delta}\phi^{1}_{+2}\right)\right|^2.
\end{align}
\end{subequations}
\end{prop}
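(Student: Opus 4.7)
My approach would be to work directly with the rescaled quantities $u := r^{4-\delta}\phi^0_{+2}$ and $v := r^{2-\delta}\phi^1_{+2}$, rather than trying to derive \eqref{eq:MoraEstirphi0larger} as a corollary of Proposition~\ref{prop:ImprovedMoraEstiLargerSWRWE1} applied to $\phi^0_{+2}$ and $\phi^1_{+2}$ themselves (which only produces the weaker $r^{-1-\delta}$-weighted bulk density). Conjugating $\mathbf{L}^0_{+2}$ by $r^{4-\delta}$ yields an ISWWE for $u$ whose principal part is unchanged but which contains an extra first-order term (from commuting $r^{4-\delta}$ with $\partial_r(\Delta \partial_r)$) and an extra potential of order $r^{-2}$ at infinity. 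The sign of this conjugation-generated potential at large $r$, together with the $-\delta$ shift in the exponent, is precisely what will allow me to drop the $r^{-\delta}$ improvement in $f$ and still close a Morawetz estimate with a nondegenerate $r^{-1}|\partial u|^2$ bulk density.

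Concretely, in place of the weight $f = \chi_R(r)(1 - r^{-\delta})$ used in Proposition~\ref{prop:ImprovedMoraEstiLargerSWRWE1}, I would take $f = \chi_R(r)\cdot c$ bounded away from zero for $r \geq R$, with a correspondingly adjusted $w(r)$, and apply the multiplier $X_w\bar{u} = -\Sigma^{-1}(f \partial_{r^*} + \tfrac{1}{4}w)\bar{u}$ to the conjugated equation. The bulk integral that was previously generated by the term $-\delta\,r^{-1-\delta}\partial_{r^*}f$ in the weight is now produced by the conjugation potential from $r^{4-\delta}$, giving $r^{-1}|\partial_{r^*} u|^2$ in the bulk; the angular and time-derivative pieces of $|\partial u|^2$ are handled as in the proof of Proposition~\ref{prop:ImprovedMoraEstiLargerSWRWE1} (via the $|\nablaslash u|^2$ contribution and absorption of the $4ias\cos\theta\,\partial_t u$ piece in \eqref{eq:RewrittenFormofSWRWE} by smallness of $a$). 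The cutoff $\chi_R$ produces boundary contributions on $[R-1,R)$ of exactly the form stated.

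The crux is the inhomogeneous term. To ensure \eqref{eq:MoraInftyr4minusdeltaphi0} has no bulk $\phi^1_{+2}$ integral on the right, I would use the differential identity $\phi^1_{+2} = (rYr)(\phi^0_{+2})$ to rewrite the critical piece $\tfrac{4(r^2-3Mr+2a^2)}{r^3}\phi^1_{+2}$ of $F^0_{+2}$ as a first-order expression in $\phi^0_{+2}$, hence (modulo algebraic rescaling) in $u$. After pairing with $X_w \bar u$ and integrating by parts in $r$ and $t^*$, these contributions split into bulk pieces absorbable into the LHS $r^{-1}|\partial u|^2$ term and into boundary fluxes at $\Sigma_{\tau_1}$, $\Sigma_{\tau_2}$ and at $r = R-1$; the azimuthal remainder $r^{-1}(a^2\partial_t + a\partial_\phi)\phi^0_{+2}$ is of size $|a|/M$ and absorbed by smallness of $a$. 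This gives \eqref{eq:MoraInftyr4minusdeltaphi0}.

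The second estimate \eqref{eq:MoraInftyEstir2minusdeltaphi1} is obtained by the same scheme applied to $v$. This time the inhomogeneous term $F^1_{+2}$ contains both $\phi^2_{+2}$ and $\phi^0_{+2}$: the $\phi^2_{+2}$ contribution is reduced to a first-order expression in $v$ via $\phi^2_{+2} = (rYr)(\phi^1_{+2})$ and absorbed exactly as above, while the $\phi^0_{+2}$-contributions (the algebraic term $\tfrac{6Mr-12a^2}{r}\phi^0_{+2}$ and the first-order term $6(a^2\partial_t + a\partial_\phi)\phi^0_{+2}$) produce, after Cauchy--Schwarz against the multiplier, exactly the right-hand error $\int r^{-2}|\partial(r^{4-\delta}\phi^0_{+2})|^2$ displayed in \eqref{eq:MoraInftyEstir2minusdeltaphi1}. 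The main obstacle I foresee is book-keeping: after the conjugations by $r^{4-\delta}$ and $r^{2-\delta}$, the substitutions $\phi^1_{+2} = rYr\phi^0_{+2}$ and $\phi^2_{+2} = rYr\phi^1_{+2}$, and the integration by parts, one must verify that every newly generated coefficient has the correct sign (or is of size $|a|/M$) at large $r$, so that the $r^{-1}$-weighted bulk density on the LHS is never contaminated by a negative or $\delta$-degenerate contribution.
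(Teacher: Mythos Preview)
Your scheme has a structural gap in how the bulk term $r^{-1}|\partial_{t^*}u|^2$ is produced. With $f=\chi_R(r)\cdot c$ constant on $[R,\infty)$, the multiplier $-\Sigma^{-1}(f\partial_{r^*}+\tfrac14 w)\bar u$ applied to the principal part $\Sigma\Box_g u$ yields \emph{no} $|\partial_{t^*}u|^2$ in the bulk: in the standard Morawetz identity that contribution carries the coefficient $\tfrac12\partial_{r^*}f$, which now vanishes. You propose that the first-order term generated by the conjugation (equivalently, by substituting $\phi^1_{+2}=rYr\,\phi^0_{+2}$) compensates, but pairing that term with $f\partial_{r^*}\bar u$ produces, at leading order in $r$, a density of the form $\tfrac{C f}{r}\big((4-2\delta)|\partial_r u|^2 + 4\,\Re(\partial_t u\,\partial_r\bar u)\big)$. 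The first piece is good; the cross term, however, is not sign-definite, is not a total $(t^*,r)$-derivative modulo lower order (integration by parts in $t^*$ and $r$ only returns the same integral plus boundary), and by Cauchy--Schwarz can be absorbed only at the cost of a bulk $r^{-1}|\partial_t u|^2$ --- precisely what is missing. The zeroth-order $w$-part of the multiplier makes matters worse, contributing an additional $-r^{-1}|\partial_t u|^2$. So the obstacle is not book-keeping but a genuine lack of coercivity in the combination you propose.

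The paper closes this by a different mechanism. After the specific conjugation $\phi^{0,4-\delta}_{+2}=\big(\tfrac{r^2+a^2}{\sqrt{\Delta}}\big)^{4-\delta}(r^2+a^2)^{-2}\psi_{[+2]}$ (not simply $r^{4-\delta}\phi^0_{+2}$), the entire first-order part of the resulting equation takes the exact form $c(r)\,X_0\phi^{0,4-\delta}_{+2}$ for a single \emph{timelike} first-order operator $X_0$ --- a positive combination of $V$ and $\tfrac{r^2+a^2}{\Delta}\partial_t+\tfrac{a}{\Delta}\partial_\phi$ --- with $c(r)\sim r^{-1}>0$. The multiplier is then chosen to be $-\Sigma^{-1}\chi_R X_0\overline{\phi^{0,4-\delta}_{+2}}$: being timelike, $X_0$ produces an energy-type identity from the principal part, while its pairing with the matched first-order term is the manifestly nonnegative perfect square $c(r)\chi_R\bigl|X_0\phi^{0,4-\delta}_{+2}\bigr|^2$, so no cross-term issue arises. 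A second, zeroth-order multiplier $-\chi_R r^{-3}(1-2M/r)\overline{\phi^{0,4-\delta}_{+2}}$ then supplies $r^{-1}|\partial_r\phi^{0,4-\delta}_{+2}|^2$ and $r^{-1}|\nablaslash\phi^{0,4-\delta}_{+2}|^2$, with its bad $-r^{-1}|\partial_t\phi^{0,4-\delta}_{+2}|^2$ absorbed by a large multiple of the $\bigl|X_0\phi^{0,4-\delta}_{+2}\bigr|^2$ bulk already obtained. Matching the multiplier to the first-order operator is the key structural idea your proposal is missing.
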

\begin{proof}
We define the variables
\begin{subequations}\label{eq:DefOfGoodPhi01PosiSpin}
\begin{align}
\phi^{0,4-\delta}_{+2}&=\Big(\tfrac{\R}{\sqrt{\Delta}}\Big)^{4-\delta}
\cdot\left(\psi_{[+2]}/(\R)^2\right),\\
\phi^{1,2-\delta}_{+2}&=\Big(\tfrac{\R}{\sqrt{\Delta}}\Big)^{2-\delta} \cdot\left(\sqrt{\R}Y\left(\psi_{[+2]}/(\R)^{3/2}\right)\right)
\end{align}
\end{subequations}
and derive the governing equations of them as follows
\begin{subequations}
\begin{align}
\label{eq:ImprEqForr4minusdeltaphi0}
\hspace{4ex}&\hspace{-4ex}\left(\Sigma \Box_g+\tfrac{4i\cos\theta}{\sin^2 \theta}\partial_{\phi}-4\cot^2 \theta+(2+\delta^2-5\delta)\right)\phi^{0,4-\delta}_{+2}\notag\\
={}&\tfrac{(r^3-3Mr^2+a^2r+a^2M)}{\R}
\Big(\tfrac{(4-2\delta)V(\sqrt{\R}\phi^{0,4-\delta}_{+2})}{\sqrt{\R}}
+2\delta\left(\tfrac{\R}{\Delta}\partial_t +\tfrac{a}{\Delta}\partial_{\phi}\right)\phi^{0,4-\delta}_{+2}\Big)\notag\\
&
+\left(4ia\cos \theta \partial_t
-\tfrac{8ar}{\R}\partial_{\phi}\right)\phi^{0,4-\delta}_{+2}
+\tfrac{P_5(r)}{\Delta(\R)^2}\phi^{0,4-\delta}_{+2},\\
\label{eq:ImprEqForr2minusdeltaphi1}
\hspace{4ex}&\hspace{-4ex}\left(\Sigma \Box_g+\tfrac{4i\cos\theta}{\sin^2 \theta}\partial_{\phi}-4\cot^2 \theta+(2+\delta^2-5\delta)\right)\phi^{1,2-\delta}_{+2}\notag\\
={}&\tfrac{(r^3-3Mr^2+a^2r+a^2M)}{\R}
\Big(\tfrac{(2-2\delta)V(\sqrt{\R}\phi^{1,2-\delta}_{+2})}{\sqrt{\R}}
+2\delta\left(\tfrac{\R}{\Delta}\partial_t +\tfrac{a}{\Delta}\partial_{\phi}\right)\phi^{1,2-\delta}_{+2}\Big)
\notag\\
&
+\tfrac{\underline{P}_5(r)}{\Delta(\R)^2}\phi^{1,2-\delta}_{+2}+\left(4ia\cos \theta \partial_t-\tfrac{4ar}{\R}\partial_{\phi}\right)\phi^{1,2-\delta}_{+2}
\notag\\
&+\tfrac{6a\Delta(a^2-r^2)}{(\R)^3}\partial_{\phi}\phi^{0,4-\delta}_{+2}
+\tfrac{6r\Delta(Mr^3-a^2r^2-3Ma^2r-a^4)}{(\R)^4}\phi^{0,4-\delta}_{+2}.
\end{align}
\end{subequations}
Here, $P_5(r)$ and $\underline{P}_5(r)$ are both polynomials in $r$ with powers no larger than $5$, and the coefficients of these two polynomials depend only on $a, M$, and $\delta$ and can be calculated explicitly. We shall make use of the following expansion for any smooth complex scalar $\psi$ of spin weight $s$
\begin{align}\label{eq:BoxInTermsOfYV}
\hspace{4ex}&\hspace{-4ex}\left(\Sigma \Box_g+\tfrac{2is\cos\theta}{\sin^2 \theta}\partial_{\phi}-s^2\cot^2 \theta+|s|+\delta^2 - 5\delta\right)\psi\notag\\
={}&\left(\tfrac{1}{\sin{\theta}} \partial_{\theta}(\sin \theta \partial_{\theta})+\tfrac{1}{\sin^2\theta}\partial_{\phi\phi}^2+\tfrac{2is\cos\theta}{\sin^2 \theta}\partial_{\phi}-s^2\cot^2 \theta+|s|+\delta^2 - 5\delta\right)\psi\notag\\
&-\sqrt{\R}Y\left(\tfrac{\Delta}{\R}V\left(\sqrt{\R}\psi\right)\right)
+\tfrac{2ar}{\R}\partial_{\phi}\psi\notag\\
&+\left(2a\partial_{t\phi}^2+a^2 \sin^2 \theta\partial_{tt}^2\right)\psi-\tfrac{2Mr^3+a^2r^2-4a^2Mr+a^4}{(\R)^2}\psi.
\end{align}
From Remark \ref{rem:EigenvalueSpinWeightedAngular}, the eigenvalues of the operator in the first line on the RHS of \eqref{eq:BoxInTermsOfYV} are not greater than $\delta^2 - 5\delta$ which is negative, hence if we choose the multiplier
\begin{align}
\hspace{4ex}&\hspace{-4ex}-\Sigma^{-1}\chi_R X_0\overline{\phi^{0,4-\delta}_{+2}}\notag\\
\triangleq{}&-\Sigma^{-1}\chi_R \tfrac{\Delta}{\R}\Big(\tfrac{(4-2\delta)V(\sqrt{\R}\overline{\phi^{0,4-\delta}_{+2}})}{\sqrt{\R}}
+2\delta\left(\tfrac{\R}{\Delta}\partial_t +\tfrac{a}{\Delta}\partial_{\phi}\right)\overline{\phi^{0,4-\delta}_{+2}}\Big)\notag\\
={}&-\Sigma^{-1}\chi_R \tfrac{\Delta}{\R}
\Big(\tfrac{(4-\delta)
V(\sqrt{\R}\overline{\phi^{0,4-\delta}_{+2}})
+\delta
Y(\sqrt{\R}\overline{\phi^{0,4-\delta}_{+2}})}{\sqrt{\R}}
\Big)
\end{align}
for \eqref{eq:ImprEqForr4minusdeltaphi0}, it then follows from integration by parts and Cauchy--Schwarz inequality that
\begin{align}\label{eq:MultplyEqForr4minusdeltaphi0ByTimelikeVF}
&\int_{\Sigma_{\tau_2}\cap [R,+\infty)}|\partial\phi^{0,4-\delta}_{+2}|^2
+\int_{\mathcal{D}(\tau_1,\tau_2)\cap [R,\infty)}r^{-1}\left(| X_0 \phi^{0,4-\delta}_{+2}|^2+|\nablaslash \phi^{0,4-\delta}_{+2}|^2\right)\notag\\
\lesssim &
\bigg(\int_{\Sigma_{\tau_2}\cap [R-M,R)}+\int_{\Sigma_{\tau_1}\cap \{r\geq R-M\}}\bigg)|\partial \phi^{0,4-\delta}_{+2}|^2
+\int_{\mathcal{D}(\tau_1,\tau_2)\cap \{r\geq R-M\}}\tfrac{|\partial\phi^{0,4-\delta}_{+2}|^2}{r^{2}}.
\end{align}
Moreover,  by choosing the multiplier $-\chi_R r^{-3}(1-2M/r)\overline{\phi^{0,4-\delta}_{+2}}$ for \eqref{eq:ImprEqForr4minusdeltaphi0}, we arrive at
\begin{align}\label{eq:MultplyEqForr4minusdeltaphi0Byself}
\hspace{4ex}&\hspace{-4ex}\int_{\mathcal{D}(\tau_1,\tau_2)\cap [R,\infty)}r^{-1}\left(|\partial_{r}\phi^{0,4-\delta}_{+2}|^2+|\nablaslash \phi^{0,4-\delta}_{+2}|^2\right)\notag\\
\lesssim {}&
\int_{\Sigma_{\tau_2}\cap [R-M,+\infty)}|\partial\phi^{0,4-\delta}_{+2}|^2
+\int_{\Sigma_{\tau_1}\cap [R-M,+\infty)}|\partial\phi^{0,4-\delta}_{+2}|^2
\notag\\
&
+\int_{\mathcal{D}(\tau_1,\tau_2)\cap [R-M,\infty)}\left(r^{-1}|\partial_{t^*} \phi^{0,4-\delta}_{+2}|^2+r^{-2}|\partial \phi^{0,4-\delta}_{+2}|^2\right).
\end{align}
Adding a sufficiently large multiple of \eqref{eq:MultplyEqForr4minusdeltaphi0ByTimelikeVF} to \eqref{eq:MultplyEqForr4minusdeltaphi0Byself} and taking $R$ sufficiently large, we conclude the inequality \eqref{eq:MoraInftyr4minusdeltaphi0}.

To show the estimate \eqref{eq:MoraInftyEstir2minusdeltaphi1}, we use the multipliers
$$-\tfrac{1}{\Sigma}\chi_R \tfrac{\Delta}{\R}\Big(\tfrac{(2-2\delta)V(\sqrt{\R}\overline{\phi^{1,2-\delta}_{+2}})}{\sqrt{\R}}
+2\delta\left(\tfrac{\R}{\Delta}\partial_t +\tfrac{a}{\Delta}\partial_{\phi}\right)\overline{\phi^{1,2-\delta}_{+2}}\Big)$$
and
$$-\chi_R r^{-3}(1-2M/r)\overline{\phi^{1,2-\delta}_{+2}},$$
follow the same way as proving the estimate \eqref{eq:MoraInftyr4minusdeltaphi0} for the equation \eqref{eq:ImprEqForr2minusdeltaphi1}, and use the Cauchy--Schwarz inequality to estimate the bulk integrals arising from the last line of \eqref{eq:ImprEqForr2minusdeltaphi1}.
\qed
\end{proof}

\subsection{Red-shift Estimate near $\mathcal{H}^+$}\label{sect:Redshift}
The following red-shift estimate near $\mathcal{H}^+$ for the inhomogeneous rescaled scalar wave equation \eqref{eq:InhomoRescaledScalarWaveEq} is taken from \cite[Section 5.2]{dafermos2011bdedness}.

\begin{lemma}
\label{lem:RedshiftInhomoScalarWaveKerr}
In a slowly rotating Kerr spacetime,  there exist constants $\tilde{\veps}_0>0$, $r_+\leq 2M<r_0(M,\tilde{\veps}_0)<r_1(M,\tilde{\veps}_0)<(1+\sqrt{2})M$, $c>0$ and $C>0$, two smooth real functions $y_1(r)$ and $y_2(r)$ on $[r_+,\infty)$ with $y_1(r)\to 1$ and $y_2(r)\to 0$ as $r\to r_+$, and a $\varphi_{\tau}$-invariant timelike vector field
\begin{equation}\label{def:NVectorField}
N=T+\chi_0(r)\left(y_1(r)Y+y_2(r)T\right)
\end{equation}
such that for all $|a|/M \leq  \tilde{\veps}_0$, by choosing a multiplier
\begin{equation}\label{def:NchiVF}
N_{\chi_0}\bar{\psi}=-\chi_0(r)\Sigma^{-1} N\bar{\psi},
\end{equation}
the following estimate holds for any solution $\psi$ to the inhomogeneous rescaled scalar wave equation \eqref{eq:InhomoRescaledScalarWaveEq} for  any $\tau_2>\tau_1$:
\begin{align}\label{eq:RedshiftInhomoScalarWaveKerr}
\hspace{4ex}&\hspace{-4ex}c\bigg(\int_{\Sigma_{\tau_2}\cap\{r\leq r_0\}}\left|\check{\partial}
\psi\right|^2+\check{E}_{\mathcal{H}^{+}(\tau_1,\tau_2)}(\psi)
+\int_{\mathcal{D}(\tau_1,\tau_2)\cap\{r\leq r_0\}}\left|\check{\partial}\psi\right|^2\bigg)
\notag\\
\leq{} &C\bigg(\int_{\Sigma_{\tau_1}\cap\{r\leq r_1\}}\left|\check{\partial}\psi\right|^2+
\int_{\mathcal{D}(\tau_1,\tau_2)\cap\{r_0\leq r\leq r_1\}}\left|\check{\partial}\psi\right|^2\bigg)\notag\\
&
+\int_{\mathcal{D}(\tau_1,\tau_2)\cap\{r\leq r_1\}}\Re\left(G \cdot N_{\chi_0}\bar{\psi}\right).
\end{align}
Here, the cutoff function $\chi_0(r)$ is defined as in Section \ref{sect:genericconsts}, and in ingoing E--F coordinates,
\begin{align}
\label{eq:eventhorizonfluxscalarwaveredshift}
\check{E}_{\mathcal{H}^{+}(\tau_1,\tau_2)}(\psi) = \int_{\mathcal{H}^+(\tau_1,\tau_2)}(|\partial_v\psi|^2
+|\check{\nablaslash}\psi|^2)r^2\di v\sin\theta \di \theta \di \tilde{\phi} .
\end{align}
\end{lemma}

We generalize it to a spin-weighted wave equation.

\begin{lemma}
\label{lem:Redshiftspinweightedwavegeneral}
Let $\psi$ be a complex scalar with non-zero spin weight $s$ satisfying a spin-weighted wave equation
\begin{align}
\Sigma \widetilde{\Box}_g \psi =G,
\label{eq:Generalspinweigtedwaveform}
\end{align}
with $g=g_{M,a}$ being a slowly rotating Kerr metric. Then there exist constants $\tilde{\veps}_0>0$, $r_+\leq 2M<r_0(M,\tilde{\veps}_0)<r_1(M,\tilde{\veps}_0)<(1+\sqrt{2})M$, $c({|s|})>0$ and $C(|s|)>0$
such that for all $|a|/M\leq \tilde{\veps}_0$ and any $\tau_2>\tau_1$,
\begin{align}\label{eq:RedShiftEstiSpinweightedwave}
\hspace{4ex}&\hspace{-4ex}c(|s|)\bigg(
E_{\mathcal{H}^{+}(\tau_1,\tau_2)}(\psi)
+\int_{\Sigma_{\tau_2}\cap\{r\leq r_0\}}\left|\partial\psi\right|^2 +\int_{\mathcal{D}(\tau_1,\tau_2)\cap\{r\leq r_0\}}\left|\partial\psi\right|^2\bigg)
\notag\\
\leq {}&
C(|s|)\bigg(\int_{\Sigma_{\tau_1}\cap\{r\leq r_1\}}\left|\partial\psi\right|^2
+\int_{\mathcal{D}(\tau_1,\tau_2)\cap\{r_0\leq r\leq r_1\}}\left|\partial\psi\right|^2\bigg)\notag\\
&
+\int_{\mathcal{D}(\tau_1,\tau_2)\cap\{r\leq r_1\}}\Re\left(G \cdot N_{\chi_0}\bar{\psi}\right),
\end{align}
where $N_{\chi_0}$ is chosen as in \eqref{def:NchiVF} in Lemma \ref{lem:RedshiftInhomoScalarWaveKerr}.
\end{lemma}

\begin{proof}
As is mentioned in the beginning of this section,
the operator $\Sigma\widetilde{\Box}_g$ is the same as the rescaled scalar wave operator $\Sigma \Box_g$ except for $(\tfrac{\partial_{\phi}+is\cos \theta}{\sin \theta} +a\sin \theta\partial_{t})^2$ in place of the operator $(\tfrac{\partial_{\phi}}{\sin \theta}+a\sin \theta\partial_{t})^2$ in the expansion of $\Sigma \Box_g$.
The difference between the operator $(\tfrac{\partial_{\phi}+is\cos\theta}{\sin \theta} +a\sin \theta\partial_{t})^2$ in $\Sigma \widetilde{\Box}_g$ and $(\tfrac{\partial_{\phi}}{\sin \theta}+a\sin \theta\partial_{t})^2$ in the expansion of $\Sigma \Box_g$ involves only terms with coefficients independent of $t$, $\phi$ and $r$, therefore using the multiplier $N_{\chi_0}\psi$ chosen in Lemma \ref{lem:RedshiftInhomoScalarWaveKerr}
yields the same red-shift estimate as in \eqref{eq:RedshiftInhomoScalarWaveKerr}
but with $|\nablaslash\psi|^2-s^2|\psi|^2/r^2$ in place of $|\check{\partial} \psi|^2$ in the horizon flux \eqref{eq:eventhorizonfluxscalarwaveredshift} and $|\partial \psi|^2-s^2|\psi|^2/r^2$ in place of  $|\check{\nablaslash}\psi|^2$. Since from the lower bound \eqref{eq:LowerBdfornablaslash} of $\Vert r\nablaslash\psi\Vert_{L^2(S^2(r))}$, the sphere integrals of $|\nablaslash\psi|^2-s^2|\psi|^2/r^2$ and $|\partial \psi|^2-s^2|\psi|^2/r^2$ are bounded below by sphere integrals of $c({|s|})(|\nablaslash\psi|^2+|\psi|^2/r^2)$ and $c({|s|})(|\partial \psi|^2 + |\psi|^2/r^2)$, respectively, the estimate \eqref{eq:RedShiftEstiSpinweightedwave} follows.
\qed
\end{proof}

The above red-shift estimate for a general spin-weighted wave equation \eqref{eq:Generalspinweigtedwaveform} is used here to obtain red-shift estimates for different quantities.
\begin{prop}\label{prop:RedShiftEstiInhomoSWRWE}
In a slowly rotating Kerr spacetime , there exist constants $\tilde{\veps}_0>0$, $r_+\leq 2M<r_0(M,\tilde{\veps}_0)<r_1(M,\tilde{\veps}_0)<(1+\sqrt{2})M$ and $C$
such that for all $|a|/M\leq  \tilde{\veps}_0$ and any $\tau_2>\tau_1\geq 0$,
\begin{itemize}
  \item for  $\psi\in \{\phi^1_{+2},\phi^2_{+2},\phi^2_{-2}\}$ whose governing equations \eqref{eq:ReggeWheeler Phi^1KerrS2}, \eqref{eq:ReggeWheeler Phi^2KerrS2}
and \eqref{eq:ReggeWheeler Phi^2KerrNegaS2} can be put into the form of \eqref{eq:RewrittenFormofSWRWEOpeForm} with the relevant inhomogeneous term $F$, the following estimate holds:
\begin{align}\label{eq:RedShiftEstiInhomoSWRWE}
\hspace{4ex}&\hspace{-4ex}E_{\mathcal{H}^{+}(\tau_1,\tau_2)}(\psi)
+\int_{\Sigma_{\tau_2}\cap\{r\leq r_0\}}\left|\partial\psi\right|^2 +\int_{\mathcal{D}(\tau_1,\tau_2)\cap\{r\leq r_0\}}\left|\partial\psi\right|^2
\notag\\
\leq{} &
C\int_{\Sigma_{\tau_1}\cap\{r\leq r_1\}}\left|\partial\psi\right|^2
+C\int_{\mathcal{D}(\tau_1,\tau_2)\cap\{r_0\leq r\leq r_1\}}\left|\partial\psi\right|^2\notag\\
&
+C\int_{\mathcal{D}(\tau_1,\tau_2)\cap[r_+,r_1]}|F|^2;
\end{align}
  \item for the equation \eqref{eq:ReggeWheeler Phi^0KerrS2} of $\phi^0_{+2}$, the following estimate near horizon holds:
\begin{align}\label{eq:redshiftforphi0posiKerrS2}
\hspace{4ex}&\hspace{-4ex}E_{\mathcal{H}^{+}(\tau_1,\tau_2)}(\phi^0_{+2})
+\int_{\Sigma_{\tau_2}\cap\{r\leq r_0\}}|\partial\phi^0_{+2}|^2 +\int_{\mathcal{D}(\tau_1,\tau_2)\cap\{r\leq r_0\}}|\partial\phi^0_{+2}|^2
\notag\\
\leq {}&
C\int_{\Sigma_{\tau_1}\cap\{r\leq r_1\}}|\partial\phi^0_{+2}|^2
+C\int_{\mathcal{D}(\tau_1,\tau_2)\cap\{r_0\leq r\leq r_1\}}|\partial\phi^0_{+2}|^2\notag\\
&
+C\int_{\mathcal{D}(\tau_1,\tau_2)\cap[r_+,r_1]}|\phi^1_{+2}|^2.
\end{align}
\end{itemize}
\end{prop}
\begin{proof}

As in the previous subsection, we refer to the rewritten form \eqref{eq:RewrittenFormofSWRWE} of the equation \eqref{eq:RewrittenFormofSWRWEOpeForm}, which can in turn be put into the form of the equation \eqref{eq:Generalspinweigtedwaveform}.
We apply the conclusion in Lemma \ref{lem:Redshiftspinweightedwavegeneral} and obtain an estimate as \eqref{eq:RedShiftEstiSpinweightedwave} with the last line now replaced by
\begin{align}
\int_{\mathcal{D}(\tau_1,\tau_2)\cap\{r\leq r_1\}}
\Re\left(\left(\left(4ias\cos\theta\partial_t
+\frac{s^2(\Delta+a^2)}{r^2}\right)\psi+F\right)
N_{\chi_0}\bar{\psi}\right).
\end{align}
After applying integration by parts and Cauchy--Schwarz inequality, this bulk term is bounded by
\begin{align}
&\int_{\mathcal{D}(\tau_1,\tau_2)\cap\{r\leq r_1\}}-N\left(\chi_0\frac{s^2(\Delta+a^2)}{r^2}|\psi|^2\right)
-\int_{\mathcal{D}(\tau_1,\tau_2)\cap\{r\leq r_0\}}\frac{s^2(r_+-r_-)}{2r^2}|\psi|^2\notag\\
&+C\int_{\mathcal{D}(\tau_1,\tau_2)\cap\{r\leq r_1\}}\left(|a||\partial \psi|^2+|F|^2\right)+C\int_{\mathcal{D}(\tau_1,\tau_2)\cap\{r_0\leq r\leq r_1\}}|\partial \psi|^2.
\end{align}
By moving the first two integrals above to the LHS, the first integral produces extra positive energy fluxes and the second one adds additionally positive spacetime integrals of $|\psi|^2$. The spacetime integral term $|a||\partial\psi|^2$ in the third term is absorbed by the LHS by choosing $\tilde{\veps}_0$ sufficiently small. Altogether, we conclude the estimate \eqref{eq:RedShiftEstiInhomoSWRWE}.

For $\phi^0_{+2}$, we write equation \eqref{eq:ReggeWheeler Phi^0KerrS2} into the form of \eqref{eq:Generalspinweigtedwaveform} as
\begin{align}
\label{eq:phi+20tospinweightedwavegeneral}
\Sigma\widetilde{\Box}_g(\phi^0_{+2})
={}G(\phi^0_{+2})={}&
\tfrac{2(r^2+2Mr-2a^2)}{r^2}\phi^0_{+2}
+\tfrac{4(r^2-3Mr+2a^2)}{r^3}\phi^1_{+2}\notag\\
&
+8ia\cos\theta\partial_t \phi^0_{+2}
-\tfrac{8(a^2\partial_t+a\partial_{\phi})\phi^0_{+2}}{r}.
\end{align}
Using the definition of $\phi^1_{+2}$, one can eliminate the $\phi^0_{+2}$ term and introduce extra $Y\phi^0_{+2}$ term with a positive coefficient:
\begin{align}\label{eq:phi+20toRS}
\Sigma\widetilde{\Box}_g(\phi^0_{+2})
={}G(\phi^0_{+2})={}&
\tfrac{2(r^2+2Mr-2a^2)}{r}Y\phi^0_{+2}
+\tfrac{2r^2-16Mr+12a^2}{r^3}\phi^1_{+2}\notag\\
&-\tfrac{8(a^2\partial_t +a \partial_{\phi})\phi^0_{+2}}{r}
+8ia\cos\theta\partial_t(\phi^0_{+2}).
\end{align}
The estimate \eqref{eq:RedShiftEstiSpinweightedwave} can be applied again, and we need to estimate the bulk term
\begin{align}
\int_{\mathcal{D}(\tau_1,\tau_2)\cap\{r\leq r_1\}}
\Re\left(G(\phi^0_{+2})\cdot
N_{\chi_0}\overline{\phi^0_{+2}}\right).
\end{align}
This integral over $r_0\leq r\leq r_1$ is bounded by the RHS of \eqref{eq:redshiftforphi0posiKerrS2} from Cauchy--Schwarz inquality, and the left integral over $r\leq r_0$ is
\begin{align}
\label{eq:phi+20redshifterror4}
&-\int_{\mathcal{D}(\tau_1,\tau_2)\cap\{r\leq r_0\}}
\frac{1}{\Sigma}\Re\left(G(\phi^0_{+2})
(1+y_2(r))T\overline{\phi^0_{+2}}\right)\notag\\
&-\int_{\mathcal{D}(\tau_1,\tau_2)\cap\{r\leq r_0\}}\frac{1}{\Sigma}\Re\left(G(\phi^0_{+2})
y_1(r)Y\overline{\phi^0_{+2}}\right).
\end{align}
Substituting the expression \eqref{eq:phi+20tospinweightedwavegeneral} of $G(\phi^0_{+2})$ into the first term of \eqref{eq:phi+20redshifterror4}, the integral term from the term $\tfrac{2(r^2+2Mr-2a^2)}{r^2}\phi^0_{+2}$ has a good sign after integrating over the spacetime region, and the rest terms can be either estimated by Cauchy--Schwarz inequality or absorbed by setting $\tilde{\veps}_0$ sufficiently small. For the second term of \eqref{eq:phi+20redshifterror4}, we instead use the expansion of $G(\phi^0_{+2})$ on the RHS of \eqref{eq:phi+20toRS}. The first term on the RHS of \eqref{eq:phi+20toRS} contributes positive bulk terms after moving to the LHS, and we can similarly estimate the remaining integral terms by either
requiring $\tilde{\veps}_0$ sufficiently small or an application of the Cauchy--Schwarz inequality. These together prove the estimate \eqref{eq:redshiftforphi0posiKerrS2}.
\qed
\end{proof}

Notice that the quantities $\phi^0_{+2}$ and $\phi^1_{+2}$ are, however, degenerate at horizon, hence we shall prove as well the red-shift estimates for $\widetilde{\phi^0}$ and $\widetilde{\phi^1}$ (which are non-degenerate at $\mathcal{H}^+$) defined as in \eqref{def:widetildephi01KerrNegaS2}.
\begin{prop}\label{prop:RedShiftEstiInhomoSWRWEtildephi01}
In a slowly rotating Kerr spacetime $(\mathcal{M},g_{M,a})$, there exist constants $\tilde{\veps}_0>0$, $r_+<2M<r_0(M,\tilde{\veps}_0)<r_1(M,\tilde{\veps}_0)<(1+\sqrt{2})M$ and $C$
such that for all $|a|/M\leq \tilde{\veps}_0$, the following red-shift estimates hold for $\widetilde{\phi^0_{-2}}$ and $\widetilde{\phi^1_{-2}}$ for any $\tau_2>\tau_1$:
\begin{align}\label{eq:RedShiftEstipsi0InhomoSWRWENega}
\hspace{4ex}&\hspace{-4ex}E_{\mathcal{H}^{+}(\tau_1,\tau_2)}(\widetilde{\phi^0})
+\int_{\Sigma_{\tau_2}\cap\{r\leq r_0\}}|\partial\widetilde{\phi^0}|^2 +\int_{\mathcal{D}(\tau_1,\tau_2)\cap\{r\leq r_0\}}|\partial\widetilde{\phi^0}|^2
\notag\\
\leq {}&
C\int_{\Sigma_{\tau_1}\cap\{r\leq r_1\}}|\partial\widetilde{\phi^0}|^2
+C\int_{\mathcal{D}(\tau_1,\tau_2)\cap\{r_0\leq r\leq r_1\}}|\partial\widetilde{\phi^0}|^2\notag\\
&
+C\int_{\mathcal{D}(\tau_1,\tau_2)\cap[r_+,r_1]}|\widetilde{\phi^1}|^2,\\
\label{eq:RedShiftEstipsi1InhomoSWRWENega}
\hspace{4ex}&\hspace{-4ex}E_{\mathcal{H}^{+}(\tau_1,\tau_2)}(\widetilde{\phi^1})
+\int_{\Sigma_{\tau_2}\cap\{r\leq r_0\}}|\partial\widetilde{\phi^1}|^2 +\int_{\mathcal{D}(\tau_1,\tau_2)\cap\{r\leq r_0\}}|\partial\widetilde{\phi^1}|^2
\notag\\
\leq{} &
C\int_{\Sigma_{\tau_1}\cap\{r\leq r_1\}}|\partial\widetilde{\phi^1}|^2
+C\int_{\mathcal{D}(\tau_1,\tau_2)\cap\{r_0\leq r\leq r_1\}}|\partial\widetilde{\phi^1}|^2\notag\\
&
+C\int_{\mathcal{D}(\tau_1,\tau_2)\cap[r_+,r_1]}
\left(|\phi^2_{-2}|^2+\tfrac{ |a|}{M}|\partial\widetilde{\phi^0}|^2
+|\widetilde{\phi^0}|^2\right).
\end{align}
\end{prop}

\begin{proof}

The equation for $\widetilde{\phi^0}$ reads
\begin{align}\label{eq:eqPsi[-2]}
\Sigma\widetilde{\Box}_g\widetilde{\phi^0}
=&\left(\tfrac{4(r-M)r-5\Delta}{r}Y+2r\partial_t\right)
\widetilde{\phi^0}
+8\widetilde{\phi^0}
-\tfrac{10a^2}{r^2}\widetilde{\phi^0}
-\tfrac{5\Delta}{r^2}\widetilde{\phi^0}\notag\\
&+\tfrac{10}{r}\left(a^2\partial_t
+a\partial_{\phi}\right)\widetilde{\phi^0}
+\tfrac{5}{r}\widetilde{\phi^1}
-8ia\cos\theta\partial_t\widetilde{\phi^0},
\end{align}
and the governing equation for $r^2\widetilde{\phi^1}$
is
\begin{align}\label{eq:r2tildephi1toRS}
\Sigma\widetilde{\Box}_g(r^2\widetilde{\phi^1})
=&\left(\tfrac{4(r-M)r-9\Delta}{2r}Y+r\partial_t\right)
(r^2\widetilde{\phi^1})
+\tfrac{7}{2}(r^2\widetilde{\phi^1})
-\tfrac{3a^2}{2r^2}(r^2\widetilde{\phi^1})
+\tfrac{r}{2}\phi^2\notag\\
&+\tfrac{6\Delta}{r}\left((Mr-2a^2)\widetilde{\phi^0}
+r\left(a^2\partial_t+a\partial_{\phi}\right)\widetilde{\phi^0}\right)\notag\\
&
+\tfrac{5}{r}\left(a^2\partial_t+a\partial_{\phi}\right)
(r^2\widetilde{\phi^1})
-8ia\cos\theta\partial_t(r^2\widetilde{\phi^1}).
\end{align}
Lemma \ref{lem:Redshiftspinweightedwavegeneral} can be applied to both equations. Near $r_+$, we find for the first term on the RHS of both equations that
\begin{align}
\left(\tfrac{4(r-M)r-5\Delta}{r}Y+2r\partial_t\right)
\widetilde{\phi^0}={}&
2r(Y+\partial_t)\widetilde{\phi^0}
+O(r-r_+)Y\widetilde{\phi^0},\\
\left(\tfrac{4(r-M)r-9\Delta}{2r}Y+r\partial_t\right)
(r^2\widetilde{\phi^1})={}&
r(Y+\partial_t)(r^2\widetilde{\phi^1}) + O(r-r_+)Y(r^2\widetilde{\phi^1}),
\end{align}
and the used operator
\begin{align}
\Sigma N_{\chi_0}={}-(T+Y)-((y_1(r)-1)Y+y_2(r)T),
\end{align}
where the coefficients $y_1(r)-1$ and $y_2(r)$ both converges to $0$ as $r\to r_+$. Therefore, the bulk integral over $\mathcal{D}(\tau_1,\tau_2)\cap\{r\leq r_0\}$ from the the first term on the RHS of both equations  can be absorbed by the LHS by choosing $r_0$ close to $r_+$ and taking $\tilde{\veps}_0$ small. The other bulk terms are estimated by either integration by parts or Cauchy--Schwarz inquality and requiring $\tilde{\veps}_0$ sufficiently small, and this completes the proof.
\qed
\end{proof}

\begin{remark}\label{rem:r0r1choice}
We fix $\tilde{\veps}_0$ in this subsection by taking the minimal value among all the $\tilde{\veps}_0$ in the above discussions. Given that $\tilde{\veps}_0$ is fixed, as have been discussed at the beginning of Section \ref{sect:redshiftMoralargerall}, the choices of $r_0=r_0(M)$ and $r_1=r_1(M)$ are made by taking the minimal $r_0$ and the maximal $r_1$ in all the above estimates.
\end{remark}

\section{Estimates for Spacetime Integrals of $\phi^0_s$ and $\phi^1_s$}\label{sect:anEstiforphi1STinte}
We derive in this section some a priori estimates for $\phi^0_s$ and $\phi^1_{s}$ which are used in Section \ref{sect:outlineproof}.

\subsection{Spin $+2$ Component}
\begin{prop}\label{prop:estiphi1posi}
Let $\hat{\eps}_1>0$ be arbitrary. In a subextremal Kerr spacetime, the following estimate holds for $\phi^1_{+2}$ defined as in \eqref{eq:DefOfphi012PosiSpinS2}:
\begin{align}\label{eq:estiphi1byphi02final1}
&\inttau \frac{|\phi^1|^2}{r^2}
\lesssim
\hat{\eps}_1 \inttau \frac{|r\phi^1|^2}{r^3}+\hat{\eps}_1^{-1} \inttau \frac{|\phi^2|^2}{r^3}+\int_{\Sigma_{0}}
\frac{|r\phi^1|^2}{r^2}.
\end{align}
\end{prop}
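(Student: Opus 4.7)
The central observation is that by combining the definitions $\phi^1=(rYr)\phi^0$ and $\phi^2=(rYr)\phi^1$ with the identity $Yr=-1$ in global Kerr coordinates, the second‐order radial operator collapses into the first‐order transport identity
\begin{equation*}
Y(r\phi^1)\;=\;\frac{\phi^2}{r}.
\end{equation*}
My plan is to run a weighted multiplier argument on this transport equation. Multiplying by $\overline{r\phi^1}\cdot h(r)$, taking real parts, and choosing the weight $h$ so that the coercive pointwise term on the left‐hand side is exactly $|\phi^1|^2/r^2$ (which forces $h'(r)=2/r^4$) produces the pointwise divergence identity
\begin{equation*}
\frac{|\phi^1|^2}{r^2}\;=\;\tfrac{1}{3}\,Y\!\Bigl(\tfrac{|\phi^1|^2}{r}\Bigr)\;-\;\tfrac{2}{3r^3}\,\Re\bigl(\overline{\phi^1}\phi^2\bigr).
\end{equation*}

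Integrating this identity over $\mathcal{D}(0,\tau)$ with the Kerr volume form $\Sigma\,dt^*\,dr\,d\sigma_{\mathbb{S}^2}$ and using the Kerr divergence $\operatorname{div}(Y)=-2r/\Sigma$, the $Y$-divergence produces a non-negative bulk contribution $2\int|\phi^1|^2/\Sigma\,dVol$ which combines with the left-hand side to yield the coercive lower bound $\int|\phi^1|^2(r^2+3a^2\cos^2\theta)/(r^2\Sigma)\,dVol\gtrsim \int|\phi^1|^2/r^2\,dVol$, uniformly in $|a|\le a_0<M$. The resulting cross term is handled by AM--GM with parameter $\hat{\eps}_1$:
\begin{equation*}
\frac{|\phi^1||\phi^2|}{r^3}\;\le\;\frac{\hat{\eps}_1}{2}\,\frac{|\phi^1|^2}{r}\;+\;\frac{1}{2\hat{\eps}_1}\,\frac{|\phi^2|^2}{r^5},
\end{equation*}
and since $r\ge r_+>0$, the second summand is dominated by a universal constant times $|\phi^2|^2/r^3$, producing the two bulk terms on the right-hand side of \eqref{eq:estiphi1byphi02final1}.

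The technically hardest step is handling the three boundary fluxes arising from $\int Y(|\phi^1|^2/r)\,dVol$, one each on $\Sigma_0$, $\Sigma_\tau$, and $\mathcal{H}^+(0,\tau)$. The $\Sigma_0$ contribution equals, up to a universal constant from $Y^{t^*}\Sigma/r$, the initial-data term $\int_{\Sigma_0}|\phi^1|^2=\int_{\Sigma_0}|r\phi^1|^2/r^2$ as claimed. To eliminate the horizon flux I plan to shift the weight by the constant $\tfrac{2}{3r_+^3}$, replacing $h$ by $\tilde h(r)=\tfrac{2}{3r_+^3}-\tfrac{2}{3r^3}$, which preserves the pointwise identity (only $h'$ enters the coercive bulk) but vanishes at $r=r_+$. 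The main obstacle is the $\Sigma_\tau$ flux, which naively carries a sign inconsistent with the target form; the cleanest way to absorb it is to combine the multiplier with a time cutoff $\chi(t^*)$ satisfying $\chi(0)=1$ and $\chi(\tau)=0$, which kills the $\Sigma_\tau$ flux at the cost of an extra bulk term $\int|\phi^1|^2/r\cdot|Y\chi|\,dVol$. Since $Y(t^*)\ge 1$ uniformly and $\chi$ depends only on $t^*$, this extra term is exactly of the form $\hat{\eps}_1\int|r\phi^1|^2/r^3\,dVol$ already on the right-hand side provided the transition region of $\chi$ is tuned so that $|Y\chi|\le C\hat{\eps}_1$; verifying this absorption in the intermediate region $r\in[M+r_0/2,r_0]$ where $Y^{t^*}$ interpolates between $1$ and $H(r)$ is the remaining delicate check.
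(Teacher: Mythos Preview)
Your core idea---treating $\phi^2=(rYr)\phi^1$ as a transport identity $Y(r\phi^1)=\phi^2/r$ and running a weighted multiplier argument---is exactly the paper's approach. However, the specific weight you choose fails to close, and this is a genuine gap rather than a detail.

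The issue is with the shift $h\to\tilde h=\tfrac{2}{3r_+^3}-\tfrac{2}{3r^3}$. You correctly note that the shift leaves the pointwise identity unchanged (only $h'$ enters the coercive bulk), but this is precisely the problem: the identity is unchanged, yet you are implicitly rewriting it as
\[
\frac{|\phi^1|^2}{r^2}\;=\;\tilde h\,\Re(\overline{\phi^1}\phi^2)\;-\;\tfrac{1}{2}\,Y(\tilde h\,r^2|\phi^1|^2),
\]
and you must commit to \emph{one} splitting of divergence-versus-cross when you integrate by parts. In the shifted form the divergence vanishes at the horizon, but the cross coefficient $\tilde h$ tends to the nonzero constant $\tfrac{2}{3r_+^3}$ as $r\to\infty$, and the divergence flux on $\Sigma_0$ carries the density $Y^{t^*}\Sigma\,\tilde h\,r^2|\phi^1|^2\sim r^4|\phi^1|^2$. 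After AM--GM (with the Kerr volume form $\Sigma\,dt^*dr\,d\sigma$ you specified) the cross term produces $\hat{\eps}_1\int r|\phi^1|^2\,dVol$, which is $r^2$ too large compared to the target $\hat{\eps}_1\int|r\phi^1|^2/r^3\,dVol=\hat{\eps}_1\int|\phi^1|^2/r\,dVol$; the $\Sigma_0$ flux is similarly $r^2$ too large. Your AM--GM display uses the \emph{unshifted} coefficient $1/r^3$, but that is inconsistent with having shifted the divergence. Separately, with $\tilde h\geq 0$ the $\Sigma_\tau$ flux already sits on the left with the \emph{good} sign, so your time-cutoff manoeuvre is both unnecessary and unhelpful (it does not fix the $r$-weights, and the cutoff region of width $\sim\hat{\eps}_1^{-1}$ leaves part of $\mathcal{D}(0,\tau)$ uncontrolled by the coercive term).

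The paper resolves all of this in one stroke by choosing $f_{+2}=\Delta/(r^2+a^2)$ and integrating against the \emph{flat} measure $d\check V=r^{-2}\,dVol=dr\,dt^*\,d\sigma$. This weight is nonnegative (so $\Sigma_\tau$ has the good sign), vanishes at $r=r_+$ (so the horizon flux is zero), and is bounded as $r\to\infty$ (so the cross term and the $\Sigma_0$ flux carry exactly the target $r$-weights); meanwhile $\partial_r f_{+2}=2M(r^2-a^2)/(r^2+a^2)^2\gtrsim r^{-2}$ gives the coercive bulk. No shift and no time cutoff are needed. The missing ingredient in your argument is a weight with all four of these properties simultaneously; no function of the form $C-\tfrac{2}{3r^3}$ achieves this.
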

\begin{proof}
We start with a simple identity for any real smooth function $f_{+2}(r)$ and any real value $\alpha$:
\begin{align}\label{eq:estiphi1posiiden1}
Y\left(f_{+2}r^{\alpha}|r\phi^1|^2\right)
+f_{+2}\alpha r^{\alpha-1}|r\phi^1|^2-Y(f_{+2})r^{\alpha}|r\phi^1|^2
={}f_{+2}r^{\alpha}\Re(\phi^1\overline{\phi^2}).
\end{align}
Integrate \eqref{eq:estiphi1posiiden1} over $\mathcal{D}(0,\tau)$ with the measure
\begin{align}\label{eq:measureinestiforphi1}
\di \check{V}=r^{-2}\di V=\di t^*\di r\sin\theta \di \theta \di \phi^*
\end{align}
for $\alpha=0$ and $f_{+2}=\tfrac{\Delta}{\R}$. Then, since
\begin{equation}
-Y(f_{+2})=\partial_r f_{+2}=\frac{2M(r^2-a^2)}{(\R)^2}\geq\frac{c}{r^2},
\end{equation}
an application of the Cauchy--Schwarz inequality to $\inttau f_{+2}\Re(\phi^1\overline{\phi^2}) \di \check{V}$ proves the estimate \eqref{eq:estiphi1byphi02final1}.
\qed
\end{proof}

\subsection{Spin $-2$ Component}
\begin{prop}\label{prop:estiphi1nega}
In a subextremal Kerr spacetime, it holds true for $\phi^0_{-2}$ and $\phi^1_{-2}$ defined as in \eqref{eq:DefOfphi012NegaSpinS2} that
\begin{subequations}
\begin{align}\label{eq:estiphi0byphi02negafinal1}
&\inttau r^{-2}{|\widetilde{\phi^0}|^2}
\lesssim
\inttau r^{-3}{|\phi^2|^2}
+\int_{\Sigma_0}r^{-1}\Big({|\widetilde{\phi^0}|^2}
+|\widetilde{\phi^1}|^2\Big),\\
\label{eq:estiphi1byphi02negafinal1}
&\inttau r^{-2}{|\widetilde{\phi^1}|^2}
\lesssim
\inttau r^{-3}|\phi^2|^2
+\int_{\Sigma_0}r^{-1}|\widetilde{\phi^1}|^2.
\end{align}
\end{subequations}
Moreover, for the angular derivatives, we have
\begin{subequations}\label{eq:estiphiibyphii+1negafinal1}
\begin{align}
\label{eq:estiphi0byphi1angunegafinal1}
\hspace{4ex}&\hspace{-4ex}\intMinfty |\nablaslash \widetilde{\phi^0}|^2 +\int_{\Sigma_{\tau}\cap [6M,\infty)}r|\nablaslash \widetilde{\phi^0}|^2\notag\\
\lesssim {}&
\int_{\mathcal{D}(0,\tau)\cap [5M,\infty)} \frac{|\nablaslash \widetilde{\phi^{1}}|^2}{r}
+\int_{\Sigma_0\cap [5M, \infty)}r|\nablaslash \widetilde{\phi^0}|^2
+\intMcut \frac{|\nablaslash \widetilde{\phi^0}|^2}{r},\\
\label{eq:estiphi1byphi2angunegafinal1}
\hspace{4ex}&\hspace{-4ex}\intMinfty |\nablaslash \widetilde{\phi^1}|^2 +\int_{\Sigma_{\tau}\cap [6M,\infty)}r|\nablaslash \widetilde{\phi^1}|^2\notag\\
\lesssim {}&
\int_{\mathcal{D}(0,\tau)\cap [5M,\infty)} \frac{|\nablaslash {\phi^{2}}|^2}{r}
+\int_{\Sigma_0\cap [5M, \infty)}r|\nablaslash \widetilde{\phi^1}|^2
+\intMcut \frac{|\nablaslash \widetilde{\phi^1}|^2}{r} .
\end{align}
\end{subequations}
\end{prop}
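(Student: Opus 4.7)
The proof parallels that of Proposition \ref{prop:estiphi1posi}, using the outgoing null direction $V$ in place of $Y$. The algebraic starting point is the pair of transport identities that follow from the definitions \eqref{eq:DefOfphi012NegaSpinS2} and \eqref{def:widetildephi01KerrNegaS2}: since $V(r)=1$ in Boyer-Lindquist coordinates, a direct computation using $V(r\phi^0_{-2})=-\phi^1_{-2}/r$ and $V(r\phi^1_{-2})=-\phi^2_{-2}/r$ yields
\begin{equation*}
\Delta V(\widetilde{\phi^0}) = -\widetilde{\phi^1} - \tfrac{r^2+2Mr-3a^2}{r}\widetilde{\phi^0},\qquad \Delta V(\widetilde{\phi^1}) = -\phi^2_{-2} - \tfrac{r^2-a^2}{r}\widetilde{\phi^1}.
\end{equation*}
These chain $\widetilde{\phi^0}\to\widetilde{\phi^1}\to\phi^2_{-2}$ along $V$ and are the structural fact that allows closing the bound with $|\phi^2_{-2}|^2/r^3$ on the right in two steps.

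For \eqref{eq:estiphi1byphi02negafinal1}, the plan is to compute $V(h_0|\widetilde{\phi^1}|^2)$ with the weight $h_0=\Delta/r$; the factor $\Delta$ cancels the $1/\Delta$ singularity of $V$ at $\mathcal{H}^+$. The resulting pointwise identity
\begin{equation*}
\tfrac{r^2-a^2}{r^2}|\widetilde{\phi^1}|^2 = -V\!\bigl(\tfrac{\Delta}{r}|\widetilde{\phi^1}|^2\bigr) - \tfrac{2}{r}\Re(\overline{\widetilde{\phi^1}}\phi^2_{-2})
\end{equation*}
is integrated over $\mathcal{D}(0,\tau)$ in the measure $d\check V = dr\,dt^*\sin\theta d\theta d\phi^*$, so that $dV_{\mathcal{M}}=\Sigma\,d\check V$ with $\Sigma\sim r^2$. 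The LHS then controls $c\int|\widetilde{\phi^1}|^2/r^2\,dV_{\mathcal{M}}$. Writing $V=\partial_r + \tfrac{r^2+a^2}{\Delta}\partial_{t^*} + \tfrac{a}{\Delta}\partial_{\phi^*}$ away from the horizon cutoff and applying Stokes' theorem: the $r$-boundary at $r_+$ vanishes because $\Delta/r\to0$, the $r$-boundary at infinity vanishes by compact support, the $\phi^*$-integral vanishes by periodicity, and the $t^*$-boundary produces $\int_{\Sigma_0}(r^2+a^2)/r\,|\widetilde{\phi^1}|^2\,dr\sin\theta d\theta d\phi^*\sim\int_{\Sigma_0}|\widetilde{\phi^1}|^2/r\,dV_{\Sigma_0}$ together with a nonpositive $\Sigma_\tau$ contribution that is dropped. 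The cross term is handled by a Cauchy-Schwarz with weight $1/r$: the absorbable piece $\epsilon\int|\widetilde{\phi^1}|^2/r^3$ goes into the LHS, and the remainder delivers the target $\epsilon^{-1}\int|\phi^2_{-2}|^2/r^3$. The estimate \eqref{eq:estiphi0byphi02negafinal1} follows by running the identical scheme with $h_0=\Delta/r$ applied to $\widetilde{\phi^0}$: direct computation yields coefficient $(r^2+4Mr-5a^2)/r^2\geq c>0$ and a cross term $(2/r)\Re(\overline{\widetilde{\phi^0}}\widetilde{\phi^1})$; after Cauchy-Schwarz the remaining $|\widetilde{\phi^1}|^2/r^3$ bulk integral is dominated by the just-established \eqref{eq:estiphi1byphi02negafinal1}, producing both initial-data pieces on the right.

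For the angular estimates \eqref{eq:estiphiibyphii+1negafinal1}, commute $\nablaslash_j$ through the transport relations; since $[\partial_r,\nablaslash_j]=[\partial_{t^*},\nablaslash_j]=0$, the only commutator with $V$ comes from $(a/\Delta)[\partial_\phi,\nablaslash_j]$, which is a first-order angular operator whose coefficient is bounded uniformly on $r\geq5M$. The same $V$-integration scheme then applies to $\nablaslash_j\widetilde{\phi^i}$, with a weight of the form $h_0=\Delta\cdot f(r)\,\chi_R(r)$ where $\chi_R$ is a smooth cutoff equal to $1$ on $r\geq6M$ and $0$ on $r\leq5M$; contributions from $V(\chi_R)$, supported in the transition region $[5M,6M]$, generate precisely the error integral $\intMcut|\nablaslash\widetilde{\phi^i}|^2/r$ appearing in the statement, while the $t^*$-boundary analysis yields the $r|\nablaslash\widetilde{\phi^i}|^2$ weight on $\Sigma_{\tau_i}$. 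The main obstacle is choosing the $r$-power in $f(r)$ so that the bulk coefficient is uniformly positive on all of $[6M,\infty)$ while simultaneously matching the prescribed $r$-weights on both the bulk integral and the $\Sigma_{\tau_i}$ boundary; a secondary technical point is that the Kerr commutator contributes a term linear in $|a|/\Delta$ whose effect must be absorbed after taking $|a|/M$ sufficiently small, which is harmless in the region $r\geq5M$.
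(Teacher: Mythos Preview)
Your approach is essentially the paper's: both proofs integrate a transport identity along $V$ with a well-chosen radial weight and close by Cauchy--Schwarz. Your choice to work directly with the regular variables $\widetilde{\phi^i}$ and the weight $\Delta/r$ is equivalent to the paper's choice of working with $\phi^i$ and the weight $f_{-2}=(r^2+a^2)/\Delta$, $\beta=-1$, since $r\phi^1=(\Delta/r)\widetilde{\phi^1}$; the computations match after this change of variables, and your pointwise identities are correct.

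There is one genuine, if minor, gap in the angular part. You propose to absorb the commutator $[V,\nablaslash_j]=(a/\Delta)[\partial_\phi,\nablaslash_j]$ by taking $|a|/M$ small. But the proposition is stated for an arbitrary fixed subextremal Kerr, with no slow-rotation hypothesis, so this would not prove the statement as written. In fact no smallness is needed: writing $L_j=r\nablaslash_j$, one has $[\partial_\phi,L_1]=-L_2$, $[\partial_\phi,L_2]=L_1$, $[\partial_\phi,L_3]=0$, so the cross terms
\[
\Re\bigl(\overline{L_1\psi}\,[\partial_\phi,L_1]\psi\bigr)+\Re\bigl(\overline{L_2\psi}\,[\partial_\phi,L_2]\psi\bigr)
=-\Re\bigl(\overline{L_1\psi}\,L_2\psi\bigr)+\Re\bigl(\overline{L_2\psi}\,L_1\psi\bigr)=0
\]
cancel exactly upon summing over $j=1,2,3$. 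This is precisely why the paper sums over $j$ at the end; once you do that, the commutator disappears and the estimate holds on the full subextremal range. A second, smaller point: you leave the choice of the radial power $f(r)$ in the angular weight as an ``obstacle''. The paper's choice, in your variables, is $h_0\sim (r^2+a^2)^3\Delta\, r^{-5}\chi_R$ (equivalently $f_{-2}=(r^2+a^2)^3/\Delta^3$, $\beta=-1$ applied to $|r^2\nablaslash_j\phi^0|^2$); this gives a bulk coefficient $\sim r^2$ in $d\check V$ and a $t^*$-boundary weight $\sim r^3$ in $d\check V$, matching the stated $|\nablaslash\widetilde{\phi^i}|^2$ bulk and $r|\nablaslash\widetilde{\phi^i}|^2$ boundary weights after restoring the $\Sigma$ in the volume form.
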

\begin{proof}
We derive for any real function $f_{-2}(r)$ and
any real value $\beta$ that
\begin{align}\label{eq:estiphi1negaiden1}
&V(f_{-2}r^{\beta}|r\phi^1|^2)
-f_{-2}\beta r^{\beta-1}|r\phi^1|^2-\partial_r f_{-2} r^{\beta}|r\phi^1|^2
={}-r^{\beta}f_{-2}\Re(\phi^1\overline{\phi^2}).
\end{align}
By choosing $\beta=-1$ and $f_{-2}=\tfrac{\R}{\Delta}$, since $\partial_r f_{-2}=\tfrac{-2M(r^2-a^2)}{\Delta^2}$, the estimate \eqref{eq:estiphi1byphi02negafinal1} follows from integrating \eqref{eq:estiphi1negaiden1} over $\mathcal{D}(0,\tau)$ with the measure $\di \check{V}$ in
\eqref{eq:measureinestiforphi1}
and applying Cauchy--Schwarz to the integral of the RHS of \eqref{eq:estiphi1negaiden1}.

Similarly, for $\phi^0$, we have
\begin{align}\label{eq:estiphi0byphi1negav1}
&\inttau r^{-2}{|\widetilde{\phi^0}|^2}
\lesssim
\inttau r^{-3}|\widetilde{\phi^1}|^2
+\int_{\Sigma_0}r^{-1}{|\widetilde{\phi^0}|^2}.
\end{align}
Combining \eqref{eq:estiphi1byphi02negafinal1} with \eqref{eq:estiphi0byphi1negav1} proves the estimate \eqref{eq:estiphi0byphi02negafinal1}.

We prove the inequality \eqref{eq:estiphi0byphi1angunegafinal1} below, the proof for \eqref{eq:estiphi1byphi2angunegafinal1} being analogous. For a smooth cutoff function $\chi_2(r)$ which equals $1$ in $[6M,\infty)$ and vanishes in $[r_+,5M]$,
any real value $\beta$ and $\nablaslash_j$ $(j=1,2,3)$ as defined in \eqref{SpinWeightedAngularDerivaBasisOnSphere}, it holds true that
\begin{align}\label{eq:estiphi0negaiden1v4}
&V(f_{-2}\chi_2 r^{\beta}|r^2\nablaslash_j\phi^0|^2)
-\chi_2\partial_rf_{-2}r^{\beta}|r^2\nablaslash_j\phi^0|^2\notag\\
&
-\left(\beta \chi_2 f_{-2}+\partial_r\chi_2 f_{-2}r\right)r^{\beta-1}|r^2\nablaslash_j\phi^0|^2
={}\chi_2f_{-2}
r^{2+\beta}\Re(\nablaslash_j\phi^0
\overline{\nablaslash_j\phi^1}).
\end{align}
Choosing $\beta=-1$ and $f_{-2}=\tfrac{(\R)^3}{\Delta^3}$, integrating \eqref{eq:estiphi0negaiden1v4} over $\mathcal{D}(0,\tau)$ with the measure $\di \check{V}$ in \eqref{eq:measureinestiforphi1}, and applying the Cauchy--Schwarz inequality to the last term, the estimate \eqref{eq:estiphi0byphi1angunegafinal1}  for $i=0$ follows by summing over $j=1,2,3$.
\qed
\end{proof}

\section{Proof of Theorem \ref{thm:EneAndMorEstiExtremeCompsNoLossDecayVersion2} on Schwarzschild}\label{sect:pfMainthmSchwS2}
We derive the estimates
\eqref{eq:estiphi02hatphi1kerrS2} and \eqref{eq:estiphi02hatphi1kerrS2Nega} on Schwarzschild backgrounds, thus finishing the proof of Theorem \ref{thm:EneAndMorEstiExtremeCompsNoLossDecayVersion2} on Schwarzschild for $n=0$ from the discussions in Section \ref{sect:outlineproof}. The $n\geq 1$ cases are discussed in Section \ref{sect:highorderS2}.

\subsection{Coupled System on Schwarzschild}\label{sect:coupledsysScwhS2}
In a Schwarzschild spacetime, systems \eqref{eq:ReggeWheeler Phi^012KerrS2} and  \eqref{eq:ReggeWheeler Phi^012KerrNegaS2} for $\phi_{s}^i$ $(i=0,1,2)$ can be written in a unified form:

\begin{subequations}\label{eq:RWschwphi012ope}
\begin{align}
\label{eq:RWphi02schwope}
\mathbf{L}_{s}^0\phi^0_{s} =&F_{s}^0=\tfrac{4(r-3M)}{r^2}\phi^1_{s},\\
\label{eq:RWphi1schwope}
\mathbf{L}_{s}^1\phi^1_{s} =&F_{s}^1=\tfrac{2(r-3M)}{r^2}\phi^2_{s}+6M\phi^0_{s},\\
\label{eq:RWphi22schwope}
\mathbf{L}_{s}^1\phi^2_{s} =&F_{s}^2=0,
\end{align}
\end{subequations}
with the operators simplified to
\begin{subequations}\label{eq:SWFIOpesSchw}
\begin{align}
\label{eq:SWFIOpe0Schw}
\mathbf{L}_{s}^0=&\Sigma\Box_g+\tfrac{2is\cos\theta}{\sin^2 \theta}\partial_{\phi}-s^2\left(\cot^2\theta +\tfrac{r+2M}{2r}\right),\\
\label{eq:SWFIOpe2Schw}
\mathbf{L}_{s}^1=&\Sigma\Box_g+\tfrac{2is\cos\theta}{\sin^2 \theta}\partial_{\phi}-s^2\left(\cot^2\theta +\tfrac{r-2M}{r}\right).
\end{align}
\end{subequations}

\subsection{Decomposition}\label{sect:decompSchwS2}

The equations \eqref{eq:RWphi1schwope} and \eqref{eq:RWphi22schwope} are both in the form of an ISWWE
\begin{equation}\label{eq:SWRWphi12generalformSchw}
\mathbf{L}_s^1\varphi_s^1=
\Sigma\Box_g\varphi_s^1+\tfrac{2is\cos\theta}{\sin^2 \theta}\partial_{\phi}\varphi_s^1-s^2\left(\cot^2\theta +\tfrac{r-2M}{r}\right)\varphi_s^1=G_s^1.
\end{equation}
We will from now on suppress the subscript $s$ in the functions $\varphi_s^1$ and $G_s^1$, as well as in $\varphi_s^0$ and $G_s^0$ in \eqref{eq:SWRWphi0generalformSchw}, but retain it for the operators.

Decompose the solution $\varphi^1$ and the inhomogeneous term $G^1$ into
\begin{align}\label{eq:decompSWSH Schw}
\varphi^1&=\sum_{m,\ell}\varphi^1_{m\ell}(t,r)Y_{m\ell}^{s}(\cos\theta)e^{im\phi},
\quad m\in \mathbb{Z},\\
\label{eq:decompSWSH Schw SourceTermF}
G^1&=\sum_{m,\ell}G^1_{m\ell}(t,r)Y_{m\ell}^{s}(\cos\theta)e^{im\phi},
\quad m\in \mathbb{Z}.
\end{align}
Here, for each $m$, $\left\{Y_{m\ell}^{s}(\cos\theta)\right\}_{\ell}$ with $\min{\{\ell\}}=\max{(|m|,|s|)}$ are the eigenfunctions of the self-adjoint operator
\begin{align}\label{eq:SWSHOpe Schw}
\textbf{S}_m=\tfrac{1}{\sin\theta}\partial_{\theta}\sin\theta\partial_{\theta}
-\tfrac{m^2+2ms\cos\theta+s^2}{\sin^2\theta}
\end{align}
on $L^2(\sin\theta \di \theta)$. These eigenfunctions, called as \emph{spin-weighted spherical harmonics}, form a complete orthonormal basis on $L^2(\sin\theta \di \theta)$ and have eigenvalues $-\Lambda_{m\ell}=-\ell(\ell+1)$ \footnote{Note that in Schwarzschild case, $\Lambda_{m\ell}=A+s+s^2$, with $A$ being the separation constant in \cite{Teu1972PRLseparability}.} defined by
\begin{equation}
\textbf{S}_mY_{m\ell}^{s}(\cos\theta)=-\Lambda_{m\ell}
Y_{m\ell}^{s}(\cos\theta).
\end{equation}
An integration by parts, together with a usage of Plancherel lemma and the orthonormality property of the basis $\left\{Y_{m\ell}^{s}(\cos\theta)e^{im\phi}\right\}_{m\ell}$, gives
\begin{align}\label{eq:IdenOfEigenvaluesAndAnguDeriSchw}
\sum_{m,\ell}\ell(\ell+1)\left|\varphi^1_{m\ell}(t,r)\right|^2&
=\int_{0}^{\pi}\int_{0}^{2\pi}\left|\nablaslash \varphi^1(t,r)\right|^2r^2\sin\theta \di \phi \di \theta.
\end{align}
\begin{remark}
\label{rem:EigenvalueSpinWeightedAngular}
This actually shows that the eigenvalues of the operator
\begin{align}
\tfrac{1}{\sin{\theta}} \partial_{\theta}(\sin \theta \partial_{\theta})
+\tfrac{1}{\sin^2\theta}\partial_{\phi\phi}^2
+\tfrac{2is\cos\theta}{\sin^2 \theta}\partial_{\phi}-s^2(\sin \theta)^{-2}
\end{align}
are not greater than $-s^2-|s|$, and for a scalar $\varphi$ with spin weight $s$,
\begin{align}
\label{eq:LowerBdfornablaslash}
\int_{0}^{\pi}\int_{0}^{2\pi}\left|\nablaslash\varphi\right|^2r^2\sin\theta \di \phi \di \theta \geq {}(s^2 +|s|) \int_{0}^{\pi}\int_{0}^{2\pi}\left|\varphi\right|^2\sin\theta \di \phi \di \theta.
\end{align}
Moreover, let
$\varphi_m(t,r,\theta)=\frac{1}{\sqrt{2\pi}}\int_0^{2\pi}e^{-im\phi}
\varphi \di \phi$, then we have
\begin{align}\label{eq:AnEstiForEigenvalueKerr}
\int_{\mathbb{S}^2}\left|r\nablaslash\varphi\right|^2
 \di \sigma_{\mathbb{S}^2}
\geq {}\int_0^{\pi}\sum_{m\in \mathbb{Z}}\max\{s^2+|s|,m^2+|m|\}|\varphi_{m}|^2\sin\theta \di \theta,
\end{align}
\end{remark}

The equation for $\varphi^1_{m\ell}$ is now
\begin{align}\label{eq:SWRWReducedSchw}
r^4\Delta^{-1}
\partial_{tt}^2\varphi^1_{m\ell}-\partial_r(\Delta\partial_r)\varphi^1_{m\ell}
+\ell(\ell+1)\varphi^1_{m\ell}-8M/r\varphi^1_{m\ell}+G^1_{m\ell}=0.
\end{align}
In the case that the inhomogeneous term $G^1=0$, this is exactly the equation one obtains after decomposing into spherical harmonics the solution to the classical Regge--Wheeler equation \cite{ReggeWheeler1957} on Schwarzschild:
 \begin{equation}\label{eq:RWSchw}
\Sigma \Box_g u + \tfrac{8M}{r} u=0.
\end{equation}

The equation \eqref{eq:RWphi02schwope}, while, is in a form of an ISWWE with another potential:
\begin{equation}\label{eq:SWRWphi0generalformSchw}
\mathbf{L}_s^0\varphi^0=
\Sigma\Box_g\varphi^0+\tfrac{2is\cos\theta}{\sin^2 \theta}\partial_{\phi}\varphi^0-s^2\left(\cot^2\theta +\tfrac{r+2M}{2r}\right)\varphi^0=G^0.
\end{equation}
After the decomposition into spin-weighted spherical harmonics as above, the equation for $\varphi^0_{m\ell}$ reads
\begin{align}
\label{eq:SWRW2ReducedSchw}
r^4\Delta^{-1}
\partial_{tt}^2\varphi^0_{m\ell}-\partial_r(\Delta\partial_r)\varphi^0_{m\ell}
+\ell(\ell+1)\varphi^0_{m\ell}-(2-4M/r)\varphi^0_{m\ell}+G^0_{m\ell}=0.
\end{align}
Identity \eqref{eq:IdenOfEigenvaluesAndAnguDeriSchw} holds for $\varphi^0$ as well.

We now consider the general form of equations \eqref{eq:SWRWReducedSchw} and \eqref{eq:SWRW2ReducedSchw}:
\begin{equation}\label{eq:SWRWmodeSchwGeneral}
r^4\Delta^{-1}
\partial_{tt}^2\varphi-\partial_r(\Delta\partial_r)\varphi
+\ell(\ell+1)\varphi+V(r)\varphi+G=0,
\end{equation}
with the potential
\begin{align}\label{eq:potentialindifferenteqsSchwS2}
V(r)=
\left\{
  \begin{array}{ll}
    -8M/r, & \quad \text{for\ \eqref{eq:SWRWReducedSchw}},  \\
-2+4M/r, & \quad \text{for\ \eqref{eq:SWRW2ReducedSchw}}.
  \end{array}
  \right.
\end{align}

\subsection{Energy Estimate}

Multiplying \eqref{eq:SWRWmodeSchwGeneral} by $T\overline{\varphi}=\partial_{t}\overline{\varphi}$ and taking the real part, we arrive at an identity:
\begin{align}
\hspace{4ex}&\hspace{-4ex}\half\partial_t\left( \tfrac{r^4}{\Delta}|\partial_t \varphi|^2+\Delta|\partial_r\varphi|^2
+ \ell(\ell+1)|\varphi|^2+ V|\varphi|^2\right)
-\partial_r\left(\Re(\Delta\partial_r\varphi
\partial_t\overline{\varphi})\right)\notag\\
={}&-\Re(G\partial_{t}\overline{\varphi}).
\end{align}
Since $\ell\geq |s|=2$ and $\ell(\ell+1)\geq 6$, the inequality
 \begin{equation}
\ell(\ell+1)+V(r)\geq \tfrac{1}{3} \ell(\ell+1)
\end{equation}
is valid for both potentials in \eqref{eq:potentialindifferenteqsSchwS2}.
Summing over $m$ and $\ell$, applying the identity \eqref{eq:IdenOfEigenvaluesAndAnguDeriSchw} for $\varphi^1$ and $\varphi^0$, and finally integrating with respect to the measure $\di t^*\di r$ over $\{(t^*,r)|0\leq t^*\leq \tau, 2M\leq r<\infty\}$, we have the following energy estimate for $\varphi^i$  $(i=0,1)$:
\begin{align}\label{eq:enerestipartialtSchw}
{E}_{\tau}^T(\varphi^i)\lesssim E_{0}^T(\varphi^i)
+\int_{\mathcal{D}(0,\tau)}r^{-2}\left|
\Re(G^i\partial_t\overline{\varphi^i})\right|.
\end{align}
In global Kerr coordinates, for any $\tau\geq0$,
\begin{equation}
E_{\tau}^T(\varphi^i)\sim
\int_{\Sigma_{\tau}}\bigg(|\partial_{t^*}\varphi^i|^2+|\nablaslash \varphi^i|^2+\frac{\Delta}{r^2}|\partial_r\varphi^i|^2\bigg).
\end{equation}

\subsection{Morawetz Estimate}\label{proof of integrated decay}

In this subsection, following the choices of the multipliers in \cite{larsblue15hidden,Jinhua17LinGraSchw}, we prove the Morawetz estimate for the separated equations \eqref{eq:SWRWReducedSchw} and \eqref{eq:SWRW2ReducedSchw} which are both in the form of \eqref{eq:SWRWmodeSchwGeneral}, and then derive the Morawetz estimate for the ISWWE \eqref{eq:SWRWphi12generalformSchw} and \eqref{eq:SWRWphi0generalformSchw}.

We multiply \eqref{eq:SWRWmodeSchwGeneral} by
\begin{align}
X\bar{\varphi}=\hat{f}\partial_r \bar{\varphi}+\hat{q}\bar{\varphi},
\end{align}
take the real part, and arrive at
\begin{align}\label{eq:MoraEstidegSchw}
\hspace{4ex}&\hspace{-4ex}\partial_t\left(\Re\left(\tfrac{r^4}{\Delta}X(\varphi)
\partial_t\bar{\varphi}\right)\right)
+\tfrac{1}{2}\partial_r\left(\hat{f}\left[\ell(\ell+1)|\varphi|^2
-\tfrac{r^4}{\Delta}|\partial_t\varphi|^2-\Delta |\partial_r\varphi|^2+V|\varphi|^2\right]\right)\notag\\
\hspace{4ex}&\hspace{-4ex}
+\tfrac{1}{2}\partial_r\left(\Re(\partial_r(\Delta\hat{q})|\varphi|^2-2\Delta \hat{q}\bar{\varphi}\partial_r\varphi-2\hat{q}(r-M)|\varphi|^2-r^{-1}B^r(r)|\varphi|^2)\right)
+B(\varphi)\notag\\
={}&-\Re(X\overline{\varphi}{G}).
\end{align}
Here, the bulk term
\begin{align}
B(\varphi)=B^t(r)|\partial_t\varphi|^2+r^{-2}B^r(r)|\partial_r(r\varphi)|^2
+B^{\ell}(r)\left(\ell(\ell+1)|\varphi|^2\right)+B^0(r)|\varphi|^2,
\end{align}
with
\begin{align}
B^t(r)={}&\tfrac{1}{2}\partial_r
(r^4\Delta^{-1}\hat{f})
-\hat{q}r^4\Delta^{-1}\notag\\
B^r(r)={}&\tfrac{1}{2}\partial_r(\Delta\hat{f})-2\hat{f}(r-M)+\Delta \hat{q} \notag\\
B^{\ell}(r)={}&-\tfrac{1}{2}\partial_r(\hat{f})+\hat{q}\notag\\
B^0(r)={}&\partial_r(\hat{q}(r-M))-\tfrac{1}{2}\partial_{rr}^2(\Delta \hat{q})+V\hat{q}-\frac{1}{2} \partial_r(V\hat{f})\notag\\
&+r^2(\partial_r (r^{-3}B^r(r))+ r^{-4}B^r(r)).
\end{align}
By choosing $\hat{f}=2r^{-2}(r-2M)(r-3M)$ and $\hat{q}= r^{-4}(2r-3M)\Delta$,
we find
\begin{align}
B^t(r)=0,\quad
B^r(r)=\tfrac{6M\KDelta^2}{r^4}, \quad
B^{\ell}(r)=\tfrac{2(r-3M)^2}{r^3},
\end{align}
and
\begin{align}
B^0(r)=
\left\{
  \begin{array}{ll}
    -27Mr^{-2}+162M^2r^{-3}-234M^3r^{-4}, & \quad \text{for\ \eqref{eq:SWRWReducedSchw}},  \\
-4r^{-1}+{33}Mr^{-2}-78M^2r^{-3}+54M^3r^{-4}, & \quad \text{for\ \eqref{eq:SWRW2ReducedSchw}}.
  \end{array}
  \right.
\end{align}

We first treat \eqref{eq:SWRWReducedSchw} by calculating $P^1_{s}(r)= B^0(r) + 6 B^{\ell}(r)$ that
\begin{align}
\label{eq:PotentialS2schwHardy}
P^1(r)= {}&3r^{-4}(4r^3-33Mr^2+90M^2r-78M^3).
\end{align}
Since $\ell(\ell+1)\geq 6$, we have then
\begin{align}
B(\varphi)\geq r^{-2}B^r(r)|\partial_r(r\varphi)|^2
+P^1(r)|\varphi|^2.
\end{align}
By calculating the roots of the the third order polynomial $\tfrac{1}{3}r^4P^1(r)$, we find there exists only one real root for this polynomial for $|s|=2$, and this real root is less than $2M$. Hence,  there exists a constant $c>0$ such that for any $r\geq 2M$,
\begin{align}\label{eq:Moraestibulktermcontrol1}
B(\varphi)\geq
c \left(\tfrac{\Delta^2}{r^4} |\p_r\varphi|^2+\tfrac{1}{r} |\varphi|^2+\tfrac{(r-3M)^2}{r^3}\ell(\ell+1)|\varphi|^2\right).
\end{align}

Instead, if we multiply \eqref{eq:SWRWmodeSchwGeneral} by $h\bar{\varphi}$ with $h=-\tfrac{\Delta(r-3M)^2}{r^7}$
and take the real part, the identity \eqref{eq:MoraEstidegSchw} becomes
\begin{align}\label{eq:MoraEstidegSchw2}
\hspace{4ex}&\hspace{-4ex}\tfrac{1}{2}\partial_r\left(\Re(\partial_r(\Delta h)|\varphi|^2-2\Delta h\bar{\varphi}\partial_r\varphi-2h(r-M)|\varphi|^2)\right)
+h\left(\ell(\ell+1)|\varphi|^2\right)
\notag\\
\hspace{4ex}&\hspace{-4ex}
+\partial_t\left(\Re\left(\tfrac{r^4}{\Delta}h\varphi
\partial_t\bar{\varphi}\right)\right)
-h\tfrac{r^4}{\Delta}|\partial_t\varphi|^2+\Delta h|\partial_r\varphi|^2\notag\\
\hspace{4ex}&\hspace{-4ex}
+\left(\partial_r\left(h(r-M)\right)-\half\partial_{rr}^2(\Delta h)+hV\right)|\varphi|^2\notag\\
={}&-\Re(h\varphi\overline{G}).
\end{align}
After integration, this allows us to control the bulk integral of $|\partial_t \varphi|^2$ part by the bulk integral of the RHS of \eqref{eq:Moraestibulktermcontrol1}.
We sum over $m$ and $\ell$ for \eqref{eq:MoraEstidegSchw} and \eqref{eq:MoraEstidegSchw2} with $\varphi=\varphi_{m\ell}^1$ and $G=G_{m\ell}^1$, apply the identity \eqref{eq:IdenOfEigenvaluesAndAnguDeriSchw}, integrate with respect to the measure $\di t^*\di r$ over $\{(t^*,r)|0\leq t^*\leq \tau, 2M\leq r<\infty\}$,
and take \eqref{eq:Moraestibulktermcontrol1} into account, then
we obtain a Morawetz estimate for \eqref{eq:SWRWphi12generalformSchw} in global Kerr coordinates:
\begin{align}\label{Morawetz-0}
\hspace{4ex}&\hspace{-4ex}\int_{\mathcal{D}(0,\tau)}\left(\tfrac{\Delta^2}{r^6} |\p_r\varphi^1|^2+\tfrac{1}{r^4} |\varphi^1|^2+\tfrac{(r-3M)^2}{r^2}\left(\tfrac{1}{r^3}|\p_{t^*}\varphi^1|^2 + \tfrac{1}{r}|\nablaslash\varphi^1|^2\right)\right)\notag\\
 \lesssim {}&
E_{\tau}^T(\varphi^1)+E_{0}^T(\varphi^1)
+\int_{\mathcal{D}(0,\tau)}r^{-2}\left(\left|
\Re\left(X(\varphi^1)\overline{G^1}\right)\right|
+\left|\Re\left(h\varphi^1\overline{G^1}\right)\right|\right).
\end{align}

Turning now to \eqref{eq:SWRW2ReducedSchw}, similarly as above, we calculate
\begin{align}
P^0(r)={}&B^0(r) + 6B^{\ell}(r) \notag\\
={}&r^{-4}(8r^3-39Mr^2+30M^2r+54M^3).
\end{align}
One finds that there is only one real root for the third order polynomial $r^4 P^0(r)$, and this real root is negative, hence $r^4 P^0(r)$ is positive for $r\geq 2M$, which yields
\begin{align}
B(\varphi)\geq {}&r^{-2}B^r(r)|\partial_r(r\varphi)|^2
+P^0(r)|\varphi|^2+B^{\ell}(r)\left(\ell(\ell+1)-6\right)|\varphi|^2\notag\\
\geq {}& c \left(\tfrac{\Delta^2}{r^4} |\p_r\varphi|^2+\tfrac{1}{r} |\varphi|^2+\tfrac{(r-3M)^2}{r^3}\ell(\ell+1)|\varphi|^2\right).
\end{align}
Following the argument above in proving from \eqref{eq:MoraEstidegSchw2} to \eqref{eq:SWRWReducedSchw}, we obtain the following Morawetz estimate for equation \eqref{eq:SWRWphi0generalformSchw} in global Kerr coordinates:
\begin{align}\label{Morawetz-0-psi0}
\hspace{4ex}&\hspace{-4ex}\int_{\mathcal{D}(0,\tau)}\left(\tfrac{\Delta^2}{r^6} |\p_r\varphi^0|^2+\tfrac{1}{r^4} |\varphi^0|^2+\tfrac{(r-3M)^2}{r^2}\left(\tfrac{1}{r^3}|\p_{t^*}\varphi^0|^2 + \tfrac{1}{r}|\nablaslash\varphi^0|^2\right)\right)\notag\\
 \lesssim {}&
E_{\tau}^T(\varphi^1)+E_{0}^T(\varphi^0)
+\int_{\mathcal{D}(0,\tau)}r^{-2}\left(\left|
\Re\left(X(\varphi^0)\overline{G^0}\right)\right|
+\left|\Re\left(h\varphi^0\overline{G^0}\right)\right|\right).
\end{align}

\subsection{Close the Proof of the Estimates \eqref{eq:estiphi02hatphi1kerrS2} and \eqref{eq:MoraEstiFinal(2)KerrRegularpsiBothSpinComp} on Schwarzschild}\label{sect:finishpfSchwS2}

The general idea is to combine the Morawetz estimate \eqref{Morawetz-0} applied to \eqref{eq:RWphi1schwope} and \eqref{eq:RWphi22schwope} (or \eqref{Morawetz-0-psi0} applied to \eqref{eq:RWphi02schwope}), the energy estimate \eqref{eq:enerestipartialtSchw} applied to all subequations of \eqref{eq:RWschwphi012ope}, the Morawetz estimate in large radius region in Propositions \ref{prop:ImprovedMoraEstiLargerSWRWE1}
and \ref{prop:ImprovedMoraEstiLargerSWRWE}, and the red-shift estimate near horizon in Propositions \ref{prop:RedShiftEstiInhomoSWRWE} and
\ref{prop:RedShiftEstiInhomoSWRWEtildephi01}.
We consider spin $\pm 2$ components separately, and estimate the error terms arising from the above mentioned combined estimates.

\subsubsection{Spin $+2$ Component}
Applying the Morawetz estimates \eqref{Morawetz-0} to \eqref{eq:RWphi1schwope} and \eqref{eq:RWphi22schwope}, and \eqref{Morawetz-0-psi0} to \eqref{eq:RWphi02schwope}, and together with the energy estimate \eqref{eq:enerestipartialtSchw}, the Morawetz estimates in large radius region for $r^{4-\delta}\phi^0$ and $r^{2-\delta}\phi^1$ in Proposition \ref{prop:ImprovedMoraEstiLargerSWRWE}, and the red-shift estimates near horizon in Proposition \ref{prop:RedShiftEstiInhomoSWRWE},
we obtain
\begin{align}
\label{eq:enerMoraphi0Schw}
\hspace{4ex}&\hspace{-4ex}{E}_{\tau}(r^{4-\delta}\phi^0)+{E}_{\mathcal{H}^+(0,\tau)}(r^{4-\delta}\phi^0)
+\int_{\mathcal{D}(0,\tau)} \widetilde{\mathbb{M}}_{\text{deg}}(r^{4-\delta}\phi^0)\notag\\
\lesssim {}& {E}_{0}(r^{4-\delta}\phi^0)
+\mathcal{E}_{\text{schw}}(\phi^0),\\
\label{eq:enerMoraphi1Schw}
\hspace{4ex}&\hspace{-4ex}{E}_{\tau}(r^{2-\delta}\phi^1)
+{E}_{\mathcal{H}^+(0,\tau)}(r^{2-\delta}\phi^1)
+\int_{\mathcal{D}(0,\tau)}\widetilde{\mathbb{M}}_{\text{deg}}
(r^{2-\delta}\phi^1)\notag\\
\lesssim {}&
{E}_{0}(r^{2-\delta}\phi^1)
+\mathcal{E}_{\text{schw}}(\phi^1)
+\int_{\mathcal{D}(0,\tau)\cap [R-1,\infty)}r^{-2}{\left|\partial \left(r^{4-\delta}\phi^{0}\right)\right|^2},\\
\label{eq:enerMoraphi2Schw}
\hspace{4ex}&\hspace{-4ex}{E}_{\tau}(\phi^2)+{E}_{\mathcal{H}^+(0,\tau)}(\phi^2)
+\int_{\mathcal{D}(0,\tau)} \mathbb{M}_{\text{deg}}(\phi^2)
\lesssim
{E}_{0}(\phi^2).
\end{align}
The error term $\mathcal{E}_{\text{schw}}(\phi^0)$ is
\begin{align}\label{eq:EstiErrortermphi0Schw}
\hspace{4ex}&\hspace{-4ex}\int_{\mathcal{D}(0,\tau)}r^{-2}\left(\left|
\Re\left(X(\phi^0)\overline{F_{+2}^0}\right)\right|
+\left|\Re\left(h\phi^0\overline{F_{+2}^0}\right)\right|\right)
+\inttau r^{-2}|F_{+2}^0||\partial_t\phi^0|\notag\\
\hspace{4ex}&\hspace{-4ex}
+\int_{\mathcal{D}(0,\tau)\cap\{r\geq R-M\}}\left|\Re\left(F_{+2}^0 X_w\overline{\phi^0}\right)\right|
+\int_{\mathcal{D}(0,\tau)\cap\{r\leq r_1\}}{|\phi^1|^2},
\end{align}
and is bounded by $\eps_0 \inttau \widetilde{\mathbb{M}}(r\phi^0) +\frac{1}{\eps_0}\inttau \frac{|\phi^1|^2}{r^3}$ by the Cauchy--Schwarz inequality, and hence by $\int_{\mathcal{D}(0,\tau)} \left(\eps_0 \widetilde{\mathbb{M}}(r^{4-\delta}\phi^0)
+
\frac{\hat{\eps}_1}{\eps_0} \widetilde{\mathbb{M}}(r\phi^1)+\frac{1}{\eps_0\hat{\eps}_1} \mathbb{M}_{\text{deg}}(\phi^2)\right)+{E}_{0}(r^{4-\delta}\phi^0)$ from the estimate \eqref{eq:estiphi1byphi02final1}.
For the other error term $\mathcal{E}_{\text{schw}}(\phi^1)$, it is bounded by
\begin{align}\label{eq:EstiErrortermphi1Schw}
&
\int_{\mathcal{D}(0,\tau)}r^{-2}\left(\left|
\Re\left(X(\phi^1)\overline{F_{+2}^1}\right)\right|
+\left|\Re\left(h\phi^1\overline{F_{+2}^1}\right)\right|\right)
+\inttau r^{-2}|F_{+2}^1||\partial_t\phi^0|\notag\\
&
+\int_{\mathcal{D}(0,\tau)\cap\{r\geq R-M\}}\Big(\left|\Re(F_{+2}^1 X_w\overline{\phi^1})\right|
+\tfrac{\left|\partial \left(r^{4-\delta}\phi^{0}_{+2}\right)\right|^2}{r^2}\Big)
+\int_{\mathcal{D}(0,\tau)\cap\{r\leq r_1\}}{|F_{+2}^1 |^2},
\end{align}
By applying the Cauchy--Schwarz inequality, we can bound it by
\begin{align}\label{eq:Estierrortermphi1Schw}
C\eps_1\inttau \widetilde{\mathbb{M}}(r^{2-\delta}\phi^1)
+C\eps_1^{-1}\inttau \left(\mathbb{M}_{\text{deg}}(\phi^2)
+r^{-1}|r^{4-\delta}\phi^0|^2\right).
\end{align}
Hence, this completes the proof of \eqref{eq:estiphi02hatphi1kerrS2}.

\subsubsection{Spin $-2$ Component}
The Morawetz estimate \eqref{Morawetz-0} applied to \eqref{eq:RWphi1schwope} and \eqref{eq:RWphi22schwope}, and \eqref{Morawetz-0-psi0} applied to \eqref{eq:RWphi02schwope}, the energy estimate \eqref{eq:enerestipartialtSchw} applied to \eqref{eq:RWschwphi012ope}, the Morawetz estimates in large $r$ region for $\{\phi^i_{-2}\}|_{i=0,1,2}$ in Proposition \ref{prop:ImprovedMoraEstiLargerSWRWE1} and the red-shift estimates near horizon in in Propositions \ref{prop:RedShiftEstiInhomoSWRWE} and
\ref{prop:RedShiftEstiInhomoSWRWEtildephi01} for $\widetilde{\phi^0}$, $\widetilde{\phi^1}$ and ${\phi^2}$ together imply
\begin{subequations}
\label{eq:enerMoraphi012SchwNega}
\begin{align}
\label{eq:enerMoraphi0SchwNega}
&{E}_{\tau}(\widetilde{\phi^0})+{E}_{\mathcal{H}^+(0,\tau)}(\widetilde{\phi^0})
+\int_{\mathcal{D}(0,\tau)} \mathbb{M}_{\text{deg}}(\widetilde{\phi^0})\lesssim {E}_{0}(\widetilde{\phi^0})
+\mathcal{E}_{\text{schw}}(\widetilde{\phi^0}),\\
\label{eq:enerMoraphi1SchwNega}
&{E}_{\tau}(\widetilde{\phi^1})
+{E}_{\mathcal{H}^+(0,\tau)}(\widetilde{\phi^1})
+\int_{\mathcal{D}(0,\tau)}\mathbb{M}_{\text{deg}}(\widetilde{\phi^1})
\lesssim
{E}_{0}(\widetilde{\phi^1})
+\mathcal{E}_{\text{schw}}(\widetilde{\phi^1}),\\
\label{eq:enerMoraphi2SchwNega}
&{E}_{\tau}(\phi^2)+{E}_{\mathcal{H}^+(0,\tau)}(\phi^2)
+\int_{\mathcal{D}(0,\tau)} \mathbb{M}_{\text{deg}}(\phi^2)
\lesssim
{E}_{0}(\phi^2),
\end{align}
\end{subequations}
where
\begin{align}\label{eq:EstiErrortermphi0SchwNega}
&\mathcal{E}_{\text{schw}}(\widetilde{\phi^0})
\lesssim {}\int_{\mathcal{D}(0,\tau)}r^{-2}\left(\left|
\Re\left(X(\phi^0)\overline{F_{-2}^0}\right)\right|
+\left|\Re\left(h\phi^0\overline{F_{-2}^0}\right)\right|
+|F_{-2}^0||\partial_t\phi^0|\right)\notag\\
&\quad \quad
+\int_{\mathcal{D}(0,\tau)\cap\{r\geq R-M\}}|F_{-2}^0| |X_w\overline{\phi^0}|
+\int_{\mathcal{D}(0,\tau)\cap[r_+,r_1]}|\widetilde{\phi^1}|^2,\\
\label{eq:EstiErrortermphi1SchwNega}
&\mathcal{E}_{\text{schw}}(\widetilde{\phi^1})
\lesssim {}\int_{\mathcal{D}(0,\tau)}r^{-2}\left(\left|
\Re\left(X(\phi^1)\overline{F_{-2}^1}\right)\right|
+\left|\Re\left(h\phi^0\overline{F_{-2}^1}\right)\right|
+|F_{-2}^1||\partial_t\phi^1|\right)\notag\\
&\quad \quad +
\int_{\mathcal{D}(0,\tau)\cap\{r\geq R-M\}}|F_{-2}^1| |X_w\overline{\phi^1}|
+\int_{\mathcal{D}(0,\tau)\cap[r_+,r_1]}
\Big(|\widetilde{\phi^0}|^2+|\phi^2|^2\Big).
\end{align}
Substituting the expressions of $F_{-2}^0$ and $F_{-2}^1$ and applying the Cauchy--Schwarz inequality to the RHS of the estimates \eqref{eq:EstiErrortermphi0SchwNega} and \eqref{eq:EstiErrortermphi1SchwNega}, one finds
\begin{align}
\label{eq:errorSchwphi0-2}
&\mathcal{E}_{\text{schw}}(\widetilde{\phi^0})\lesssim \int_{\mathcal{D}(0,\tau)} \Big(\eps_0{\mathbb{M}}(\widetilde{\phi^0})
+{\eps_0^{-1}}
\tfrac{|\widetilde{\phi^1}|^2}{r^3}\Big),\\
\label{eq:errorSchwphi1-2}
&\mathcal{E}_{\text{schw}}(\widetilde{\phi^1})
\lesssim {}\int_{\mathcal{D}(0,\tau)} \Big(\eps_1{\mathbb{M}}(\widetilde{\phi^1})
+{\eps_1^{-1}}\Big(
\mathbb{M}_{\text{deg}}(\phi^2)
+\tfrac{|\widetilde{\phi^0}|^2}{r^2}\Big)\Big).
\end{align}
The estimates \eqref{eq:estiphi02hatphi1kerrS2Nega} follow from the estimates \eqref{eq:enerMoraphi012SchwNega}, applying \eqref{eq:estiphi1byphi02negafinal1} to \eqref{eq:errorSchwphi0-2}, and applying \eqref{eq:estiphi0byphi02negafinal1} to \eqref{eq:errorSchwphi1-2}.

\section{Proof of Theorem \ref{thm:EneAndMorEstiExtremeCompsNoLossDecayVersion2} on Slowly Rotating Kerr}\label{sect:pfMainThmKerrS2}

\subsection{Energy Estimate}\label{sect:EnerEstiPartialtKerr}
We start  by choosing a multiplier $-2\Sigma^{-1}\partial_t \bar{\psi}$ for \eqref{eq:RewrittenFormofSWRWEOpeForm} and obtain a conservation law that for $\tau_2>\tau_1$ that
\begin{equation}\label{eq:SWRWEnergyIdentity}
\int_{\Sigma_{\tau_2}}e^1(\psi)
+\int_{\mathcal{H}^+(\tau_1,\tau_2)}e^1_{\mathcal{H}}(\psi)=
\int_{\Sigma_{\tau_1}}e^1(\psi)
-\int_{\mathcal{D}(\tau_1,\tau_2)}\Re\left(
2\Sigma^{-1}F\partial_t \bar{\psi}\right).
\end{equation}
Here, the energy densities are
\begin{subequations}
\begin{align}\label{eq:SWRWEnergyDensity}
e^1(\psi)={}\di t^*(\partial_t) e^1_t(\psi) +\di t^* (\partial_{r^*})e^1_{r^*}(\psi), \quad e^1_{\mathcal{H}}(\psi)={}\di r(\partial_t) e^1_t(\psi) +\di r(\partial_{r^*})e^1_{r^*}(\psi),
\end{align}
and
\begin{align}
\label{eq:energyt1}
e^1_t(\psi)={}&
\Sigma^{-1}\Big(|\partial_{\theta}\psi|^2
+\Big|\tfrac{\partial_{\phi}\psi+
is\cos\theta\psi}{\sin\theta}\Big|^2
-\tfrac{a^2}{\Delta}|\partial_{\phi}\psi|^2
+\tfrac{s^2(\Delta+a^2)}{r^2} |\psi|^2\Big)\notag\\
&+
\tfrac{(r^2+a^2)^2-a^2\sin^2\theta\Delta}{\Delta\Sigma}
|\partial_t\psi|^2
+\tfrac{(r^2+a^2)^2}{\Delta\Sigma}|\partial_{r^*}\psi|^2,\\
e^1_{r^*}(\psi)={}&
-\tfrac{2(r^2+a^2)^2}{\Sigma\Delta}\Re\left(\partial_t \psi \partial_{r^*}\bar{\psi}\right).
\end{align}
\end{subequations}
Let $\psi_m(t,r,\theta)=
\frac{1}{\sqrt{2\pi}}\int_0^{2\pi}e^{-im\phi}
\psi \di \phi$.
It follows from \eqref{eq:AnEstiForEigenvalueKerr} and \eqref{def:nablaslashModuleSquare} that
\begin{align}\label{eq:SWRGEnergyNonnegative Kerr}
\hspace{4ex}&\hspace{-4ex}\int_{\mathbb{S}^2}\left(|\partial_{\theta}\psi|^2+\left|
\tfrac{\partial_{\phi}\psi+is\cos\theta\psi}{\sin\theta}
\right|^2-\tfrac{a^2}{\Delta}|\partial_{\phi}\psi|^2
+\tfrac{s^2(\Delta+a^2)}{r^2}|\psi|^2\right)\di \sigma_{\mathbb{S}^2}\notag\\
\geq &\int_0^{\pi}\sum_{m\in \mathbb{Z}} \left(\max\{s^2+|s|,m^2+|m|\}-\tfrac{a^2m^2}{\Delta}
+s^2\tfrac{\Delta+a^2-r^2}{r^2}\right)
|\psi_{m}|^2\sin\theta \di \theta.
\end{align}
Denote
\begin{equation}
A_{|m|,|s|}^1=\max\{s^2+|s|,m^2+|m|\}-\tfrac{a^2m^2}{\Delta}
+s^2\tfrac{\Delta+a^2-r^2}{r^2}.
\end{equation}
Recall that $|s|=2$. If $|m|=0$ or $1$, then
\begin{equation}
A_{|m|,2}^1\geq 2-\tfrac{a^2m^2}{\Delta}+\tfrac{4(\Delta+a^2)}{r^2},
\end{equation}
which is positve when $r> 2M$. When $|m|\geq 2$, $A_{|m|,2}^1$ is a monotonically increasing function of $|m|$ in the region $r\geq 2M$. Consider an even restricted region $r>M+ \sqrt{M^2+a^2}$. For $|m|\geq 2$, by monotonicity,
\begin{equation}
A_{|m|,2}^1\geq {}A_{2,2}^1 \geq \tfrac{2\Delta-4a^2}{\Delta}+\tfrac{4(\Delta+a^2)}{r^2},
\end{equation}
and the RHS is positive for $r\geq M+\sqrt{M^2+a^2}$. Therefore, the LHS of \eqref{eq:SWRGEnergyNonnegative Kerr}, and hence the energy density $\int_{\mathbb{S}^2}e^1(\psi)\di \sigma_{\mathbb{S}^2}$, is positive for $r>M+\sqrt{M^2+a^2}$.

One can similarly choose the multiplier $-2\Sigma^{-1}\partial_t \bar{\psi}$ for \eqref{eq:RewrittenFormofISWWEphi0OpeForm} satisfied by $\phi_s^0$, and arrive at an energy identity for any $\tau_2>\tau_1$:
\begin{equation}\label{eq:SWRWEnergyIdentityphi0}
\int_{\Sigma_{\tau_2}}e^0(\psi)
+\int_{\mathcal{H}^+(\tau_1,\tau_2)}e^0_{\mathcal{H}}(\psi)=
\int_{\Sigma_{\tau_1}}e^0(\psi)
-\int_{\mathcal{D}(\tau_1,\tau_2)}\Re\left(
2\Sigma^{-1}F\cdot\partial_t \bar{\psi}\right).
\end{equation}
Here, the energy densities
\begin{subequations}
\begin{align}\label{eq:SWRWEnergyDensityphi0}
e^0(\psi)={}&\di t^*(\partial_t) e^0_t(\psi) +\di t^* (\partial_{r^*})e^0_{r^*}(\psi), & e^0_{\mathcal{H}}(\psi)={}&\di r(\partial_t) e^0_t(\psi) +\di r(\partial_{r^*})e^0_{r^*}(\psi),
\end{align}
$e^0_{r^*}(\psi)$ is the same as $e^1_{r^*}(\psi)$, and
\begin{align}
e^0_t(\psi)={}&
\Sigma^{-1}\Big(|\partial_{\theta}\psi|^2
+\Big|\tfrac{\partial_{\phi}\psi+
is\cos\theta\psi}{\sin\theta}\Big|^2
-\tfrac{a^2}{\Delta}|\partial_{\phi}\psi|^2
+\tfrac{s^2(r^2+2Mr-2a^2)}{2r^2} |\psi|^2\Big)\notag\\
&+
\tfrac{(r^2+a^2)^2-a^2\sin^2\theta\Delta}{\Delta\Sigma}
|\partial_t\psi|^2+\tfrac{(r^2+a^2)^2}{\Delta\Sigma}|\partial_{r^*}\psi|^2.
\end{align}
\end{subequations}
The estimate \eqref{eq:AnEstiForEigenvalueKerr} gives
\begin{align}\label{eq:SWRGEnergyNonnegative Kerrphi0}
\hspace{4ex}&\hspace{-4ex}\int_{\mathbb{S}^2}\left(|\partial_{\theta}\psi|^2+\left|
\tfrac{\partial_{\phi}\psi+is\cos\theta\psi}{\sin\theta}
\right|^2-\tfrac{a^2}{\Delta}|\partial_{\phi}\psi|^2
+\tfrac{s^2(r^2+2Mr-2a^2)}{2r^2}|\psi|^2\right)\di \sigma_{S^2}\notag\\
\geq {}&\int_0^{\pi}\sum_{m\in \mathbb{Z}} \left(\max\{s^2+|s|,m^2+|m|\}-\tfrac{a^2m^2}{\Delta}
-s^2\tfrac{\Delta+a^2}{2r^2}\right)
|\psi_{m}|^2\sin\theta \di \theta.
\end{align}
Denote
\begin{equation}
A_{|m|,|s|}^0=\max\{|s|(|s|+1),|m|(|m|+1)\}-\tfrac{a^2m^2}{\Delta}
-s^2\tfrac{\Delta+a^2}{2r^2}.
\end{equation}
Recall that $|s|=2$. If $|m|=0$ or $1$,
$A_{|m|,2}^0\geq 2-\tfrac{a^2m^2}{\Delta}+\tfrac{2(r^2+2Mr-2a^2)}{r^2}$
and the RHS is positive when $r> 2M$. Note that $A_{|m|,2}^0$ is also a monotonically increasing function of $|m|$ in the region $r\geq 2M$ when $|m|\geq 2$. Hence, for $|m|\geq 2$, we have in the region $r>M+\sqrt{M^2+a^2}$ that
\begin{equation*}
A_{|m|,2}^0\geq A_{2,2}^0\geq \tfrac{2\Delta-4a^2}{\Delta}+\tfrac{2(r^2+2Mr-2a^2)}{r^2}>0.
\end{equation*}
This implies the energy density $\int_{\mathbb{S}^2}e^0(\psi)\di \sigma_{\mathbb{S}^2}$ is positive for $r>M+\sqrt{M^2+a^2}$.
For sufficiently small $|a|/M$, $M+\sqrt{M^2+a^2}<r_0$, where $r_0>2M$ has been fixed in Section \ref{sect:Redshift} (see also Remark \ref{rem:r0r1choice}). In conclusion, for $r\geq r_0$, the energy densities $\int_{\mathbb{S}^2}e^k(\psi)\di \sigma_{\mathbb{S}^2}$ $(k=0,1)$ above for both \eqref{eq:RewrittenFormofSWRWEOpeForm} and \eqref{eq:RewrittenFormofISWWEphi0OpeForm} are strictly positive and satisfy $\int_{\mathbb{S}^2}e^k(\psi)\geq c\int_{\mathbb{S}^2}|\partial \psi|^2$.

Since the energy densities $\int_{\mathbb{S}^2}e^k(\psi)$ are both nonnegative in the Schwarzschild case $(a=0)$, it holds true for sufficiently small $|a|/M \leq \veps_0\ll 1$,
\begin{subequations}
\label{eq:conservedenergy:nearandhorizon:esti}
\begin{align}
\label{eq:conservedenergy:nearhorizon:esti}
-\int_{\mathbb{S}^2}e^k(\psi)\leq {}&\frac{Ca^2}{M^2} \int_{\mathbb{S}^2}|\partial \psi|^2 \quad \text{for} \  r\in [r_+, r_0],\\
\label{eq:conservedenergy:horizon:esti}
\int_{\mathcal{H}^+(\tau_1,\tau_2)}e^k_{\mathcal{H}}(\psi)\gtrsim {}& -\frac{a^2}{M^2}{E}_{\mathcal{H}^+(\tau_1,\tau_2)}(\psi).
\end{align}
\end{subequations}
These can also be seen in another way by taking $k=1$ in \eqref{eq:conservedenergy:nearhorizon:esti} as an example. We consider only $r\leq M+r_0/2$ since the estimate for $M+r_0/2\leq r\leq r_0$ is manifestly valid. One can write $e_t^1(\psi)$ as
\begin{align}
\label{eq:energyt1:1}
&\Sigma^{-1}\Big(|\partial_{\theta}\psi|^2
+\Big|\tfrac{\partial_{\phi}\psi+
is\cos\theta\psi}{\sin\theta}\Big|^2
+\tfrac{s^2(\Delta+a^2)}{r^2} |\psi|^2+{\Delta}|Y\psi|^2\Big)\notag\\
&-
\tfrac{a^2\sin^2\theta}{\Sigma}
|\partial_t\psi|^2
-\tfrac{2a}{\Sigma}\Re(Y\psi\partial_{\phi}\bar{\psi})
+\tfrac{2(\R)^2}{\Delta\Sigma}\Re(\partial_t{\psi}\partial_{r^*}\bar{\psi}).
\end{align}
From definition \eqref{def:globalkerrcoord} of the global Kerr coordinates, $\di t^*(\partial_t)-\di t^*(\partial_{r^*})= \Delta/(\R)$ for $r\leq M+r_0/2$.
We add an $\Delta/(\R)$ multiple of \eqref{eq:energyt1} to a $1-\Delta/(\R)$ multiple of \eqref{eq:energyt1:1} to obtain a new expression of $e_{t}^1(\psi)$ and then substitute this into \eqref{eq:SWRWEnergyDensity} to obtain
\begin{align}
e^1(\psi)={}&
\tfrac{1}{\Sigma}\Big(|\partial_{\theta}\psi|^2
+\Big|\tfrac{\partial_{\phi}\psi+
is\cos\theta\psi}{\sin\theta}\Big|^2
+\tfrac{s^2(\Delta+a^2)}{r^2} |\psi|^2\Big)
+\tfrac{2Mr{\Delta}|Y\psi|^2}{\Sigma(\R)}
\notag\\
&+\tfrac{1}{\Sigma(\R)}
(((r^2+a^2)^2-a^2\sin^2\theta\Delta)
|\partial_t\psi|^2
+(r^2+a^2)^2|\partial_{r^*}\psi|^2)\notag\\
&-\tfrac{1}{\Sigma(\R)}
(2a^2Mr\sin^2\theta|\partial_t\psi|^2
+a^2|\partial_{\phi}\psi|^2
+4aMr\Re(Y\psi\partial_{\phi}\bar{\psi})).
\end{align}
The estimate \eqref{eq:conservedenergy:nearhorizon:esti} then follows easily.

In summary, from the conservation laws \eqref{eq:SWRWEnergyIdentity} and \eqref{eq:SWRWEnergyIdentityphi0} and the estimates \eqref{eq:conservedenergy:nearandhorizon:esti},  there exist constants $c>0$ and $C>0$ such that the following energy estimate for \eqref{eq:RewrittenFormofISWWEphi0OpeForm} and \eqref{eq:RewrittenFormofSWRWEOpeForm}:
\begin{align}\label{eq:SWRWEEnerEsti}
\hspace{4ex}&\hspace{-4ex}
c\int_{\Sigma_{\tau_2}\cap [r_0,\infty)}|\partial \psi|^2
+c\int_{\mathcal{H}^+(\tau_1,\tau_2)}|\partial_t \psi+\partial_{r^*}\psi|^2\notag\\
\leq{}&
\int_{\Sigma_{\tau_1}}e^k(\psi)
+\bigg|\int_{\mathcal{D}({\tau_1},
\tau_2)}\Re\left(\tfrac{F}{\Sigma} T\bar{\psi}\right)\bigg|\notag\\
&
+\frac{Ca^2}{M^2}\bigg(\int_{\Sigma_{\tau_2}
\cap [r_+,r_0]}|\partial \psi|^2
+{E}_{\mathcal{H}^+(\tau_1,\tau_2)}(\psi)\bigg).
\end{align}
Since it is clear which energy density we are referring to from the equation $\psi$ satisfies, we will from now on suppress the superscript $k$ in the energy density and simply write it as $e(\psi)$.

There exists an $\veps_0\geq 0$ and a nonnegative differentiable function $e_0(\veps_0)\sim \veps_0^2$ with $e_0(0)=0$ such that for all $|a|/M \leq \veps_0$ and any $\tilde{e}\geq e_0$, by adding to this energy estimate $\tilde{e}$ times the red-shift estimate in Proposition \ref{prop:RedShiftEstiInhomoSWRWE} for $\psi\in \{\phi^0_{+2},\phi^1_{+2},\phi^2_{+2},\phi^2_{-2}\}$ and in Proposition \ref{prop:RedShiftEstiInhomoSWRWEtildephi01} for $\widetilde{\phi^0}$ and $\widetilde{\phi^1}$, we obtain the following result analogous to \cite[Proposition 5.3.1]{dafermos2010decay}.
\begin{prop}\label{prop:energyestimateforT+eN}
Let $\psi=\phi_s^i$ and $F=F_s^i$ be as in the linear wave systems \eqref{eq:ReggeWheeler Phi^012KerrS2} and \eqref{eq:ReggeWheeler Phi^012KerrNegaS2}. Define
\begin{align}\label{def:tildepsi}
\tilde{\psi}=
\left\{
  \begin{array}{ll}
   \widetilde{\phi^j_{-2}}, & \quad \text{if $\psi=\phi^j_{-2}$ $(j=0,1)$;} \\
\psi, & \quad \text{if $\psi=\phi^0_{+2}, \phi^1_{+2},\phi^2_{+2}$ or $\phi^2_{-2}$.}\\
  \end{array}
  \right.
\end{align}
There exists an $\veps_0\geq0$ and a nonnegative differentiable function $e_0(\veps_0)\sim \veps_0^2$ with $e_0(0)=0$ such that for all $|a|/M \leq \veps_0$ and any $\tilde{e}\geq e_0$,
it holds true that
\begin{align}\label{eq:SWRWEEnerEstiII}
\hspace{4ex}&\hspace{-4ex}\int_{\Sigma_{\tau_2}}|e_{\tau_2}(\tilde{\psi})|
+\int_{\mathcal{H}^+(\tau_1,\tau_2)}|\partial_t \psi+\partial_{r^*}\psi|^2
+\tilde{e}E_{\tau_2}(\tilde{\psi})
+\tilde{e}{E}_{\mathcal{H}^+(\tau_1,\tau_2)}(\tilde{\psi})\notag\\
\lesssim {}&
\int_{\Sigma_{\tau_1}}|e_{\tau_1}(\tilde{\psi})|
+\tilde{e}E_{\tau_1}(\tilde{\psi})
+\tilde{e}\int_{\mathcal{D}(\tau_1,\tau_2)\cap[r_0,r_1]}|\partial\tilde{\psi}|^2
+\mathcal{E}_{F,\tilde{e},\tau_1,\tau_2}.
\end{align}
Here, for any $\tau_2>\tau_1\geq0$,
\begin{align}\label{def:bulktermtotalS2Kerr}
\mathcal{E}_{F,\tilde{e},\tau_1,\tau_2}={}&
\tilde{e}
\int_{\mathcal{D}({\tau_1},\tau_2)\cap[r_+,r_1]}
\mathbb{B}(\tilde{\psi},F)+\bigg|\int_{\mathcal{D}({\tau_1},
\tau_2)}\Re\left(\Sigma^{-1} F T\bar{\psi}\right)\bigg|,
\end{align}
and
\begin{align}
\mathbb{B}(\tilde{\psi},F)=
\left\{
  \begin{array}{ll}
    |F|^2, & \quad \text{for $\tilde{\psi}=\phi^1_{+2},\phi^2_{+2}$ or $\phi^2_{-2}$;} \\
|\phi_{+2}^1|^2, & \quad \text{for $\tilde{\psi}=\phi^0_{+2}$;}\\
|\widetilde{\phi^1}|^2, & \quad \text{for $\tilde{\psi}=\widetilde{\phi^0_{-2}}$;}\\
|\phi^2_{-2}|^2+ |\widetilde{\phi^0}|^2 +\tfrac{|a|}{M}|\partial \widetilde{\phi^0}|^2, & \quad \text{for $\tilde{\psi}=\widetilde{\phi^1_{-2}}$.}
  \end{array}
  \right.
\end{align}
\end{prop}

We here state a finite in time energy estimate for the ISWWE \eqref{eq:RewrittenFormofISWWEphi0OpeForm} and \eqref{eq:RewrittenFormofSWRWEOpeForm} based on the above discussions, which is an analogue of \cite[Proposition 5.3.2]{dafermos2010decay}.
\begin{prop}\label{prop:FiniteInTimeEnergyEstimateInhomoSWRWE}
\textbf{(Finite in time energy estimate)}.
Given an arbitrary $\epsilon>0$, there exists an $\veps_0>0$ depending on $\epsilon$ and a universal constant $C$ such that for $|a|/M\leq \veps_0$, $1\geq \tilde{e}\geq e_0(\veps_0)\sim \veps_0^2 $ and for any $\tau_1\geq 0$ and $\tau_1< \tau_2\leq \tau_1+\epsilon^{-1}$, the following results hold true:
For $\psi=\phi_s^i$ $(i=0,1,2)$, $\tilde{\psi}$ defined as in \eqref{def:tildepsi} and the corresponding inhomogeneous function $F=F_s^i$ in the linear wave systems \eqref{eq:ReggeWheeler Phi^012KerrS2} and \eqref{eq:ReggeWheeler Phi^012KerrNegaS2}, we have
\begin{align}\label{eq:FiniteInTimeEnergyEstimateInhomoSWRWETo T+eN}
\hspace{4ex}&\hspace{-4ex}\int_{\Sigma_{\tau_2}}|e(\tilde{\psi})|
+\tilde{e}E_{\tau_2}(\tilde{\psi})
+\tilde{e}{E}_{\mathcal{H}^+(\tau_1,\tau_2)}(\tilde{\psi})\notag\\
\leq  {} & (1+C\tilde{e})\bigg(\int_{\Sigma_{\tau_1}}|e(\tilde{\psi})|
+\tilde{e}E_{\tau_2}^{\text{total}}(s)\bigg)+C\mathcal{E}_{F,\tilde{e},\tau_1,\tau_2},
\end{align}
and, depending on the spin weight $s=\pm2$ of $\tilde{\psi}$,
\begin{align}\label{eq:FiniteInTimeEnergyEstimateInhomoSWRWEIINegaPosiS2}
\int_{\mathcal{D}({\tau_1},{\tau_2})\cap[r_0,r_1]}|\partial \tilde{\psi}|^2\leq CE_{\tau_1}^{\text{total}}(s).
\end{align}
Here, $\mathcal{E}_{F,\tilde{e},\tau_1,\tau_2}$ is already defined in \eqref{def:bulktermtotalS2Kerr} and, for any $\tau\geq 0$,
\begin{align}\label{def:EnergynormtotalS2Kerr}
E_{\tau}^{\text{total}}(s)=
\left\{
  \begin{array}{ll}
   {E}_{\tau}(r^{4-\delta}\phi^0_{+2})
   +{E}_{\tau}(r^{2-\delta}\phi^1_{+2})+{E}_{\tau}(\phi^2_{+2}), & \quad \text{for $s=+2$;} \\
{E}_{\tau}(\widetilde{\phi^0_{-2}})
+{E}_{\tau}(\widetilde{\phi^1_{-2}})
+{E}_{\tau}(\phi^2_{-2}), & \quad \text{for $s=-2$.}
  \end{array}
  \right.
\end{align}
\end{prop}
\begin{proof}
The first estimate follows from a combination of the previous prop with the second estimate \eqref{eq:FiniteInTimeEnergyEstimateInhomoSWRWEIINegaPosiS2}. The second estimate follows from the fact that it holds true for Schwarzschild case for all $\epsilon$ from the discussions in Sections \ref{sect:pfMainthmSchwS2} and \ref{sect:outlineproof} and the well-posedness property in Section \ref{sect:LWPandGlobalExistenceLinearWaveSystem} applied to the linear wave system of $\{r^{4-\delta}\phi_{+2}^0, r^{2-\delta}\phi_{+2}^1, \phi_{+2}^2\}$
or $\{\widetilde{\phi_{-2}^0},\widetilde{\phi_{-2}^1},\phi_{-2}^2\}$.
\qed
\end{proof}

\subsection{Separated Angular and Radial Equations}\label{sect:SeparateAngAndRadialEqs}
We consider in this subsection only the operators $\mathbf{L}_s^i$ acting on hypothetical \emph{integrable}\footnote{A solution to \eqref{eq:RewrittenFormofISWWEphi0OpeForm} or \eqref{eq:RewrittenFormofSWRWEOpeForm} is \emph{integrable} if for every integer $n\geq 0$, every multi-index $0\leq |i|\leq n$ and any $r'>r_+$, we have
\begin{align}\label{eq:integrabledefinition}
\sum_{0\leq |i|\leq n}\int_{\mathcal{D}(-\infty,\infty)\cap \{r=r'\}}(|\partial^i \psi|^2+|\partial^i F|^2)<\infty.
\end{align}} functions $\psi$ solving either \eqref{eq:RewrittenFormofISWWEphi0OpeForm} or
\eqref{eq:RewrittenFormofSWRWEOpeForm}, and in the next subsection, a cutoff in time will be applied to the solutions to each subequation in systems \eqref{eq:ReggeWheeler Phi^012KerrS2} and \eqref{eq:ReggeWheeler Phi^012KerrNegaS2} so as to ensure the integrability condition and allow for the separation introduced below.

If the solution $\psi$ to the equation \eqref{eq:RewrittenFormofSWRWEOpeForm} is {integrable}, we can write
\begin{align}\label{eq:fourierintime}
\psi=\frac{1}{\sqrt{2\pi}}\int_{-\infty}^{\infty}e^{-i\omega t}\psi_{\omega}(r,\theta,\phi)\di \omega,
\end{align}
where $\psi_{\omega}$ is defined as the Fourier transform of $\psi$:
\begin{equation}
\psi_{\omega}=\frac{1}{\sqrt{2\pi}}\int_{-\infty}^{\infty}e^{i\omega t}\psi(t,r,\theta,\phi)\di t.
\end{equation}
The equality \eqref{eq:fourierintime} can be interpreted in $L^2(\sin \theta \di \theta \di \phi \di t)$.
We further decompose $\psi_{\omega}$ in $L^2(\sin\theta \di \theta \di \phi)$ into
\begin{equation}
\psi_{\omega}=\sum_{m,\ell}\psi_{m\ell}^{(a\omega)} (r)Y_{ m\ell}^{s}(a\omega, \cos\theta)e^{im\phi}, \ \ m\in \mathbb{Z}.
\end{equation}
Here, for each $m$, $\{Y_{ m\ell}^{s}(a\omega, \cos\theta)\}_{\ell}$, with $\min{\{\ell\}}=\max\{|m|,|s|\}$, are the eigenfunctions of the self-adjoint operator
\begin{align}\label{eq:SWSHOpe Kerr}
\textbf{S}_m=\tfrac{1}{\sin\theta}\partial_{\theta}\sin\theta\partial_{\theta}
-\tfrac{m^2+2ms\cos \theta+s^2}{\sin^2 \theta}+a^2\omega^2 \cos^2 \theta-2a\omega s\cos \theta
\end{align}
on $L^2(\sin\theta \di \theta)$. These eigenfunctions, called \textquotedblleft{\emph{spin-weighted spheroidal harmonics},\textquotedblright}  form a complete orthonormal basis on $L^2(\sin\theta \di \theta)$ and have eigenvalues $\Lambda_{m\ell}^{(a\omega)}$ defined by
\begin{equation}\label{eq:SWSHOpeEq Kerr}
\textbf{S}_mY_{ m\ell}^{s}(a\omega, \cos\theta)=-\Lambda_{m\ell}^{(a\omega)}
Y_{ m\ell}^{s}(a\omega, \cos\theta).
\end{equation}
One could similarly define $F_{\omega}$ and $F_{m\ell}^{(a\omega)}$.

An integration by parts, together with a usage of the Plancherel lemma and the orthonormality property of the basis $\{Y_{ m\ell}^{s}(a\omega, \cos\theta)e^{im\phi}\}_{m\ell}$, gives
\begin{align}\label{eq:IdenOfEigenvaluesAndAnguDeriKerr1}
\hspace{4ex}&\hspace{-4ex}\int_{-\infty}^{+\infty}\sum_{m,\ell}\Lambda_{m\ell}^{(a\omega)}
|\psi_{m\ell}^{(a\omega)}|^2\di \omega\notag\\
=& \int_{-\infty}^{\infty}\int_{\mathbb{S}^2}
\Big\{|\partial_{\theta}\psi|^2+\left|\tfrac{\partial_{\phi}\psi+
is\cos\theta\psi}{\sin\theta}\right|^2
-|a\cos\theta\partial_t\psi+is\psi|^2
+2s^2|\psi|^2\Big\}\di \sigma_{\mathbb{S}^2}\di t.
\end{align}
The radial equation for $\psi_{m\ell}^{(a\omega)}$ is then
\begin{align}\label{eq:SWRWRadialEqKerr}
\Big\{\partial_r(\Delta\partial_r)
+(\underline{V})_{m\ell,1}^{(a\omega)}(r)\Big\}\psi_{m\ell}^{(a\omega)}
=F_{m\ell}^{(a\omega)},
\end{align}
with the potential
\begin{equation}
(\underline{V})_{m\ell,1}^{(a\omega)}(r)
=\tfrac{(r^2+a^2)^2\omega^2+a^2m^2
-4aMrm\omega}{\Delta}-\left(\lambda_{m\ell,1}^{(a\omega)}(r)
+a^2\omega^2\right).
\end{equation}
We utilized here a substitution of
\begin{equation}\label{eq:defOflambda}
\lambda_{m\ell,1}^{(a\omega)}(r)=\Lambda_{m\ell}^{(a\omega)}-\tfrac{s^2(2Mr-2a^2)}{r^2},
\end{equation}
by which the above radial equation \eqref{eq:SWRWRadialEqKerr} is the same as the radial equation \cite[Equation (33)]{dafermos2010decay}\footnote{The authors in \cite{dafermos2010decay} missed one term $-4aMrm\omega$ in the Equation $(33)$, but what is used thereafter is the Schr\"{o}dinger equation $(34)$ in Section $9$ which is correct. Therefore, the validity of the proof will not be influenced by the missing term.} for the scalar field.

One can obtain for \eqref{eq:RewrittenFormofISWWEphi0OpeForm} the same angular equation and the following radial equation after decomposition:
The radial equation for $\psi_{m\ell}^{(a\omega)}$ is
\begin{align}\label{eq:SWRWRadialEqKerrphi0}
\Big\{\partial_r(\Delta\partial_r)
+(\underline{V})_{m\ell,0}^{(a\omega)}(r)\Big\}\psi_{m\ell}^{(a\omega)}
=F_{m\ell}^{(a\omega)},
\end{align}
with the potential
\begin{equation}
(\underline{V})_{m\ell,0}^{(a\omega)}(r)
=\tfrac{(r^2+a^2)^2\omega^2+a^2m^2
-4aMrm\omega}{\Delta}-\left(\lambda_{m\ell,0}^{(a\omega)}(r)
+a^2\omega^2\right)
\end{equation}
and a substitution of
\begin{equation}\label{eq:defOflambdaphi0}
\lambda_{m\ell,0}^{(a\omega)}(r)=\Lambda_{m\ell}^{(a\omega)}
-\tfrac{s^2(\Delta+a^2)}{2r^2}.
\end{equation}
The above two radial equations can now be put into the following form
\begin{align}\label{eq:SWRWRadialEqKerrgeneral}
\Big\{\partial_r(\Delta\partial_r)
+(\underline{V})_{m\ell}^{(a\omega)}(r)\Big\}\psi_{m\ell}^{(a\omega)}
=F_{m\ell}^{(a\omega)},
\end{align}
with
\begin{equation}
(\underline{V})_{m\ell}^{(a\omega)}(r)
=\tfrac{(r^2+a^2)^2\omega^2+a^2m^2
-4aMrm\omega}{\Delta}-\left(\lambda_{m\ell}^{(a\omega)}(r)
+a^2\omega^2\right).
\end{equation}
Here, $\lambda_{m\ell}^{(a\omega)}(r)=\lambda_{m\ell,k}^{(a\omega)}(r)$ and $(\underline{V})_{m\ell}^{(a\omega)}(r)=(\underline{V})_{m\ell,k}^{(a\omega)}(r)$, and the value of $k=0,1$ depends on which of the two equations \eqref{eq:RewrittenFormofISWWEphi0OpeForm} and \eqref{eq:RewrittenFormofSWRWEOpeForm} $\psi$ satisfies.

We state here some basic identities for any $r>r_+$ from properties of Fourier transform and Plancherel lemma:
\begin{align*}
\int_{-\infty}^{\infty}\int_{0}^{2\pi}\int_{0}^{\pi}|\psi(t,r,
\theta,\phi)|^2\sin\theta \di \theta \di \phi \di t ={}&\int_{-\infty}^{\infty}\sum_{m,\ell}\left|\psi_{m\ell}^{(a\omega)}(r)
\right|^2\di \omega,\notag\\
\int_{-\infty}^{\infty}\int_{0}^{2\pi}\int_{0}^{\pi}\left|
\partial_{r^*} \psi(t,r,
\theta,\phi)\right|^2\sin\theta \di \theta \di \phi \di t ={}&\int_{-\infty}^{\infty}\sum_{m,\ell}\left|\partial_{r^*}\psi_{m\ell}^{(a\omega)}(r)
\right|^2\di \omega,\notag\\
\int_{-\infty}^{\infty}\int_{0}^{2\pi}\int_{0}^{\pi}\left|
\partial_t\psi(t,r,\theta,\phi)\right|^2\sin\theta \di \theta \di \phi \di t ={}&\int_{-\infty}^{\infty}\sum_{m,\ell}\omega^2\left|\psi_{m\ell}^{(a\omega)}(r)
\right|^2\di \omega.
\end{align*}

\subsection{Frequency Localised Multiplier Estimates}\label{sect:MoraEstiCurrentsKerr}

\subsubsection{Cutoff in Time}\label{sect:cutoffintime}
To justify the separation procedures in Section \ref{sect:SeparateAngAndRadialEqs}, the assumption that the solution $\psi(t,r,\theta,\phi)$ is integrable is sufficient, but this is \textit{a priori} unknown. Therefore, we apply cutoffs to the solution both to the future and to the past, and then do separation for the wave equation which the gained function satisfies.

Let $\chi_2(x)$ be a smooth cutoff function which equals $0$ for $x\leq 0$ and is identically $1$ when $x\geq 1$. Let $\tau> 2\varepsilon^{-1}$ be arbitrary with $\varepsilon>0$ to be chosen. Define
\begin{equation}
\chi=\chi_{\tau,\varepsilon}(t^*)=\chi_2(\varepsilon t^*)\chi_2(\varepsilon(\tau-t^*))
\end{equation}
and
\begin{equation}
\psi_{\chi}=\chi\psi
\end{equation}
in coordinate system $(t^*,r,\theta,\phi^*)$. The cutoff function $\psi_{\chi}$ is now a smooth function supported in $0\leq t^*\leq \tau$, and $\psi_{\chi}=\psi$ in $\varepsilon^{-1}\leq t^*\leq \tau -\varepsilon^{-1}$. Moreover, it satisfies the following wave equation
\begin{align}\label{eq:SWRWE CutoffVersion}
\mathbf{L}^k_s\psi_{\chi}&=\chi F+\Sigma\left(2\nabla^{\mu}\chi\nabla_{\mu}\psi
+\left(\Box_g\chi\right)\psi\right)-2isa\cos\theta\partial_t\chi\psi \notag\\
&\triangleq F_{\chi} .
\end{align}
The fact that the afore-defined $\chi$ is $\phi$-independent is utilized here in deriving this equation.

Note the fact that the functions $\psi_{\chi}$ and $F_{\chi}$ are compactly supported in $0\leq t^*\leq \tau$ at each fixed $r> r_+$, and the assumption that $\psi$ is a compactly supported smooth section solving one subequation of a linear wave system, hence $\psi_{\chi}$ is an integrable solution to \eqref{eq:SWRWE CutoffVersion} from Section \ref{sect:LWPandGlobalExistenceLinearWaveSystem}. In the following discussions, we apply the mode decompositions in Section \ref{sect:SeparateAngAndRadialEqs} to $\psi_{\chi}$ and $F_{\chi}$, and separate the wave equation \eqref{eq:SWRWE CutoffVersion} into the angular equation \eqref{eq:SWSHOpeEq Kerr} and radial equation \eqref{eq:SWRWRadialEqKerrgeneral}, but with $R_{m\ell}^{(a\omega)}=(\psi_{\chi})_{m\ell}^{(a\omega)}$ and $\left(F_{\chi}\right)_{m\ell}^{(a\omega)}$ in place of $\psi_{m\ell}^{(a\omega)}$ and $F_{m\ell}^{(a\omega)}$ respectively.

Before introducing the microlocal currents, we give some estimates for the inhomogeneous term $F_{\chi}$ here. Due to the fact that $\nabla \chi$ and $\Box_g \chi$ are supported in
\begin{equation}
\left\{0\leq t^* \leq \varepsilon^{-1}\right\}\cup
\left\{\tau -\varepsilon^{-1}\leq t^*\leq \tau\right\},
\end{equation}
one obtains in the coordinate system $\left(t^*, r,\theta,\phi^*\right)$ that
\begin{subequations}\label{eq:PropertyofCutoffchi}
\begin{align}
|\partial_{t^*}\chi|\leq C\varepsilon, \quad &\left|\Box_g\chi\right|\leq C\varepsilon^2,\\
\left|\nabla^{\mu}\chi \nabla_{\mu}\psi\right|^2
+\left|\tfrac{ias\cos\theta\partial_t\chi\psi}{\Sigma}\right|^2&\leq C\varepsilon^2\left(|\partial\psi|^2
+a^2 M^{-2}\left|r^{-1}{\psi}\right|^2\right).
\end{align}
\end{subequations}

\subsubsection{Currents in Phase Space}
In what follows, we will suppress the dependence on $a$, $\omega$, $m$ and $\ell$ of the functions $R_{m\ell}^{(a\omega)}(r)$, $F_{m\ell}^{(a\omega)}(r)$, $\Lambda_{m\ell}^{(a\omega)}$, $\lambda_{m\ell}^{(a\omega)}(r)$, $(\underline{V})_{m\ell}^{(a\omega)}(r)$ and other functions defined by them. When there is no confusion, the dependence on $r$ may always be implicit (except for the radial part $R(r)$ to avoid misunderstanding with the radius parameter $R$). Thus we will write them as $R(r)$, $F$, $\Lambda$, $\lambda$ and $\underline{V}$.

We transform the radial equation \eqref{eq:SWRWRadialEqKerrgeneral} into a Schr\"{o}dinger form, which will be of great use to define the microlocal currents below, by setting
\begin{equation}
\label{def:u(r)andH(r)}
u(r)=\sqrt{r^2+a^2}R(r), \ \ \ \
H(r)=\tfrac{\Delta F_{\chi}(r)}{\left(r^2+a^2\right)^{3/2}}.
\end{equation}
The Schr\"{o}dinger equation for $u(r)$ reads after some calculations
\begin{equation}\label{eq:eqofu}
u(r)''+\left(\omega^2-V(r)\right)u(r)=H(r),
\end{equation}
where
\begin{align}\label{eq:SepaRadiSchroFormPoential}
V= &\omega^2-\tfrac{\KDelta}{(r^2+a^2)^2} \underline{V} + \tfrac{1}{(r^2+a^2)} \tfrac{d^2}{dr^{*2}} (r^2+a^2)^{1/2}\notag\\
=&\tfrac{4Mram\omega-a^2m^2+\Delta\left(\lambda+a^2\omega^2
\right)}{(r^2+a^2)^2}+\tfrac{\Delta}{(r^2+a^2)^4} \left( a^2\Delta + 2Mr(r^2-a^2) \right ),
\end{align}
and a prime $'$ denotes a partial derivative with respect to $r^*$ in tortoise coordinates.

Given any real, smooth functions $y,h$ and $f$, define the microlocal currents
\begin{subequations}\label{eq:currents}
\begin{align}
\label{eq:currenty(3)}
  Q^y&=y\left(\left|u'\right|^2+\left(\omega^2-V\right)|u|^2\right),\\
\label{eq:currenth(3)modified}
  Q^h&=h\Re\left(u'\overline{u}\right)-\tfrac{1}{2}h'|u|^2,\\
Q^f&=Q^{h=f'}+Q^{y=f}=f'\Re\left(u'\overline{u}\right)
-\left(\tfrac{1}{2}f''-f \left(\omega^2-V\right)\right)|u|^2+f\left|u'\right|^2.
\label{eq:currentf(3)}
\end{align}
\end{subequations}
The currents $Q^y$ and $Q^h$ are constructed via multiplying the equation \eqref{eq:eqofu} by $2y\bar{u}'$ and $h\bar{u}$ respectively.
The derivatives of the above currents are
\begin{subequations}\label{eq:currentsderi}
\begin{align}
\left(Q^y\right)'&=y'\left(\left|u'\right|^2+
\left(\omega^2-V\right)|u|^2\right)-yV'|u|^2
+2y\Re\left(u'\overline{H}\right),
\label{eq:Qyderi}\\
\left(Q^h\right)'&=h\left(\left|u'\right|^2+
\left(V-\omega^2\right)|u|^2\right)-\tfrac{1}{2}h''|u|^2
+h\Re\left(u\overline{H}\right),
\label{eq:Qhderi}\\
\left(Q^f\right)'&=2f'\left|u'\right|^2-fV'|u|^2
-\tfrac{1}{2}f'''|u|^2+
\Re\left(2fu'\overline{H}+f'u\overline{H}\right).
\label{eq:Qfderi}
\end{align}
\end{subequations}

\subsubsection{Frequency Regimes}

We start to define the separated frequency regimes.  Let $\omega_l$ and $\lambda_l$ be (potentially large) parameters and $\lambda_s$ be a (potentially small) parameter, all to be determined in the proof below. The frequency space is divided into
\begin{itemize}
\item $\mathcal{F}_{T}=\left\{(\omega, m, \ell): |\omega|\geq \omega_l, \lambda<\lambda_s\omega^2\right\}$;
\item $\mathcal{F}_{Tr}=\left\{(\omega, m, \ell): |\omega|\geq \omega_l, \lambda\geq\lambda_s\omega^2\right\}$;
\item $\mathcal{F}_{A}=\left\{(\omega, m, \ell): |\omega|\leq \omega_l, \Lambda>\lambda_l\right\}$;
\item $\mathcal{F}_{B}=\left\{(\omega, m, \ell): |\omega|\leq \omega_l, \Lambda\leq \lambda_l\right\}$.
\end{itemize}
We fix $R_0$ as in Section \ref{sect:MorawetzLarger} and an arbitrary $r_c \in (2M, r_0)$, with $r_0$ fixed in Section \ref{sect:Redshift}.

\begin{lemma}
\label{lem:PotentialDereasingForRadiusLarge}
For all $|a|<M$ and all frequency triplets $(\omega, m, \ell)$,
\begin{align}
 \label{eq:eigenesti:largeradius}
 \Lambda+3a^2\omega^2 \geq {}  \frac{2}{3}\max\{m^2+|m|, s^2+|s|\},
 \end{align}
and there exists a sufficiently large $R_5\geq R_0+M$ and two constants $c=c(M)>0$ and $C=C(M)>0$ such that for all $r\geq R_5$,
\begin{align}
\label{eq:estimatederiVnearinf}
V'(r)<-cr^{-3}(\Lambda+4a^2\omega^2+m^2+1)+Cr^{-3}a^2\omega^2.
\end{align}
\end{lemma}

\begin{proof}
Let
\begin{subequations}\label{eq:decompofpotentialV}
\begin{align}
V={}&V_m+V_e,\\
V_m={}&\tfrac{\Delta(\Lambda+a^2\omega^2)}{(r^2+a^2)^2}.
\end{align}
\end{subequations}
For $V=V_1$ which corresponds to the case $(\underline{V})_{m\ell}^{(a\omega)}(r)=(\underline{V})_{m\ell,1}^{(a\omega)}(r)$, we calculate
\begin{align}
\label{eq:decompofpotentialV1e}
V_e={}&\tfrac{4Mram\omega-a^2m^2}{(r^2+a^2)^2}
-\tfrac{6(Mr-a^2)\Delta}{r^2(r^2+a^2)^2}\notag\\
&
+\tfrac{a^2\Delta}{r^2(r^2+a^2)^4} (3r^4-8Mr^3+5a^2r^2-2a^2Mr+2a^4),
\end{align}
and for $V=V_0$ corresponding to $(\underline{V})_{m\ell}^{(a\omega)}(r)=(\underline{V})_{m\ell,0}^{(a\omega)}(r)$,
\begin{align}\label{eq:decompofpotentialVother}
V_e={}&\tfrac{4Mram\omega-a^2m^2}{(r^2+a^2)^2}
-\tfrac{2(r^2-3Mr+3a^2)\Delta}{r^2(r^2+a^2)^2}\notag\\
&
+\tfrac{a^2\Delta}{r^2(r^2+a^2)^4} (3r^4-8Mr^3+5a^2r^2-2a^2Mr+2a^4).
\end{align}
One finds for all $r\in [r_+,\infty)$,
\begin{align}
\label{eq:derivativeofVm}
V_m'={}&-2\Delta(r^2+a^2)^{-4}(\Lambda+a^2\omega^2)
(r^3-3Mr^2+a^2r+a^2M),\\
\label{eq:generalformofderiofVe}
V_e'\leq{}&4\Delta (r^2+a^2)^{-1}r^{-3}+C\Delta{r^{-2}}\left(r^{-4} |am\omega|+r^{-5}a^2m^2+r^{-4}\right).
\end{align}
One adds $-3a^2\omega^2Y_{ m\ell}^{s}(a\omega, \cos\theta)$ to both sides of \eqref{eq:SWSHOpeEq Kerr} and obtains
\begin{align}
\hspace{4ex}&\hspace{-4ex}(\tfrac{1}{\sin\theta}\partial_{\theta}\sin\theta\partial_{\theta}
-\tfrac{m^2+2ms\cos \theta+s^2}{\sin^2 \theta}+a^2\omega^2 \cos^2 \theta-2a\omega s\cos \theta -3a^2\omega^2)Y_{ m\ell}^{s}\notag\\
={}&-(\Lambda+3a^2\omega^2)
Y_{ m\ell}^{s}.
\end{align}
Rewrite the LHS as
\begin{align}
&\left(\tfrac{1}{\sin\theta}\partial_{\theta}\sin\theta\partial_{\theta}
-\tfrac{m^2+2ms\cos \theta+s^2}{\sin^2 \theta}\right)Y_{ m\ell}^{s}\notag\\
&-\left(a^2\omega^2 \sin^2 \theta+\tfrac{1}{2}(2a\omega  +s\cos \theta)^2 -\tfrac{1}{2}s^2\cos^2\theta\right)Y_{ m\ell}^{s}.
\end{align}
By Remark \ref{rem:EigenvalueSpinWeightedAngular}, the operator at the first line has eigenvalues not greater than $-\max\{m^2+|m|, s^2+|s|\}$, and the eigenvalues of the operator at the second line are less than or equal to $\frac{1}{2}s^2$, hence
\begin{align}
\label{eq:estieigenvaluepositiveplus}
\Lambda + 3a^2\omega^2 \geq \max\{m^2+|m|, s^2+|s|\}-\frac{1}{2}s^2 \geq \frac{2}{3}\max\{m^2+|m|, s^2+|s|\}.
\end{align}
From \eqref{eq:derivativeofVm}, given any $\eps>0$, $V_m'\leq -(2-\eps)r^{-3}(\Lambda+a^2\omega^2)$ for $r\gg 4M$ large.
Together with \eqref{eq:generalformofderiofVe}, this implies
\begin{align}
V'-6r^{-3}a^2\omega^2\leq {}& (-(2-\eps)(\Lambda+3a^2\omega^2)+4)r^{-3}-2r^{-3}a^2\omega^2 \notag\\
& +Cr^{-4}(|am\omega|+a^2m^2+1).
\end{align}
For sufficiently small $\eps$, there exists a $0<c_0<1$ such that the coefficient of the first term on the RHS is negative and bounded above by $-c_0r^{-3}(\Lambda +3a^2\omega^2+m^2+s^2)$. For $r\geq R_5$ large enough, the second line of the RHS is bounded by $\frac{c_0}{2}r^{-3}(a^2\omega^2+m^2+1)$ using the Cauchy--Schwarz inequality, hence this completes the proof.  \qed
\end{proof}

We shall obtain a phase-space version of Morawetz estimate by choosing different functions $y$, $h$ and $f$ in each of the four frequency regimes in Sections \ref{sect:timedominatedfreq}--\ref{sect:boundfreq}. The proofs in Sections \ref{sect:timedominatedfreq}--\ref{sect:angulardominatedfreq} follow from the discussions in \cite[Section 9.4--9.6]{dafermos2010decay}, nevertheless, we present here the entire proof for completeness.

\subsubsection{$\mathcal{F}_{T}$ Regime (Time-dominated Frequency Regime)}\label{sect:timedominatedfreq}
In this section, $\lambda_s$ will be fixed and $\omega_l$ will be left unspecified until Section \ref{sect:trapfreq}.
For $|a|/M\leq \veps_0\ll 1$, by choosing small enough $\lambda_s$ and large enough $\omega_l$, there exists a constant $c_0<1$ such that we have in $\mathcal{F}_{T}$ that
\begin{equation}
\omega^2-V\geq \tfrac{1-c_0}{2}\omega^2\ \text{in}\ [r_+,\infty).
\end{equation}
As to the potential $V$, apart from the fact in Lemma \ref{lem:PotentialDereasingForRadiusLarge}, we have for all $r^*$ that
\begin{equation}
\left|V'\right|\leq C\Delta/r^{5}\left((\Lambda+3a_0^2\omega^2)+1\right).
\end{equation}
Choose a function $y$ to satisfy the following properties:
\begin{enumerate}
  \item $y\geq 0$, $y'\geq c\Delta/r^4$ in $(r_+, R_5-M]$,
  \item $y\geq 0$, $y'\geq 0$ in $[R_5-M, R_5]$,
  \item $y=1$ in $[R_5,\infty)$.
\end{enumerate}
In view of the above properties,
$\left(Q^y\right)'\geq 2y\Re\left(u'\overline{H}\right)-Ca^2r^{-3}\omega^2$ for $r\geq R_5$,
$\left(Q^y\right)'\geq 2y\Re\left(u'\overline{H}\right)$ for $r\leq r_c$,
and in the region $r\in (r_c, R_5-M)$,  we have
\begin{align}
y'(\omega^2-V)-yV'\geq{}& \tfrac{1-c_0}{2}y'\omega^2 -Cy\Delta/r^5\left((\lambda+a_0^2\omega^2)+1\right)\notag\\
\geq{}&\left(c\tfrac{1-c_0}{2} - Cr^{-1}(\lambda_s+a_0^2+\omega_l^{-2})\right)\Delta/r^4\omega^2.
\end{align}
By taking both $\lambda_s$ and $\veps_0$ sufficiently small and for $\omega_l$ sufficiently large, the right-hand side is larger than $c\Delta/r^4\omega^2$.
Hence, integrating \eqref{eq:Qyderi} over $[r_{-\infty}^*, r_{\infty}^*]$ gives
\begin{lemma}\label{lem:TimeDominatedFreqEstimate}
Let $r_{\infty}^*>R_5^*> (R_0+M)^*$ and $r_{-\infty}^*<r_c^*$ be arbitrary. There exists small enough $\lambda_s$ and large enough $\omega_l$ such that we have in $\mathcal{F}_T$ frequency regime the following estimate for sufficiently small $\veps_0$:
\begin{align}\label{eq:TimeDominatedFreqEstimate}
\hspace{4ex}&\hspace{-4ex}c\int_{r_c^*}^{R_0^*}\frac{\Delta}{r^{4}}
\left(\left|u'\right|^2
+\left(\omega^2+(\Lambda+4a^2\omega^2)
+1\right)|u|^2\right)\notag\\
\leq {}& \int_{r_{-\infty}^*}^{r_{\infty}^*}2y\Re\left(u'\overline{H}\right)
+Q^{y}\left(r_{\infty}^*\right)-Q^y\left(r_{-\infty}^*\right)
+C\int_{R_5^*}^{r_{\infty}^*}a^2r^{-3}\omega^2|u|^2.
\end{align}
\end{lemma}

\subsubsection{$\mathcal{F}_{Tr}$ Regime (Trapped Frequency Regime)}\label{sect:trapfreq}

We have fixed $\lambda_s$ (which is sufficiently small) as in Section \ref{sect:timedominatedfreq}, and we will fix $\omega_l$ here.
This is the only frequency regime where trapping phenomenon could happen. Recall from \eqref{eq:derivativeofVm} that
\begin{equation}
V_m'=\tfrac{-2\Delta}{(r^2+a^2)^4}(\Lambda+a^2\omega^2)
(r^3-3Mr^2+a^2r+a^2M),
\end{equation}
and the polynomial $P_3(r)=r^3-3Mr^2+a^2r+a^2M$ has a unique zero point $r_{a,M}$ in $r\in [r_+,\infty)$ which satisfies moreover $|r_{a,M}-3M|\leq Ca^2$. Meanwhile, the estimate \eqref{eq:eigenesti:largeradius} and the requirements of $|\omega|\geq \omega_1$ and $\lambda\geq\lambda_2\omega^2$ in this frequency regime imply that for large enough $\omega_1$ and sufficiently small $|a|/M\leq \veps_0$,
$\Lambda+a^2\omega^2\geq c(\omega^2+\max\{m^2+|m|, s^2+|s|\})$.
On the other hand,
\begin{equation}
r^3|V_e'|\leq C\Delta/r^2(a^2 m^2+|am\omega|+1).
\end{equation}
Therefore, given any small $\varepsilon_4>0$, we have for sufficiently large $\omega_l$ (depending on the $\lambda_s$) and sufficiently small $\veps_0$ that $V'$ has no zeros outside the region $[r_{a,M}-\varepsilon_4, r_{a,M}+\varepsilon_4]$. In fact, $V'$ has a unique simple zero in this neighborhood. This can be seen as follows.  Note from \eqref{eq:derivativeofVm}, \eqref{eq:decompofpotentialV1e} and
\eqref{eq:decompofpotentialVother} that for sufficiently small $\veps_0$,
\begin{subequations}
\begin{align}
\left(\tfrac{(r^2+a^2)^4}{\Delta r^2}V_m'\right)'(r_{a,M})={}&-\tfrac{2\Delta(r_{a,M})(\Lambda+a^2\omega^2)}
{r_{a,M}^2+a^2}
(1-\tfrac{a^2}{r_{a,M}^2}-\tfrac{2a^2M}{r_{a,M}^3})\notag\\
\leq{}& -c(\Lambda+a^2\omega^2),\\
\label{eq:Vesecondderi}
\left|\left(\tfrac{(r^2+a^2)^4}{\Delta r^2}V_e'\right)'\right|\lesssim{}&\Delta r^{-2}(a^2m^2+|am\omega|+1).
\end{align}
\end{subequations}
For $\omega_l$ sufficiently large and $\veps_0$ sufficiently small, $\Lambda+a^2\omega^2$ will be much bigger than the RHS of \eqref{eq:Vesecondderi}. This implies that in the small region $[r_{a,M}-\varepsilon_4, r_{a,M}+\varepsilon_4]$, we have
\begin{align}
\left(\tfrac{(r^2+a^2)^4}{\Delta r^2}V'\right)'
\leq{}& -c(\Lambda+a^2\omega^2).
\end{align}
Therefore, this proves that for any $\veps_4>0$, we have for sufficiently large $\omega_l$  and sufficiently small $\veps_0$ that there is a unique simple zero, which we denote by $r_{m\ell}^{(a\omega)}$, in $[r_+,\infty)$ for any $(\omega, m, \ell)\in \mathcal{F}_{Tr}$, and $r_{m\ell}^{(a\omega)}\in [r_{a,M}-\varepsilon_4, r_{a,M}+\varepsilon_4]$. Moreover, for $r\leq r_{m\ell}^{(a\omega)}$,
\begin{align}
V'\geq -c(r-r_{m\ell}^{(a\omega)})\Delta/r^6(\Lambda +a^2\omega^2),
\end{align}
and for $r\geq r_{m\ell}^{(a\omega)}$,
\begin{align}
V'\leq -c(r-r_{m\ell}^{(a\omega)})\Delta/r^6(\Lambda +a^2\omega^2).
\end{align}

Choose a function $f$ associated with $Q^f$ current to satisfy the following properties:
\begin{enumerate}
  \item \label{functionf:point1}  $\lim\limits_{r^*\to -\infty}f=-1$, $f=1$ for some large $R_4\geq R_0$, and $f(r_c)=-1/2$, $f(r_{m\ell}^{(a\omega)}-M/10)=-1/4$,
      $f(r_{m\ell}^{(a\omega)})=0$,
      $f(r_{m\ell}^{(a\omega)}+M/10)=1/4$ and $f(R_0)=1/2$,
  \item \label{functionf:point2} $f'\geq 0$ for all $r^*$, and $f'\geq c\Delta/r^{4}$ for $r_c \leq r\leq R_0$,
  \item \label{functionf:point3} $f'''\leq -c$ for $r_{m\ell}^{(a\omega)}-M/10\leq r\leq r_{m\ell}^{(a\omega)}+M/10$, and $|f'''|\leq C\Delta/r^5$ elsewhere.
\end{enumerate}
Such a function $f$ can be manifestly constructed.
Upon making such a choice of function $f$, it can be seen from the above properties of $V'$ and properties of the chosen function $f$ that for $\omega_l$ large enough, we have for all $r^*$ that
$$
-fV'-\tfrac{1}{2}f'''\geq \big(c+c(\omega_l)((\Lambda+a^2\omega^2)+\omega^2)
      (r-r_{m\ell}^{(a\omega)})^2\big)\Delta/r^7.
$$
By integrating \eqref{eq:Qfderi} over $[r_{-\infty}^*, r_{\infty}^*]$, we arrive at the following conclusion.
\begin{lemma}\label{lem:TrappedDominatedFreqEstimate}
Let $r_{\infty}^*>R_4^*\geq R_0^*$ and $r_{-\infty}^*<r_c^*$ be arbitrary. There exists a large $\omega_l$ such that in $\mathcal{F}_{Tr}$ frequency regime, it holds true for sufficiently small $\veps_0$ that
\begin{align}\label{eq:TrappedFreqEstimate}
\hspace{4ex}&\hspace{-4ex}\int_{r_c^*}^{R_0^*}
\frac{\Delta}{r^4}\left(c\left|u'\right|^2
+r^{-1}\left[c+c(\omega_l)(1-r^{-1}r_{m\ell}^{(a\omega)})^2
\left(\omega^2+(\Lambda+4a^2\omega^2)
\right)\right]|u|^2\right)\notag\\
\leq & \int_{r_{-\infty}^*}^{r_{\infty}^*}\left(2f\Re\left(u'\overline{H}\right)
+f'\Re\left(u\overline{H}\right)\right)
+Q^{f}\left(r_{\infty}^*\right)-Q^f\left(r_{-\infty}^*\right).
\end{align}
\end{lemma}

\subsubsection{$\mathcal{F}_{A}$ Regime (Angular-dominated Frequency Regime)}\label{sect:angulardominatedfreq}

Here, we fix $\omega_l$ as in Section \ref{sect:trapfreq}, and will choose $\lambda_l$ to be sufficiently large. The general idea is as follows. In this regime, for sufficiently small $|a|/M\leq \veps_0$, the zero points of $V'(r)$ in $[r_+,\infty)$ are located in a small neighborhood of $r=3M$.
The $Q^f$ current is utilized to achieve the positivity of the bulk term outside this small neighborhood, while in this small neighborhood, $hV|u|^2$ in $(Q^h)'$ with $h(r)$ being a positive constant is used to dominate the potentially negative bulk term in $(Q^f)'$.

Similarly to the discussions in Section \ref{sect:trapfreq}, one calculates
\begin{equation}
V_m'=\tfrac{-2\Delta}{(r^2+a^2)^4}(\Lambda+a^2\omega^2)
(r^3-3Mr^2+a^2r+a^2M),
\end{equation}
and the polynomial $r^3-3Mr^2+a^2r+a^2M$ has a unique zero point $r_{a,M}$ in $r\in [r_+,\infty)$ satisfing $|r_{a,M}-3M|\leq Ca^2$. In this frequency regime, it is clear that for $\lambda_l$ sufficiently large,  $\Lambda+a^2\omega^2>0$.  On the other hand,
\begin{equation}
r^3|V_e'|\leq C\Delta/r^2(a^2 m^2+|am\omega|+1).
\end{equation}
Therefore, given any small $\varepsilon_4>0$, we have for sufficiently small $\veps_0$ and sufficiently large $\lambda_l$ (depending on the $\omega_l$) that $V'$ has no zeros outside the region $[r_{a,M}-\varepsilon_4, r_{a,M}+\varepsilon_4]$.

Given $\veps_0\ll 1$, choose constants
$$r_+<r_c<r_{left1}<r_{left2}<r_{a,M}-\varepsilon_4 < r_{a,M}+\varepsilon_4 <r_{right1} <r_{right2} <R_0 <\infty.$$
For sufficiently small $\varepsilon_4$ and $\veps_0$, there exist constants $c_1, c_2>0$ and functions $f$ and $h$ satisfying the following conditions:
\begin{enumerate}
\item $f'\geq 0$ for $r>r_+$, $f'\geq c_1$ for $r_c\leq r\leq R_0$,
\item $f\sim -1 +(r-r_+)$ near the horizon, $f\leq -\frac{1}{2}$ for $r\leq r_{left2}$, $f\geq \frac{1}{2}$ for $r\geq r_{right1}$ and $f=1$ for $r\geq R_0+1$ such that $fV'<0$ in $(r_+, r_{a,M}-\varepsilon_4)\cup (r_{a,M}+\varepsilon_4, \infty)$ and $fV'\leq c_2 V$ in $[r_{a,M}-\varepsilon_4, r_{a,M}+\varepsilon_4]$,
\item $h=0$ in $[r_+, r_{left1}]\cup [r_{right2}, \infty)$ and $h=2c_2$ in $[r_{left2},r_{right1}]$.
\end{enumerate}
We then take $\lambda_l$ sufficiently large such that for $r\in [r_+, r_{a,M}-\varepsilon_4]\cup [r_{a,M}+\varepsilon_4, \infty)$,
\begin{align}
-\frac{1}{2}fV'-\frac{1}{2}f'''-\frac{1}{2}h''\geq 0,
\end{align}
and for $r\in[r_{a,M}-\varepsilon_4,r_{a,M}+\varepsilon_4]$,
\begin{align}
h(V-\omega^2)-\frac{1}{2}h''-fV'-\frac{1}{2}f'''
\geq{}&c_2V-2c_2\omega^2-\frac{1}{2}f'''
\geq{}cc_2\Lambda.
\end{align}
Therefore, by integrating over $r_{-\infty}^*\leq r^*\leq r_{\infty}^*$, we conclude
\begin{lemma}\label{lem:AngularDominatedFreqEstimate}
Let $r_{\infty}^*>R_0^*$ and $r_{-\infty}^*<r_c^*$ be arbitrary. Fix $\omega_l$ as in Section \ref{sect:trapfreq}. There exists a large $\lambda_l$ such that in $\mathcal{F}_A$ frequency regime, we have for sufficiently small $\veps_0$ the following estimate
\begin{align}\label{eq:AngularDominatedFreqEstimate}
\hspace{4ex}&\hspace{-4ex}c\int_{r_c^*}^{R_0^*}\left(\left|u'\right|^2+\Delta/ r^{5}\left(\omega^2+(\Lambda+4a^2\omega^2)
+1\right)|u|^2\right)\notag\\
\leq & \int_{r_{-\infty}^*}^{r_{\infty}^*}\left(2f\Re\left(u'\overline{H}\right)
+\left(f'+h\right)\Re\left(u\overline{H}\right)\right)
+Q^{f}\left(r_{\infty}^*\right)-Q^f\left(r_{-\infty}^*\right).
\end{align}
\end{lemma}

\subsubsection{$\mathcal{F}_{B}$ regime (bounded frequency regime)}\label{sect:boundfreq}
Fix $\omega_l$ and $\lambda_l$ as above. This is a compact frequency regime. Since the Morawetz estimates in the Schwarzschild case have been proved in Section \ref{proof of integrated decay}, for sufficiently small $|a|/M\leq \veps_0$, the Morawetz estimates in the phase space can be obtained by a stability argument. We shall show this below by choosing an appropriate function $f$ in the current $Q^f$.

In this regime, a key fact is that the minimum value of eigenvalues $\Lambda$ for the separated angular equation \eqref{eq:SWSHOpeEq Kerr} is close to $\max\{s^2+|s|,m^2+|m|\}$ due to smallness of $|a\omega|$. More explicitly, from \eqref{eq:SWSHOpeEq Kerr},
\begin{align}
\Lambda +a^2\omega^2\geq{}& \max\{s^2+|s|,m^2+|m|\}-2|a\omega s|
\notag\\
\geq {}&\max\{6,m^2+|m|\}-1-4a^2\omega_l^2.
\end{align}
 This gives directly that
\begin{align}
m^2\leq C(\lambda_l+a^2\omega_l^2).
\end{align}

We take
\begin{align}\label{eq:boundfreq:fchoice}
f=\frac{r^3-3Mr^2+a^2r+a^2M}{r^3}
 \end{align}
 in the current $Q^f$ and find for sufficiently small $|a|/M$,
\begin{equation}
f'=\frac{\Delta}{r^4(r^2+a^2)}(3Mr^2-2a^2r-3a^2M)\geq\frac{cM\Delta}{r^4}.
\end{equation}
Moreover, $f$ vanishes at a unique point $r_{a,M}$, which is already defined in Section \ref{sect:angulardominatedfreq}, with $|r_{a,M}-3M|\leq Ca^2$ for small enough $|a|/M$.

To obtain a Morawetz estimate in the phase space, we need to verify $-fV'-\frac{1}{2}f'''\geq 0$. Consider the case that $V=V_1$. The derivative $V_m'$ is as in \eqref{eq:derivativeofVm}, and note from \eqref{eq:decompofpotentialV1e} that
$$V_e=
-6Mr^{-1}(r^2+a^2)^{-2}\Delta+O(r^{-3})Mam\omega
+O(r^{-4})a^2m^2+O(r^{-4})a^2,$$
hence
\begin{subequations}
\begin{align}
&\big|V_e'-\tfrac{6Mr\Delta(3r-8M)}{(r^2+a^2)^4}\big|\leq {}C\Delta(Mr^{-6}|am\omega|+r^{-7}a^2m^2+r^{-7}a^2),\\
&\left|-\frac{1}{2}f'''+\frac{3M\Delta}{(r^2+a^2)^4}(3r^2-20Mr+30M^2)\right|
\leq Ca^2\Delta r^{-7}.
\end{align}
\end{subequations}
This implies
\begin{align}
\label{eq:boundedfreq:posibulk11}
-fV'-\frac{1}{2}f'''
\geq {}&-f\bigg(V_0'+\frac{6Mr\Delta(3r-8M)}{(r^2+a^2)^4}\bigg)
-\frac{3M\Delta(3r^2-20Mr+30M^2)}{(r^2+a^2)^4}\notag\\
&
-\frac{C\Delta}{r^2+a^2}(Mr^{-4}|am\omega|+a^2r^{-5}(m^2+1)).
\end{align}
For the RHS of the first line, it equals
\begin{align}
\label{eq:boundedfreq:posibulk12}
\hspace{4ex}&\hspace{-4ex}\frac{2\Delta(r^3-3Mr^2+a^2r+a^2M)^2
(\Lambda+a^2\omega^2)}{r^3(r^2+a^2)^4}
-\frac{3M\Delta(3r^2-20Mr+30M^2)}{(r^2+a^2)^4}\notag\\
\hspace{4ex}&\hspace{-4ex}
-\frac{6Mr\Delta(3r-8M)(r^3-3Mr^2+a^2r+a^2M)}{r^3(r^2+a^2)^4}
\notag\\
\geq {}&\frac{\Delta}{r^3(r^2+a^2)^4}
\left[2(r-3M)^2r^4(6-\varepsilon)-3Mr^3(9r^2-54Mr+78M^2)\right]
\notag\\
&-\frac{C\Delta}{r^2+a^2}r^{-3}\varepsilon^{-1}a^2\omega_l^2
-Ca^2r^{-5}.
\end{align}
From the above two estimates, to show that there exists a constant $c>0$ such that $-fV'-\frac{1}{2}f'''\geq \frac{c\Delta r^3}{(r^2+a^2)^4}$, it is enough to prove there exists a positive constant $c_3>0$ such that
\begin{align}
\label{eq:boundfreq:poly}
4r(r-3M)^2 -M(9r^2-54Mr+78M^2) \geq c_3r^3,
\end{align}
since then the first line on the RHS of \eqref{eq:boundedfreq:posibulk12} will dominate over $\frac{c_3}{2}\frac{\Delta r^3}{(r^2+a^2)^4}$ by taking $\varepsilon=c_3/2$, which will in turn dominate over the sum of the second line of \eqref{eq:boundedfreq:posibulk11} and the last line of \eqref{eq:boundedfreq:posibulk12} by taking $|a|/M\leq \veps_0$ sufficiently small. The LHS of \eqref{eq:boundfreq:poly} is $4r^3-33Mr^2+90M^2r-78M^3$, and it has been shown in Section \ref{proof of integrated decay} that this polynomial has no zero points in $[r_+,\infty)$, hence the relation \eqref{eq:boundfreq:poly} follows.

In the case that $V=V_0$, one follows the above argument and obtains
\begin{align}
\hspace{4ex}&\hspace{-4ex}-fV'-\frac{1}{2}f'''\notag\\
\geq {}&-f\bigg(V_m'+\frac{r\Delta(4r^2-30Mr+48M^2)}{(r^2+a^2)^4}\bigg)
-\frac{3M\Delta}{(r^2+a^2)^4}(3r^2-20Mr+30M^2)\notag\\
&
-\frac{C\Delta}{r^2+a^2}(Mr^{-4}|am\omega|+a^2r^{-5}(m^2+1))\notag\\
\geq {}&\frac{r^3\Delta}{r^3(r^2+a^2)^4}
\left[2(r-3M)^2r(6-\varepsilon)-(4r^3-33Mr^2+78M^2r-54M^3)\right]
\notag\\
&-\frac{C\Delta}{r^2+a^2}r^{-3}\varepsilon^{-1}a^2\omega_l^2
-Ca^2r^{-5}.
\end{align}
Similarly, we only need to show the polynomial
$$
12r(r-3M)^2-(4r^3-33Mr^2+78M^2r-54M^3)=8r^3-39Mr^2+30M^2r+54M^3
$$
has no zeros in $[r_+, +\infty)$, and this has been shown in Section \ref{proof of integrated decay}.

In total, by making a choice of function $f$ as in \eqref{eq:boundfreq:fchoice}, it holds true that
\begin{align}
(Q^f)'\geq{}&{cM\Delta}{r^{-4}}|u'|^2+{c\Delta}{r^{-5}}|u|^2
+
\Re\left(2fu'\overline{H}+f'u\overline{H}\right).
\end{align}
By integrating over $[r_{-\infty}^*,r_{\infty}^*]$ and taking into account of the boundedness of $|\omega|$ and $\Lambda$, we arrive at
\begin{lemma}\label{lem:BoundFreqEstimate}
Let $r_{\infty}^*>R_0^*$ and $r_{-\infty}^*<r_c^*$ be arbitrary. Fix $\omega_l$ as in Section \ref{sect:trapfreq} and $\lambda_l$ as in Section \ref{sect:angulardominatedfreq}.  In $\mathcal{F}_B$ frequency regime, there exists a $\veps_0>0$ such that the following estimate holds true for all $|a|/M\leq \veps_0$
\begin{align}\label{eq:BoundFreqEstimate}
\hspace{4ex}&\hspace{-4ex}c\int_{r_c^*}^{R_0^*}\left(\Delta/r^4 \left|u'\right|^2+\Delta/ r^{5}\left(\omega^2+(\Lambda+4a^2\omega^2)
+1\right)|u|^2\right)\notag\\
\leq & \int_{r_{-\infty}^*}^{r_{\infty}^*}\left(2f\Re\left(u'\overline{H}\right)
+f'\Re\left(u\overline{H}\right)\right)
+Q^{f}\left(r_{\infty}^*\right)-Q^f\left(r_{-\infty}^*\right).
\end{align}
\end{lemma}

\subsection{Summing Up}\label{sect:summingandIEDcurrentesti}

Fix $\omega_l, \lambda_l,\lambda_s$ such that the estimates in Sections \ref{sect:timedominatedfreq}--\ref{sect:boundfreq} hold true and all the $r_{m\ell}^{(a\omega)}$ in $\mathcal{F}_{Tr}$ frequency regime satisfy  $r_{m\ell}^{(a\omega)}\in [r_{\text{trap}}^-,r_{\text{trap}}^+]$.
Summing over $m,\ell$ and integrating over $\omega$ in the estimates proved in the previous subsection, we obtain an identity of the form
\begin{align}
\label{eq:sumup:mainidentity}
\mathbf{M}=\mathbf{E} + \mathbf{F}_{r_{\infty}^*}
+\mathbf{F}_{r_{-\infty}^*} + \mathbf{R}.
\end{align}
Here, $\mathbf{M}$, $\mathbf{E}$, $\mathbf{F}_{r_{\infty}^*}$, $\mathbf{F}_{r_{-\infty}^*}$, and $\mathbf{R}$ are Morawetz integral terms, error terms from the source $H=\Delta (r^2 +a^2)^{-3/2}F_{\chi}$, flux term at $r_{\infty}^*$, flux term at $r_{-\infty}^*$, and a remainder term from the last integral in \eqref{eq:TimeDominatedFreqEstimate}, respectively. We shall estimate these terms one by one below.

\subsubsection{Morawetz Terms $\mathbf{M}$}
From the properties in Section \ref{sect:SeparateAngAndRadialEqs}, we have
\begin{align}
\mathbf{M}\geq{}c\int_{\mathcal{D}(0,\tau)\cap \{r_c\leq r\leq R_0\}} \left(\frac{|\partial_{r^*}\psi_{\chi}|^2+|\psi_{\chi}|^2}{r^3}
+\chi_{\text{trap}}(r)\frac{|\partial_{t} \psi_{\chi}|^2+|\nablaslash\psi_{\chi}|^2}{r^3}\right).
\end{align}
In view of the fact that $\chi=1$ in $[\varepsilon^{-1}, \tau-\varepsilon^{-1}]$, we use the estimate \eqref{eq:FiniteInTimeEnergyEstimateInhomoSWRWETo T+eN} and obtain
\begin{align}
\label{eq:sumup:Moraterm}
\hspace{4ex}&\hspace{-4ex}c\int_{\mathcal{D}(0,\tau)\cap \{r_c\leq r\leq R_0\}} \left(\frac{|\partial_{r^*}\psi|^2+|\psi|^2}{r^3}
+\chi_{\text{trap}}(r)\frac{|\partial_{t} \psi|^2+|\nablaslash\psi|^2}{r^3}\right)
\notag\\
\leq{}&\mathbf{M}
+C(r_c,R)\varepsilon^{-1}\bigg(
\int_{\Sigma_{0}}\left|e(\psi)\right|
+e_0E_{0}^{\text{total}}(s)+\mathcal{E}_{F,e_0,0,\tau}
\bigg)\notag\\
&
+C(r_c,R)\varepsilon^{-1}\bigg(
\int_{\Sigma_{\tau-\varepsilon^{-1}}}\left|e(\psi)\right|
+e_0E_{\tau-\varepsilon^{-1}}^{\text{total}}(s)
\bigg).
\end{align}

\subsubsection{Error Terms $\mathbf{E}$}\label{sect:EstiErrorTermCurrents}
The error terms $\mathbf{E}$ are of this form
\begin{align}
\int_{r_{-\infty}^*}^{r_{\infty}^*}\int_{-\infty}^{\infty}\sum_{m,\ell}
\Re\left(
\tfrac{\Delta }{\left(r^2+a^2\right)^{3/2}}
\left(F_{\chi}\right)_{m\ell}^{(a\omega)}\left(c(r)\bar{u}_{m\ell}^{(a\omega)}
+d(r)\partial_{r^*}\bar{u}_{m\ell}^{(a\omega)}\right)\right)
\di \omega \di r^*.
\end{align}
Let
\begin{subequations}
\begin{align}
F_{\chi}={}&F_{\chi,c}+F_{\chi,s},\\
F_{\chi,c}={}&\Sigma\left(2\nabla^{\mu}\chi\nabla_{\mu}\psi
+\left(\Box_g\chi\right)\psi\right)-2isa\cos\theta\partial_t\chi\psi,\\
F_{\chi,s}={}&\chi F.
\end{align}
\end{subequations}
We use the subscripts $c$ and $s$ to denote the error terms coming from the cutoff part $F_{\chi,c}$ and the source $F_{\chi,s}$ respectively  and decompose
\begin{align}
\mathbf{E}={}\mathbf{E}_{c,[r_{-\infty}^*, R_7^*]}
+\mathbf{E}_{c,[R_7^*,r_{\infty}^*]}
+\mathbf{E}_{s,[r_{-\infty}^*, R_7^*]}
+\mathbf{E}_{s,[R_7^*,r_{\infty}^*]},
\end{align}
where $R_7$ is a constant to be fixed but lies in $(\max\{R_4, R_5\}, \max\{R_4, R_5\}+M)$ such that $y=f=1$ and $h=0$ for $r\geq R_7$, the intervals $[r_{-\infty}^*, R_7^*]$ and $[R_7^*,r_{\infty}^*]$ are the $r^*$ region to integrate.

From the Cauchy--Schwarz inequality, we have for any $\varepsilon_3>0$ that
\begin{align}
\mathbf{E}_{c,[r_{-\infty}^*, R_7^*]}
\leq{}& \int_{\mathcal{D}(0,\tau)\cap[r_{-\infty},R_7]}
\frac{C\varepsilon_3 }{ r^{3}}\left(|\psi_{\chi}|^2+|\partial_{r^*} \psi_{\chi}|^2\right)
+\frac{C}{\varepsilon_3}|F_{\chi,c}|^2.
\end{align}
Note that we have incorporated powers of $R_7$ into the constant $C$ since $R_7$ has been fixed.
The first term is bounded by $ \int_{\mathcal{D}(0,\tau)\cap[r_{-\infty},R_7]}
\frac{C\varepsilon_3 }{r^{3}}\left(|\psi|^2+|\partial_{r^*} \psi|^2\right)$,
and, using the estimates \eqref{eq:PropertyofCutoffchi} and in view of the support of the derivatives of $\chi$, we bound the second term by
\begin{align}
\hspace{4ex}&\hspace{-4ex}
C\varepsilon_3^{-1}\varepsilon^2
\bigg(\int_{\mathcal{D}(0,\varepsilon^{-1})\cap[r_{-\infty},R_7]}
+\int_{\mathcal{D}(\tau-\varepsilon^{-1},\tau)\cap[r_{-\infty},R_7]}\bigg)
|\partial\psi|^2\notag\\
\leq{}&C\varepsilon_3^{-1}\varepsilon \left(E_{\tau-\varepsilon^{-1}}^{\text{total}}(s)
+E_{0}^{\text{total}}(s)
+\mathcal{E}_{F,1,0,\tau}\right),
\end{align}
where the estimate \eqref{eq:FiniteInTimeEnergyEstimateInhomoSWRWETo T+eN} with $\tilde{e}=1$ is used in this step. Therefore, we arrive at
\begin{align}
\label{eq:sumup:errorterm:cnear}
\mathbf{E}_{c,[r_{-\infty}^*, R_7^*]}
\lesssim {} &
\frac{\varepsilon}{\varepsilon_3} \left(E_{0}^{\text{total}}(s)
+\mathcal{E}_{F,1,0,\tau}\right)\notag\\
&
+\frac{\varepsilon}{\varepsilon_3} E_{\tau-\varepsilon^{-1}}^{\text{total}}(s)
+\varepsilon_3 \int_{\mathcal{D}(0,\tau)\cap [r_0,R_0]}
\frac{|\psi|^2+|\partial_{r^*} \psi|^2}{r^3}\notag\\
&+\varepsilon_3 \bigg(\int_{\mathcal{D}(0,\tau)\cap [r_+,r_0]}
+\int_{\mathcal{D}(0,\tau)\cap [R_0,\infty)}\bigg)
\frac{|\psi|^2+|\partial_{r^*} \psi|^2}{r^3} .
\end{align}
We have intentionally separated the integrals over different radius regions here, the reason of which will be clear in Section \ref{sect:summingandBEAM}.

The same argument applies to $\mathbf{E}_{s,[r_{-\infty}^*, R_7^*]}$, yielding
\begin{align}
\label{eq:sumup:errorterm:snear}
\mathbf{E}_{s,[r_{-\infty}^*, R_7^*]}
\lesssim {} &
\varepsilon_3 \int_{\mathcal{D}(0,\tau)\cap [r_0,R_0]}
\frac{|\psi|^2+|\partial_{r^*} \psi|^2}{r^3}
+\int_{\mathcal{D}(0,\tau)}{\varepsilon_3^{-1}}
\frac{|F|^2}{r^3}\notag\\
&+\varepsilon_3 \bigg(\int_{\mathcal{D}(0,\tau)\cap [r_+,r_0]}
+\int_{\mathcal{D}(0,\tau)\cap [R_0,\infty)}\bigg)
\frac{|\psi|^2+|\partial_{r^*} \psi|^2}{r^3}.
\end{align}

Consider the term $\mathbf{E}_{c,[R_7^*,r_{\infty}^*]}$. Since $f(r)=y(r)=1$, $h(r)=0$ and $\partial_{r^*}\chi=0$, and from the expression of $F_{\chi,s}$, it equals
\begin{align}\label{exp:sumup:errorterm:caway}
&\int_0^{\tau}\int_{\mathbb{S}^2}
\int^{r_{\infty}^*}_{R_7^*}\Re\left(\chi\partial_{r^*}
(\sqrt{r^2+a^2}\bar{\psi})
\frac{2\Sigma\Delta }{(r^2+a^2)^{3/2}}\nabla^{\mu}\chi\nabla_{\mu}\psi\right)
\di r^*\di \sigma_{\mathbb{S}^2}\di t^*\notag\\
&+\int_0^{\tau}\int_{\mathbb{S}^2}
\int^{r_{\infty}^*}_{R_7^*}\Re\left(\partial_{r^*}
(\sqrt{r^2+a^2}\chi\bar{\psi})
\frac{2\Sigma\Delta }{(r^2+a^2)^{3/2}}\left(\Box_g\chi\right)\psi\right)
\di r^*\di \sigma_{\mathbb{S}^2}\di t^*\notag\\
&-\int_0^{\tau}\int_{\mathbb{S}^2}
\int^{r_{\infty}^*}_{R_7^*}\Re\left(\partial_{r^*}
(\sqrt{r^2+a^2}\chi\bar{\psi})
\frac{2ias\cos\theta \Delta }{(r^2+a^2)^{3/2}}\partial_t\chi\psi\right)
\di r^*\di \sigma_{\mathbb{S}^2}\di t^*.
\end{align}
The derivatives of $\chi$ are supported in $[0,\varepsilon^{-1}]\cup[\tau-\varepsilon^{-1},\tau]$, hence, using the estimates \eqref{eq:PropertyofCutoffchi}, the first term of \eqref{exp:sumup:errorterm:caway} is bounded by
 \begin{align}
 \label{eq:sumup:errorterm:cawaybulk}
 C\varepsilon\bigg(\int_{\mathcal{D}(0,\varepsilon^{-1})
\cap[R_7,r_{\infty}]}
+\int_{\mathcal{D}(\tau-\varepsilon^{-1},\tau)
\cap[R_7,r_{\infty}]}\bigg)
|\partial\psi|^2.
\end{align}
The estimate \eqref{eq:FiniteInTimeEnergyEstimateInhomoSWRWETo T+eN} yields that this is further controlled by
\begin{align}
\label{eq:sumup:errorterm:caway}
&C\int_{\Sigma_{0}}
\left|e(\psi)\right|
+Ce_0{E}_{0}^{\text{total}}(s)
+C\int_{\Sigma_{\tau-\varepsilon^{-1}}}
\left|e(\psi)\right|
+Ce_0{E}_{\tau-\varepsilon^{-1}}^{\text{total}}(s)
\notag\\
&
+C\mathcal{E}_{F,e_0,0,\varepsilon^{-1}}
+C\mathcal{E}_{F,e_0,\tau-\varepsilon^{-1},\tau}.
\end{align}
The second term can be rewritten as
\begin{align}
\int_0^{\tau}\int_{\mathbb{S}^2}
\int^{r_{\infty}^*}_{R_7^*}\left(\partial_{r^*}
\left(\left|\sqrt{r^2+a^2}\chi\bar{\psi}\right|^2\right)
\frac{\Sigma\Delta }{(r^2+a^2)^{2}}\chi^{-1}\Box_g\chi\right)
\di r^*\di \sigma_{\mathbb{S}^2}\di t^*.
\end{align}
Applying an integration by parts in $r^*$ and noting the bounds for $r\geq R_7$
\begin{align}
|\partial_{r^*}\chi \Box_g\chi|
+|\chi \partial_{r^*}\Box_g\chi|+|\partial_r(\Sigma\Delta (r^2+a^2)^{-2})\chi\Box_g\chi|\lesssim \varepsilon r^{-2},
\end{align}
this is bounded by
\begin{align}
&C\bigg|\int_0^{\tau}\int_{\mathbb{S}^2}
\left(
\left|\sqrt{r^2+a^2}\chi\bar{\psi}\right|^2
\frac{\Sigma\Delta }{(r^2+a^2)^{2}}\chi^{-1}\Box_g\chi\right)_{r=R_7}
\di r^*\di \sigma_{\mathbb{S}^2}\di t^*\bigg|\notag\\
&
+C\varepsilon\bigg(\int_{\mathcal{D}(0,\varepsilon^{-1})
\cap[R_7,r_{\infty}]}
+\int_{\mathcal{D}(\tau-\varepsilon^{-1},\tau)
\cap[R_7,r_{\infty}]}\bigg)
|\partial\psi|^2.
\end{align}
Here the boundary term at $r=r_{\infty}$ vanishes for sufficiently large $r_{\infty}$ from the reduction in Section \ref{sect:LWPandGlobalExistenceLinearWaveSystem}. A simple application of the mean-value principle in $r$ allows us to fix $R_7$ by requiring that the boundary term at $r=R_7$ is bounded by $C$ times an integral over $[\max\{R_4, R_5\}, \max\{R_4, R_5\}+M]$, which is in turn bounded by
$$
C\varepsilon\bigg(\int_{\mathcal{D}(0,\varepsilon^{-1})
\cap[R_7-M,r_{\infty}]}
+\int_{\mathcal{D}(\tau-\varepsilon^{-1},\tau)
\cap[R_7-M,r_{\infty}]}\bigg)
|\partial\psi|^2.
$$
The same argument as in estimating \eqref{eq:sumup:errorterm:cawaybulk} above then applies, suggesting the second term of \eqref{exp:sumup:errorterm:caway} is bounded by \eqref{eq:sumup:errorterm:caway}.
The Cauchy--Schwarz inequality applied to the last term of \eqref{exp:sumup:errorterm:caway} gives an upper bound
\begin{align}
C|a|\varepsilon\bigg(\int_{\mathcal{D}(0,\varepsilon^{-1})
\cap[R_7,r_{\infty}]}
+\int_{\mathcal{D}(\tau-\varepsilon^{-1},\tau)
\cap[R_7,r_{\infty}]}\bigg)
|\partial\psi|^2,
\end{align}
which can again be bounded by \eqref{eq:sumup:errorterm:caway}. These together imply that $\mathbf{E}_{c,[R_7^*,r_{\infty}^*]}$ is bounded by  \eqref{eq:sumup:errorterm:caway}.

Consider in the end the term $\mathbf{E}_{s,[R_7^*,r_{\infty}^*]}$. This equals
\begin{align}
\int_0^{\tau}\int_{\mathbb{S}^2}
\int^{r_{\infty}^*}_{R_7^*}\Re\left(\chi\partial_{r^*}
(\sqrt{r^2+a^2}\bar{\psi})
\frac{\Delta }{(r^2+a^2)^{3/2}}\chi F\right)
\di r^*\di \sigma_{\mathbb{S}^2}\di t^*.
\end{align}
Since $R_7\geq R_0$, we utilize the Cauchy--Schwarz inequality and find
\begin{align}\label{eq:sumup:errorterm:saway}
\mathbf{E}_{s,[R_7^*,r_{\infty}^*]}
\lesssim{}
\varepsilon_5\int_{\mathcal{D}(0,\tau)
\cap[R_0,r_{\infty}]}\frac{|\partial\psi|^2}{r^{1+\delta}}
+\varepsilon_5^{-1}\int_{\mathcal{D}(0,\tau)
\cap[R_0,r_{\infty}]}\frac{|F|^2}{r^{3-\delta}}.
\end{align}

\subsubsection{Flux Terms}
First note that the flux term $\mathbf{F}_{r_{\infty}^*}$ vanishes by choosing $ r_{\infty}^*$ sufficiently large\footnote{$r_{\infty}^*$ is chosen based on $\tau$ from the property of finite speed of propagation.} in view of the reduction in Section \ref{sect:LWPandGlobalExistenceLinearWaveSystem}.
We have for the flux term $\mathbf{F}_{r_{-\infty}^*}$ that
\begin{align}
\mathbf{F}_{r_{-\infty}^*}
\lesssim{}&
\int_{-\infty}^{\infty}\int_{\mathbb{S}^2} \left(|\partial_t \psi_{\chi}|^2 +|\partial_{r^*}\psi_{\chi}|^2
+a^2|\partial_{\phi}\psi_{\chi}|^2 \right)_{r^*=r_{-\infty}^*}\di \sigma_{\mathbb{S}^2}\di t\notag\\
&+\int_{-\infty}^{\infty}\int_{\mathbb{S}^2} \left(\Delta\left(|\nablaslash \psi_{\chi}|^2 +|\psi_{\chi}|^2 \right)\right)_{r^*=r_{-\infty}^*}\di \sigma_{\mathbb{S}^2}\di t.
\end{align}
By taking the limit $r_{-\infty}^*\to -\infty$, the second line tends to $0$, and the term on the RHS of the first line has a limit, arriving at
\begin{align}
\label{eq:sumup:fluxterm}
\hspace{4ex}&\hspace{-4ex}\limsup_{r_{-\infty}^*\to -\infty}\mathbf{F}_{r_{-\infty}^*}\notag\\
\lesssim{}&\int_{-\infty}^{\infty}\int_{\mathbb{S}^2} \left(|\partial_t \psi_{\chi}|^2 +|\partial_{r^*}\psi_{\chi}|^2
+a^2|\partial_{\phi}\psi_{\chi}|^2 \right)_{r=r_+} \di \sigma_{\mathbb{S}^2}\di t\notag\\
\lesssim{}& \int_{\mathcal{H}^+(0,\tau)} \left(|\partial_t\psi+\partial_{r^*}\psi|^2
+a^2|\partial_{\phi}\psi|^2 \right)
+\varepsilon^2
\int_{\mathcal{H}^+(0,\tau)}
|\psi|^2\notag\\
\lesssim{}&E_{0}(\tilde{\psi})
+(\veps_0^2+\varepsilon^2){E}_{\mathcal{H}^+(0,\tau)}(\psi)
+e_0\int_{\mathcal{D}(0,\tau)\cap[r_0,r_1]}
\left|\partial\psi\right|^2+\mathcal{E}_{F,e_0,0,\tau}.
\end{align}
Here, we used the fact that $\partial_t=\partial_{r^*}-\frac{a}{2Mr_+}\partial_{\phi}$ on $\mathcal{H}^+$ in the second step  and the inequality \eqref{eq:SWRWEEnerEstiII} in the third step.

\subsubsection{Remainder Term $\mathbf{R}$}

Recall that this remainder term is from the last integral in \eqref{eq:TimeDominatedFreqEstimate}, hence from \eqref{eq:PropertyofCutoffchi},
\begin{align}
\label{eq:sumup:remainderR}
\mathbf{R}\lesssim{}&\int_{-\infty}^{\infty}\int_{R_5^*}^{\infty}\int_{\mathbb{S}^2}a^2r^{-3}|\partial_t (\sqrt{r^2+a^2}\psi_{\chi})|^2 \di \sigma_{\mathbb{S}^2}\di r^*\di t \notag\\
\lesssim{}&\int_{\mathcal{D}(0,\tau)\cap [R_5,\infty)}a^2r^{-3}(|\partial_t \psi|^2 +\veps^2 |\psi|^2).
\end{align}

\subsection{An Energy and Morawetz Estimate}\label{sect:summingandBEAM}
For any $\mu_i>0$, define
\begin{subequations}\label{def:errortermfromvectorfieldstotalKerrS2}
\begin{align}\label{def:errortermfromvectorfieldstotalKerrS2Posi}
\mathcal{E}_{\text{main},+2}^i(\mu_i)={}
&
\mu_i^{-1}\bigg|\int_{\mathcal{D}(0,\tau)}\Sigma^{-1}\Re\left(F_{+2}^i \partial_t\overline{\phi_{+2}^i}\right)\bigg|\notag\\
&
+(\veps_0+\mu_i)\sum_{j=0,1,2}\bigg|\int_{\mathcal{D}(0,\tau)}\Sigma^{-1}
\Re(F_{+2}^j\partial_t\overline{\phi^j_{+2}})\bigg|
\notag\\
&+\veps_0\int_{\mathcal{D}(0,\tau)}
\left(\widetilde{\mathbb{M}}(r^{4-\delta}\phi^0_{+2})+ \widetilde{\mathbb{M}}(r^{2-\delta}\phi^1_{+2})
+\mathbb{M}_{\text{deg}}(\phi^2_{+2})\right)\notag\\
&+\mu_i\sum_{j=0}^2\int_{\mathcal{D}(0,\tau)}
\mathbb{M}_{\text{deg}}(\phi^j_{+2}),
\\
\label{def:errortermfromvectorfieldstotalKerrS2Nega}
\mathcal{E}_{\text{main},-2}^i(\mu_i)={}
&
\mu_i^{-1}\bigg|\int_{\mathcal{D}(0,\tau)}\Sigma^{-1}\Re\left(F_{-2}^i \partial_t\overline{\phi_{-2}^i}\right)\bigg|\notag\\
&
+(\veps_0+\mu_i)\sum_{j=0,1,2}\bigg|\int_{\mathcal{D}(0,\tau)}\Sigma^{-1}
\Re(F_{-2}^j\partial_t\overline{\phi^j_{-2}})\bigg|
\notag\\
&+ \veps_0\int_{\mathcal{D}(0,\tau)}
\Big({\mathbb{M}}(\widetilde{\phi^0})
+{\mathbb{M}}(\widetilde{\phi^1})+ \mathbb{M}_{\text{deg}}(\phi^2_{-2})+|\nablaslash \widetilde{\phi^0}|^2+|\nablaslash \widetilde{\phi^1}|^2\Big)\notag\\
&+\mu_i\sum_{j=0}^2\int_{\mathcal{D}(0,\tau)}
\mathbb{M}_{\text{deg}}(\phi^j_{-2}),
\end{align}
\end{subequations}
and
\begin{subequations}\label{def:extraerrortermtotalKerrS2}
\begin{align}
\label{e+20}
\mathcal{E}_{\text{ex},+2}^0=&\eps_0\int_{\mathcal{D}(0,\tau)} \widetilde{\mathbb{M}}(r^{4-\delta}\phi^0_{+2})
+{\eps_0^{-1}}
   \int_{\mathcal{D}(0,\tau)}r^{-3}|\phi^1_{+2}|^2, \\
\mathcal{E}_{\text{ex},+2}^1={}&
\eps_1\int_{\mathcal{D}(0,\tau)} \widetilde{\mathbb{M}}(r^{2-\delta}\phi^1_{+2})
+
{\eps_1^{-1}}\int_{\mathcal{D}(0,\tau)}\big(
\widetilde{\mathbb{M}}_{\text{deg}}(r^{4-\delta}\phi^0_{+2})
+r^{-3}|\phi^2_{+2}|^2\big),\\
\mathcal{E}_{\text{ex},+2}^2={}&0,\\
\label{e-20}
\mathcal{E}_{\text{ex},-2}^0={}&\eps_0\int_{\mathcal{D}(0,\tau)} {\mathbb{M}}(\widetilde{\phi^0})
+\eps_0^{-1}\int_{\mathcal{D}(0,\tau)}
r^{-3}|\widetilde{\phi^1_{-2}}|^2, \\
\label{e-21}
\mathcal{E}_{\text{ex},-2}^1={}&\eps_1\int_{\mathcal{D}(0,\tau)} {\mathbb{M}}(\widetilde{\phi^1})
+\eps_1^{-1}\int_{\mathcal{D}(0,\tau)}\left(r^{-3}|\phi^2_{-2}|^2
+r^{-2}|\widetilde{\phi^0_{-2}}|^2\right),\\
\mathcal{E}_{\text{ex},-2}^2={}&0.
\end{align}
\end{subequations}
Let
\begin{equation}\label{def:ErrorTermFromSourceF}
\mathcal{E}_{s}^i(\mu_i)=\mathcal{E}_{\text{main},s}^i(\mu_i)
+\mathcal{E}_{\text{ex},s}^i.
\end{equation}

\begin{prop}\label{prop:MoraEstiAlmostScalarWave}
Under the assumptions in Theorem \ref{thm:EneAndMorEstiExtremeCompsNoLossDecayVersion2},
$\phi^i_s$ $(i=0,1,2, s=\pm 2)$ in \eqref{eq:DefOfphi012PosiSpinS2} and \eqref{eq:DefOfphi012NegaSpinS2}  satisfies the corresponding equation in the linear wave systems \eqref{eq:ReggeWheeler Phi^012KerrS2} and \eqref{eq:ReggeWheeler Phi^012KerrNegaS2} with the inhomogeneous term $F_s^i$.
Let $\varphi_s^i$
be any of the set
\begin{equation}
\left\{r^{4-\delta}\phi_{+2}^0,r^{2-\delta}\phi_{+2}^1,\phi_{+2}^2,
\widetilde{\phi^0_{-2}},\widetilde{\phi^1_{-2}},\phi^2_{-2}\right\},
\end{equation}
and has the same upper and lower indexes as $\phi_s^i$.
Then, for any $0<\delta<1/2$ and any $\mu_i>0$, there exist universal constants $\veps_0>0$,  $R=R(M)$ and $C=C(M,\delta,\Sigma_0)=C(M,\delta,\Sigma_{\tau})$ such that for all $|a|/M\leq \veps_0$ and any $\tau>0$,
the following estimates hold true:
\begin{subequations}\label{eq:MorawetEnergyEstimateforAlmostScalarWave}
\begin{itemize}
  \item For $(s,i)=(+2,0)$ or $(+2,1)$,
  \begin{align}\label{eq:MorawetEnergyEstimateforAlmostScalarWavea}
\hspace{4ex}&\hspace{-4ex}
{E}_{\tau}(\varphi_s^i)+
{E}_{\mathcal{H}^+(0,\tau)}(\varphi_s^i)
+\int_{\mathcal{D}(0,\tau)} \widetilde{\mathbb{M}}_{\text{deg}}(\varphi_s^i)\notag\\
\leq {}&C(\mu_i^{-1})E_{0}^{\text{total}}(s)+C\mathcal{E}_{s}^i(\mu_i);
\end{align}
  \item For other combinations of $(s,i)$,
  \begin{align}\label{eq:MorawetEnergyEstimateforAlmostScalarWaveb}
\hspace{4ex}&\hspace{-4ex}{E}_{\tau}(\varphi_s^i)
+{E}_{\mathcal{H}^+(0,\tau)}(\varphi_s^i)
+\int_{\mathcal{D}(0,\tau)} \mathbb{M}_{\text{deg}}(\varphi_s^i)\notag\\
\leq {}&C(\mu_i^{-1})E_{0}^{\text{total}}(s)+C\mathcal{E}_{s}^i(\mu_i)
.
\end{align}
\end{itemize}
\end{subequations}
Here,  $E_{0}^{\text{total}}(s)$ is defined as in \eqref{def:EnergynormtotalS2Kerr}.
\end{prop}

\begin{proof}
The separation in variables in Section \ref{sect:SeparateAngAndRadialEqs} requires $\tau> 2\veps^{-1}$, hence
we separate the cases $\tau> 2\veps^{-1}$ and $0\leq \tau\leq 2\veps^{-1}$. For each $\psi=\phi_s^i$, the choice of $\veps$ is made to be different. The rest of the proof will be for each single $\psi=\phi_s^i$.

In the first case where $\tau> 2\veps^{-1}$, we have otained an identity \eqref{eq:sumup:mainidentity}. Add to the equation \eqref{eq:sumup:mainidentity} $c_2$ times the red-shift estimate near horizon in Proposition \ref{prop:RedShiftEstiInhomoSWRWE} (if $\psi\in\{\phi^i_{+2}\}_{i=0,1,2}\cap \{\phi_{-2}^2\}$) and Proposition \ref{prop:RedShiftEstiInhomoSWRWEtildephi01} (if $\psi\in\{\phi^0_{-2},\phi^1_{-2}\}$)
and $c_2$ times the
Morawetz estimate in large $r$ region in Proposition \ref{prop:ImprovedMoraEstiLargerSWRWE} (if $\psi\in \{\phi^0_{+2},\phi^1_{+2}\}$) or Proposition \ref{prop:ImprovedMoraEstiLargerSWRWE1} (if $\psi\in \{\phi^0_{-2}, \phi^1_{-2}, \phi^2_{-2}, \phi^2_{+2}\}$) applied to the spacetime region $\mathcal{D}(0,\tau)$. Then for sufficiently small $\veps_0$ and $\varepsilon$, choosing $\varepsilon_3$, $\varepsilon_5$ and $c_2$ to satisfy $a_0^2/M^2+\varepsilon^2+e_0 + \varepsilon_3+ \varepsilon_5\ll c_2\ll 1$ allows us to absorb the bulk integrals of $\psi$ over $[r_0,r_1]$ and $[R_0-M,R_0]$ on the RHS of these two estimates, the bulk integrals over $[r_0,R_0]$ in both \eqref{eq:sumup:errorterm:cnear} and \eqref{eq:sumup:errorterm:snear}, and the bulk integral over $[r_0,r_1]$ in \eqref{eq:sumup:fluxterm} by the LHS of \eqref{eq:sumup:Moraterm}, and absorb the last line of \eqref{eq:sumup:errorterm:cnear}, the last line of \eqref{eq:sumup:errorterm:snear}, the bulk integral of $\psi$ in \eqref{eq:sumup:errorterm:saway}, the horizon flux term in \eqref{eq:sumup:fluxterm} and the RHS of \eqref{eq:sumup:remainderR} by the LHS of the added red-shift estimate near horizon and Morawetz estimate in large $r$ region. The parameters $\varepsilon_3$, $\varepsilon_5$ and $c_2$ are now fixed, and the LHS of the obtained estimate thus dominates over the LHS of \eqref{eq:MorawetEnergyEstimateforAlmostScalarWave}. The remaining bulk integrals on the RHS of the added red-shift estimate near horizon and Morawetz estimate in large radius region are bounded by $\mathcal{E}_s^i$ using the Cauchy--Schwarz inequality. Note here that we actually used the following estimate from the Cauchy--Schwarz inequality for the last term in \eqref{eq:ImprovedMoraEstiLargerSWRWE}:
\begin{align}
\label{eq:sumup:errorMoraallphi}
\hspace{4ex}&\hspace{-4ex} \bigg|\int_{\mathcal{D}(0,\tau)\cap\{r\geq R_0-M\}}\Re\left(F_{s}^iX_w\overline{\phi_{s}^i}\right)\bigg|\notag\\
\lesssim{}&\veps_6 \int_{\mathcal{D}(0,\tau)} \mathbb{M}_{\text{deg}}(\varphi_s^i)
+\veps_6^{-1}\int_{\mathcal{D}(0,\tau)}r^{-3+\delta}|F_{s}^i|^2
\notag\\
\lesssim{}&\veps_6 \int_{\mathcal{D}(0,\tau)} \mathbb{M}_{\text{deg}}(\varphi_s^i)
+\veps_6^{-1}\mathcal{E}_{s}^i(\mu_i),
\end{align}
and chose $\veps_6$ small enough such that the term $\veps_6 \int_{\mathcal{D}(0,\tau)} \mathbb{M}_{\text{deg}}(\varphi_s^i)$ is absorbed by the LHS of \eqref{eq:MorawetEnergyEstimateforAlmostScalarWave}. In proving the estimate \eqref{eq:sumup:errorMoraallphi}, we also showed in the second step that both the second term on the RHS of \eqref{eq:sumup:errorterm:snear} and the last term in \eqref{eq:sumup:errorterm:saway} are  bounded by $C\mathcal{E}_{s}^i(\mu_i)$.

We now obtain an estimate in which the LHS dominates over the LHS of \eqref{eq:MorawetEnergyEstimateforAlmostScalarWave} and all the terms on the RHS are bounded by $C(\veps^{-1})E_{0}^{\text{total}}(s)+C\mathcal{E}_{s}^i$ except for terms in the following two categories:
\begin{enumerate}
\item\label{pt:1}
$\varepsilon^{-1}\big(
\int_{\Sigma_{\tau-\varepsilon^{-1}}}\left|e(\psi)\right|
+e_0E_{\tau-\varepsilon^{-1}}^{\text{total}}(s)
\big)$ in \eqref{eq:sumup:Moraterm}, $\int_{\Sigma_{\tau-\varepsilon^{-1}}}
\left|e(\psi)\right|
+e_0{E}_{\tau-\varepsilon^{-1}}^{\text{total}}(s)$ in \eqref{eq:sumup:errorterm:caway}, and $\frac{\varepsilon}{\varepsilon_3} E_{\tau-\varepsilon^{-1}}^{\text{total}}(s)$ in \eqref{eq:sumup:errorterm:cnear};
\item\label{pt:2}
$\veps^{-1}\mathcal{E}_{F,e_0,0,\tau}$ in \eqref{eq:sumup:Moraterm},
$\mathcal{E}_{F,e_0,0,\varepsilon^{-1}}$ in \eqref{eq:sumup:errorterm:caway},
$\mathcal{E}_{F,e_0,\tau-\varepsilon^{-1},\tau}$ in \eqref{eq:sumup:errorterm:caway}, $\mathcal{E}_{F,e_0,0,\tau}$ in \eqref{eq:sumup:fluxterm},
and ${\varepsilon_3}^{-1}{\varepsilon}\mathcal{E}_{F,1,0,\tau}$ in \eqref{eq:sumup:errorterm:cnear}.
\end{enumerate}
Consider first the three terms in Category \ref{pt:1}.
One adds \eqref{eq:SWRWEEnerEstiII} with $\tilde{e}=1$, $\tau_1=0$ and $\tau_2=\tau-{\varepsilon}^{-1}$ to its corresponding Morawetz estimate in large radius region (i.e. Propositions \ref{prop:ImprovedMoraEstiLargerSWRWE1} and \ref{prop:ImprovedMoraEstiLargerSWRWE}), then summing over $i\in \{0,1,2\}$ gives
\begin{align} E_{\tau-\varepsilon^{-1}}^{\text{total}}(s)
\lesssim{}& E_{0}^{\text{total}}(s)
+\sum_{i=0,1,2}\int_{\mathcal{D}(0,\tau)}
\Sigma^{-1}\Re({F_s^i}T\overline{\phi_s^i})
\notag\\
&+\sum_{i=0,1,2}\int_{\mathcal{D}(0,\tau)\cap \{[r_+,r_1]\cap [R_0-M,R_0]\}}(|\partial\varphi_s^i|^2+|F_s^i|^2).
\end{align}
For the term $\int_{\Sigma_{\tau-\varepsilon^{-1}}}\left|e(\psi)\right|$, one obtains from the estimate \eqref{eq:SWRWEEnerEstiII} with $\tilde{e}=e_0\sim \veps_0^2$, $\tau_2=\tau-\varepsilon^{-1}$ and $\tau_1=0$ that
\begin{align}
\int_{\Sigma_{\tau-\varepsilon^{-1}}}\left|e(\psi)\right|
\lesssim {}&E_0(\varphi_s^i)
+\int_{\mathcal{D}(0,\tau)}
{\Sigma^{-1}}\Re({F_s^i}T\overline{\phi_s^i})\notag\\
&+\veps_0^2\int_{\mathcal{D}(0,\tau)\cap [r_+,r_1]}(|F_s^i|^2+|\varphi_s^i|^2).
\end{align}
From the expressions of $F_s^i$,
\begin{align}
\sum_{i=0}^{2}\int_{\mathcal{D}(0,\tau)\cap \{[r_+,r_1]\cap [R_0-M,R_0]\}}(|\partial\varphi_s^i|^2+|F_s^i|^2)\lesssim
\sum_{i=0}^{2}\int_{\mathcal{D}(0,\tau)}\mathbb{M}_{\text{deg}}(\varphi_s^i).
\end{align}
We combine these three estimates and choose $\veps\sim\mu_i$, then for sufficiently small $\veps_0$, these three terms are bounded by the RHS of  \eqref{eq:MorawetEnergyEstimateforAlmostScalarWave}.

Turn to the terms in Category \ref{pt:2}. Note that for each $\psi=\phi_s^i$, $\veps$ is chosen to satisfy $\veps\sim\mu_i$. Consider one fixed $\psi=\phi_s^i$. From the expression \eqref{def:bulktermtotalS2Kerr}, the first four terms therein are bounded by
\begin{align}
\mu_i^{-1}\bigg(e_0
\int_{\mathcal{D}(0,\tau)\cap[r_+,r_1]}
\mathbb{B}(\tilde{\psi},F)+\bigg|\int_{\mathcal{D}(0,
\tau)}\Re\left(\Sigma^{-1} F T\bar{\psi}\right)\bigg|\bigg).
\end{align}
Since $e_0\sim \veps_0^2$, by choosing $\veps_0\ll \mu_i$,  this is further bounded by $C\mathcal{E}_{s}^i(\mu_i)$. Since $\veps_3$ has been fixed, $\veps \sim \mu_i$, and from the expression \eqref{def:bulktermtotalS2Kerr}, one obtains the estimate for the last term ${\varepsilon_3}^{-1}{\varepsilon}\mathcal{E}_{F,1,0,\tau}\leq C\mathcal{E}_{s}^i(\mu_i)$.

For each $\psi=\phi_s^i$, the other case $\tau\leq 2\veps^{-1}\lesssim \mu_i^{-1}$ follows from a standard well-posedness argument of a general linear wave system.
\qed
\end{proof}

\subsection{Complete the Proof of \eqref{eq:MoraEstiFinal(2)KerrRegularpsiBothSpinComp} on Slowly Rotating Kerr}\label{sect:finishpfS2Kerr}

The estimates \eqref{eq:estiphi02hatphi1kerrS2} for spin $+2$ component and \eqref{eq:estiphi02hatphi1kerrS2Nega} for spin $-2$ component are proved on slowly rotating Kerr backgrounds in this subsection.

\subsubsection{Spin $+2$ Component}\label{sec:positivespin1}
Let us treat the error terms $\mathcal{E}_{+2}^i(\mu_i)$ in the energy and Morawetz estimate \eqref{eq:MorawetEnergyEstimateforAlmostScalarWave}. An application of the Cauchy--Schwarz inequality gives
\begin{align}
\label{eq:Errorpartialtphi0Posti}
\hspace{6ex}&\hspace{-6ex}
\bigg|\int_{\mathcal{D}(0,\tau)}{\Sigma^{-1}}
\Re\left(F_{+2}^0 \partial_t\overline{\phi^0}\right)\bigg|\notag\\
\lesssim_{\veps_0} {}&\eps_0\int_{\mathcal{D}(0,\tau)} \widetilde{\mathbb{M}}(r^{4-\delta}\phi^0)
+{\eps_0^{-1}}
\int_{\mathcal{D}(0,\tau)}r^{-3}\left|\phi^1_{+2}\right|^2\notag\\
\lesssim_{\veps_0}{}&
\int_{\mathcal{D}(0,\tau)} \bigg(\eps_0 \widetilde{\mathbb{M}}(r^{4-\delta}\phi^0)
+
\frac{\hat{\eps}_1}{\eps_0} \widetilde{\mathbb{M}}(r\phi^1)+\frac{1}{\eps_0\hat{\eps}_1} \mathbb{M}_{\text{deg}}(\phi^2)\bigg),\\
\label{eq:Errorpartialtphi1Posti}
\hspace{6ex}&\hspace{-6ex}
\bigg|\int_{\mathcal{D}(0,\tau)}{\Sigma^{-1}}\Re\left(F_{+2}^1 \partial_t\overline{\phi^1}\right)\bigg|\notag\\
\lesssim_{\veps_0} {}&\int_{\mathcal{D}(0,\tau)} \left( \eps_1\widetilde{\mathbb{M}}(r^{2-\delta}\phi^1)
+\eps_1^{-1}\big(
r^{-2}|\phi^0_{+2}|^2
+r^{-3}|\phi^2_{+2}|^2\big)\right).
\end{align}
Here, we used the estimate \eqref{eq:estiphi1byphi02final1} to bound the integral ${\eps_0^{-1}}
\int_{\mathcal{D}(0,\tau)}r^{-3}\left|\phi^1_{+2}\right|^2$.
For the term $\left\vert\int_{\mathcal{D}(0,\tau)}{\Sigma^{-1}}\Re(F_{+2}^2 \partial_t\overline{\phi^2})\right\vert$, we have
\begin{align}
&\bigg|\int_{\mathcal{D}(0,\tau)}
\Sigma^{-1}\Re\left(F_{+2}^2 \partial_t\overline{\phi^2}\right)\bigg|
\lesssim{}
\bigg|\int_{\mathcal{D}(0,\tau)}a^2\Sigma^{-1}
\Re\left(\partial_t\phi^1\partial_{t} \overline{\phi^2}\right)\bigg|\notag\\
&\quad \quad \quad +\bigg|\int_{\mathcal{D}(0,\tau)}a^2\Sigma^{-1}
\Re\left(\phi^0\partial_{t}
\overline{\phi^2}\right)\bigg|
+\bigg|\int_{\mathcal{D}(0,\tau)}a\Sigma^{-1}
\Re\left(\partial_{\phi}\phi^1\partial_{t}
\overline{\phi^2}\right)\bigg|.
\end{align}
These terms, a priori, cannot be estimated in the trapped region due to the trapping degeneracy.
The sum of the first and second integrals on the RHS is
\begin{align}\label{eq:EstiIKerr1 mostannoyingterm1Neun2o}
\hspace{6ex}&\hspace{-6ex}
\bigg|\int_{\mathcal{D}(0,\tau)}\frac{a^2}{2\Sigma}
Y\left(r^2|\partial_t\phi^1|^2\right)\bigg|
+\bigg|\int_{\mathcal{D}(0,\tau)}\frac{a^2}{\Sigma}
\left(\partial_{t^*}\left(\Re(\phi^0 \overline{\phi^2})\right)
-\Re(\partial_{t}\phi^0
\overline{\phi^2})\right)\bigg|
\notag\\
 \lesssim_{\veps_0} {}&0,
\end{align}
where an integration by part is used for the first term on the LHS of \eqref{eq:EstiIKerr1 mostannoyingterm1Neun2o}.
As to the third integral, we take $\check{r}_1 \in (r_0,r_{\text{trap}}^-)$ and $\check{R}_1 > r_{\text{trap}}^+$ to be fixed and split this integral into three sub-integrals over $[r_+, \check{r}_1]$, $[\check{r}_1, \check{R}_1]$, and $[\check{R}_1,\infty)$, respectively.
The sum of the sub-integrals over $[r_+, \check{r}_1]$ and $[\check{R}_1,\infty)$ is bounded by $C\Xi_{+2}(0,\tau)$ since their integral regions are away from the trapped region.
For the remaining sub-integral over $[\check{r}_1, \check{R}_1]$, we utilize
\begin{equation}\label{eq:partialtexpressedbyYpartialphipartialr}
\partial_t \phi^2=
(\R)^{-1}\left(\Delta Y\phi^2-a\partial_{\phi}\phi^2+\Delta\partial_{r}\phi^2\right),
\end{equation}
and find this sub-integral is bounded by
\begin{align}\label{eq:ControlInTrappingRegionPosiSpin1}
&\bigg|\int_{\mathcal{D}(0,\tau)\cap[\check{r}_2,R_1]}
\left(\tfrac{2a\Delta}{r\Sigma(\R)}\Re\left(\partial_{\phi}
(\overline{r\phi^1})Y\phi^2\right)
-\tfrac{a^2}{\Sigma(\R)}Y\left(\left|
\partial_{\phi}(r\phi^1)\right|^2\right)\right)\bigg|\notag\\
& +\bigg|\int_{\mathcal{D}(0,\tau)\cap[\check{r}_2,R_1]}\tfrac{2a\Delta}{\Sigma(\R)}
\Re\left(\partial_{\phi}
(\overline{\phi^1})\partial_r\phi^2\right)\bigg|
\lesssim_{\veps_0} {}
0.
\end{align}
In the last step, integrations by parts are applied to the first line and two radius parameters $\check{r}_1$ and $\check{R}_1$ are appropriately chosen such that the boundary terms at $\check{r}_1$ and $\check{R}_1$ are bounded via an average of integration by $\tfrac{C|a|}{M} \int_{\mathcal{D}(0,\tau)}\widetilde{\mathbb{M}}(r^{4-\delta}\phi^0)
+\widetilde{\mathbb{M}}(r^{2-\delta}\phi^1)$.
Therefore, it holds true that
\begin{align}
\label{eq:EstiIKerr1 mostannoyingterm1Neun2I}
\bigg|\int_{\mathcal{D}(0,\tau)\cap [\check{r}_1,\check{R}_1]}a\Sigma^{-1}
\Re\left(\partial_{\phi}\phi^1\partial_{t}
\overline{\phi^2}\right)\bigg|
\lesssim_{\veps_0} {}&0,
\end{align}
which further implies together with the above discussions that
\begin{align}\label{eq:estiForF2partialtphi2}
\bigg|\int_{\mathcal{D}(0,\tau)}\Sigma^{-1}\Re\left(F_{+2}^2 \partial_t\overline{\phi^2}\right)\bigg|
\lesssim_{\veps_0} 0.
\end{align}
Note that the last term in \eqref{e+20} is bounded already in \eqref{eq:Errorpartialtphi0Posti} using \eqref{eq:estiphi1byphi02final1}.
The estimates \eqref{eq:estiphi02hatphi1kerrS2} are then valid from Theorem \ref{prop:MoraEstiAlmostScalarWave} and the estimates \eqref{eq:Errorpartialtphi0Posti}, \eqref{eq:Errorpartialtphi1Posti} and \eqref{eq:estiForF2partialtphi2}.

\subsubsection{Spin $-2$ Component}\label{sec:NegativeSpin1}
We now estimate the error terms $\mathcal{E}_{-2}^i(\mu_i)$ in the energy and Morawetz estimate \eqref{def:ErrorTermFromSourceF}. Note from the Cauchy--Schwarz inequality, \eqref{eq:estiphi1byphi02negafinal1}, \eqref{eq:estiphi0byphi02negafinal1} and \eqref{eq:estiphi0byphi1angunegafinal1} that
\begin{align}
\label{eq:Errorpartialtphi0Nega}
\hspace{6ex}&\hspace{-6ex}
\bigg|\int_{\mathcal{D}(0,\tau)}\Sigma^{-1}\Re\left(F_{-2}^0 \partial_t\overline{\phi^0}\right)\bigg|\notag\\
\lesssim_{\veps_0} {}&\eps_0
\int_{\mathcal{D}(0,\tau)} {\mathbb{M}}(\phi^0)
+\eps_0^{-1}
\int_{\mathcal{D}(0,\tau)}r^{-3}|\widetilde{\phi^1}|^2\notag\\
\lesssim_{\veps_0}{}&
\int_{\mathcal{D}(0,\tau)} \left(\eps_0{\mathbb{M}}(\widetilde{\phi^0})
+\eps_0^{-1} \mathbb{M}_{\text{deg}}(\phi^2)\right),\\
\label{eq:Errorpartialtphi1Nega}
\hspace{6ex}&\hspace{-6ex}
\bigg|\int_{\mathcal{D}(0,\tau)}\Sigma^{-1}\Re\left(F_{-2}^1 \partial_t\overline{\phi^1}\right)\bigg|\notag\\
\lesssim_{\veps_0} {}&\int_{\mathcal{D}(0,\tau)} \Big(\eps_1
{\mathbb{M}}(\phi^1)+
{\eps_1^{-1}}\Big(
r^{-2}{|\widetilde{\phi^0}|^2}
+r^{-3}{|\phi^2|^2}+\tfrac{|a|}{M}|\nablaslash \widetilde{\phi^0}|^2\Big)
\Big)\notag\\
\lesssim_{\veps_0} {} &
\int_{\mathcal{D}(0,\tau)} \Big(\eps_1+\eps_1^{-1}\tfrac{|a|}{ M} \Big) {\mathbb{M}}(\widetilde{\phi^1})+
\eps_1^{-1}\left(
\mathbb{M}_{\text{deg}}(\phi^2)
+\tfrac{|a|}{M}\mathbb{M}(\phi^0)
\right).
\end{align}

For the term $\left|\int_{\mathcal{D}(0,\tau)}\Sigma^{-1}\Re\left(F_{-2}^2 \partial_t\overline{\phi^2}\right)\right|$, we have
\begin{align}\label{eq:MainAnnoyingErrorTermNega}
&\bigg|\int_{\mathcal{D}(0,\tau)}\Sigma^{-1}\Re\left(F_{-2}^2 \partial_t\overline{\phi^2}\right)\bigg|
\lesssim
\bigg|\int_{\mathcal{D}(0,\tau)}\frac{a^2}{\Sigma}
\Re\left(\partial_t\phi^1\partial_{t} \overline{\phi^2}\right)\bigg|\notag\\
&\quad \quad \quad \quad \quad
+\bigg|\int_{\mathcal{D}(0,\tau)}\frac{a^2}{\Sigma}
\Re\left(\phi^0\partial_{t}
\overline{\phi^2}\right)\bigg|
+\bigg|\int_{\mathcal{D}(0,\tau)}\frac{a}{\Sigma}
\Re\left(\partial_{\phi}\phi^1\partial_{t}
\overline{\phi^2}\right)\bigg|
.
\end{align}
We split the first integral on the RHS into two sub-integrals over $[r_+,\check{r}_2]$ and $[\check{r}_2,\infty)$, with $\check{r}_2\in (r_1,r_{\text{trap}}^-)$ to be fixed, and obtain
\begin{align}\label{eq:EstiIKerr1NegaSpin}
\hspace{6ex}&\hspace{-6ex}
\bigg|\int_{\mathcal{D}(0,\tau)}\frac{a^2}{\Sigma}
\Re\left(\partial_t\phi^1\partial_{t} \overline{\phi^2}\right)\bigg|\notag\\
\leq {}&
\bigg|\int_{\mathcal{D}(0,\tau)\cap[\check{r}_2,\infty)}
\frac{a^2}{2\Sigma}
V\left(r^2|\partial_t\phi^1|^2\right)\bigg|
+\bigg|\int_{\mathcal{D}(0,\tau)\cap[r_+, \check{r}_2]}\frac{a^2}{\Sigma}
\Re\left(\partial_t\phi^1\partial_{t} \overline{\phi^2}\right)\bigg|\notag\\
\lesssim_{\veps_0} {}&\frac{|a|}{M}\int_{\mathcal{D}(0,\tau)\cap \{r=\check{r}_2\}}|\partial \phi^1|^2,
\end{align}
where in the last step an integration by parts is performed for the first term in the second line and the second term in the second line can be directed estimated vis Cauchy--Schwarz inequality.
The radius parameter $\check{r}_2$ is chosen such that the last term in \eqref{eq:EstiIKerr1NegaSpin} can be bounded, via an average of integration, by
\begin{align}\label{eq:ControlOfEnergyOnConstantrc}
\frac{|a|}{M}\int_{\mathcal{D}(0,\tau)\cap \{r=\check{r}_2\}}\left|\partial \phi^1\right|^2\lesssim \frac{|a|}{M}\int_{\mathcal{D}(0,\tau)}{\mathbb{M}}(\phi^1)
\lesssim \frac{|a|}{M}\int_{\mathcal{D}(0,\tau)}{\mathbb{M}}(\widetilde{\phi^1}).
\end{align}
We split the integral region of the terms in the second line of \eqref{eq:MainAnnoyingErrorTermNega} into two subregions $[r_+, \check{r}_3]$ and $(\check{r}_3,\infty)$ with $\check{r}_3 \in (r_0,r_{\text{trap}}^-)$ to be fixed.
 The terms integrated over $[r_+, \check{r}_3]$ are bounded by $C\Xi_{-2}(0,\tau)$ because of no trapping degeneracy in this subregion. While, for the integrals over $(\check{r}_3,\infty)$,
we rewrite $\partial_t\phi^2$ as
\begin{equation}
\partial_t \phi^2=
(\R)^{-1}\left(\Delta V\phi^2-a\partial_{\phi}\phi^2-\Delta\partial_{r}\phi^2\right),
\end{equation}
and find these integrals are dominated by
\begin{align}\label{eq:errortermdtofphi2negaS2KerrIStep8}
\hspace{6ex}&\hspace{-6ex}\left|\int_{\mathcal{D}(0,\tau)\cap[\check{r}_3,\infty)}
\left(\frac{a\Delta}{r\Sigma(\R)}\Re\left(\partial_{\phi}
(\overline{r\phi^1})V\phi^2\right)
-\frac{a^2}{2\Sigma(\R)}V\left(\left|
\partial_{\phi}(r\phi^1)\right|^2\right)\right)\right|\notag\\
\hspace{6ex}&\hspace{-6ex}
+\left|\int_{\mathcal{D}(0,\tau)\cap[\check{r}_3,\infty)}
\frac{a^2\Delta}{r\Sigma(\R)}\Re\left(
(\overline{r\phi^0})V\phi^2\right)
\right|\notag\\
\hspace{6ex}&\hspace{-6ex}
+\left|\int_{\mathcal{D}(0,\tau)\cap[\check{r}_3,\infty)}
\frac{\Delta}{\Sigma(\R)}\left(\Re\left(a\partial_{\phi}
\overline{\phi^1}\partial_r\phi^2\right)
+\Re\left(a^2
\overline{\phi^0}\partial_r\phi^2\right)\right)\right|\notag\\
\hspace{6ex}&\hspace{-6ex}
+\left|\int_{\mathcal{D}(0,\tau)\cap[\check{r}_3,\infty)}
\frac{a^2}{\Sigma(\R)}\left(\partial_{\phi}
\left(\Re\left(\phi^0\overline{\phi^2}\right)\right)
-\Re\left(\partial_{\phi}\phi^0\overline{\phi^2}\right)\right)\right|\notag\\
\lesssim_{\veps_0}  {}&\int_{\mathcal{D}(0,\tau)}\frac{|a|}{M}\left(
|\nablaslash\widetilde{\phi^1}|^2
+r^{-2}|\widetilde{\phi^0}|^2\right).
\end{align}
Here, we applied integration by parts to the first two lines and utilized the definition \eqref{eq:DefOfphi012NegaSpinS2} and similar estimates as \eqref{eq:ControlOfEnergyOnConstantrc} to control the boundary terms at $\check{r}_3$ by appropriately choosing this radius parameter.
In summary, we have
\begin{align}\label{eq:estiForF1partialtphi1Nega}
\left|\int_{\mathcal{D}(0,\tau)}\Sigma^{-1}\Re\left(F_{-2}^2 \partial_t\overline{\phi^2}\right)\right|
\lesssim_{\veps_0} {}&\int_{\mathcal{D}(0,\tau)}\frac{|a|}{M}\left(
|\nablaslash\widetilde{\phi^1}|^2
+r^{-2}{|\widetilde{\phi^0}|^2}\right)\lesssim_{\veps_0} 0,
\end{align}
where the second step comes from the estimates \eqref{eq:estiphi0byphi02negafinal1} and \eqref{eq:estiphi1byphi2angunegafinal1}. In the meanwhile, the last term in \eqref{e-20} and the last integral in \eqref{e-21} have been estimated in \eqref{eq:Errorpartialtphi0Nega} and
\eqref{eq:Errorpartialtphi1Nega}.
The estimates \eqref{eq:estiphi02hatphi1kerrS2Nega} follow from the conclusion in Proposition \ref{prop:MoraEstiAlmostScalarWave}, the estimates \eqref{eq:Errorpartialtphi0Nega},
\eqref{eq:Errorpartialtphi1Nega} and \eqref{eq:estiForF1partialtphi1Nega}, and the fact that we can choose $\veps_0$ sufficiently small such that all the terms with coefficients proportional to $|a|/M$ can be put into $\lesssim_{\veps_0}$.

\subsection{An Energy Bound}\label{sect:energyboundinitial}
The following bound on the term
$\int_{\Sigma_0}r(|\nablaslash \widetilde{\phi^0}|^2+|\nablaslash \widetilde{\phi^1}|^2)$
is used in \eqref{eq:estiMoraphi012Kerrnegacomp}.
\begin{prop}\label{prop:InitialEnergyControlsMore}
For the spin $-2$ component, we have for any real value $\tau$ that
\begin{align}\label{eq:ControlInitrPhi01}
\int_{\Sigma_{\tau}}r(|\nablaslash \widetilde{\phi^0}|^2+|\nablaslash \widetilde{\phi^1}|^2)\lesssim E_{\tau}(\widetilde{\phi^0}) +E_{\tau}(\widetilde{\phi^1})
+E_{\tau}(\phi^2_{-2}).
\end{align}
\end{prop}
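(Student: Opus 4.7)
The plan is to derive a Hardy-type inequality on $\Sigma_0$ by exploiting the transport-type identities encoded in the definitions \eqref{eq:DefOfphi012NegaSpinS2} and \eqref{def:widetildephi01KerrNegaS2}. First I would carry out the algebra $(rVr)f = r^2 Vf + rf$, substitute $\phi^0_{-2} = \Delta^2 r^{-4}\widetilde{\phi^0}$ and $\phi^1_{-2} = \Delta r^{-2}\widetilde{\phi^1}$, and verify the exact pointwise identities
\[
\widetilde{\phi^1} = -\Delta V\widetilde{\phi^0} - \tfrac{r^2+2Mr-3a^2}{r}\widetilde{\phi^0},
\qquad
\phi^2_{-2} = -\Delta V\widetilde{\phi^1} - \tfrac{r^2-a^2}{r}\widetilde{\phi^1},
\]
valid throughout the DOC. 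The zeroth-order coefficients are bounded below by a multiple of $r$ on $[r_+,\infty)$ (since at $r_+$ one has $r_+^2+2Mr_+-3a^2 = 4\sqrt{M^2-a^2}(\sqrt{M^2-a^2}+M) > 0$), and grow linearly in $r$ at infinity.

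Next I would apply an angular derivative $\nablaslash_j$ to the first identity, using that the coefficients of $V = \partial_r + \frac{r^2+a^2}{\Delta}\partial_{t^*} + \frac{a}{\Delta}\partial_{\phi^*}$ depend only on $r$ in Boyer--Lindquist (or, for $r$ beyond the cutoff in \eqref{def:globalkerrcoord}, in global Kerr coordinates), so that $[\nablaslash_j,V]$ has coefficients of lower order than $\Delta$. Multiplying by $\overline{\nablaslash_j\widetilde{\phi^0}}$, taking the real part, summing in $j$, and integrating over $\Sigma_0$ against $\Sigma\,dr\,d\sigma_{\mathbb{S}^2}$ yields the schematic identity
\[
\int_{\Sigma_0}\tfrac{r^2+2Mr-3a^2}{r}|\nablaslash\widetilde{\phi^0}|^2\Sigma
= -\int_{\Sigma_0}\Re(\nablaslash\widetilde{\phi^1}\cdot\overline{\nablaslash\widetilde{\phi^0}})\Sigma
- \tfrac{1}{2}\int_{\Sigma_0}\Delta\Sigma\, V(|\nablaslash\widetilde{\phi^0}|^2),
\]
where the left side dominates $\int_{\Sigma_0} r|\nablaslash\widetilde{\phi^0}|^2\Sigma$ modulo a universal multiple of $E_0(\widetilde{\phi^0})$. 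For the $V$-divergence term I would split $V$ into its three components: the $\partial_r$-piece integrates by parts with vanishing boundary at $r=r_+$ (since $\Delta(r_+)=0$) and at spatial infinity (by the compact support of the data), producing $\int\partial_r(\Delta\Sigma/2)|\nablaslash\widetilde{\phi^0}|^2 \sim \int r\cdot\Sigma|\nablaslash\widetilde{\phi^0}|^2$, which combines with the left side; the $\partial_{\phi^*}$-piece vanishes by periodicity on $\mathbb{S}^2$. The cross term is handled by weighted Cauchy--Schwarz, giving an absorbable $\epsilon\int r|\nablaslash\widetilde{\phi^0}|^2\Sigma$ plus a remainder bounded by $\int|\nablaslash\widetilde{\phi^1}|^2\Sigma/r \lesssim E_0(\widetilde{\phi^1})$.

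The hard part is the non-tangential $\partial_{t^*}$-piece of $V$, which cannot be integrated by parts on $\Sigma_0$. My plan is to apply weighted Cauchy--Schwarz to $\int\Sigma(r^2+a^2)\Re(\partial_{t^*}\nablaslash\widetilde{\phi^0}\cdot\overline{\nablaslash\widetilde{\phi^0}})$, extracting an $\epsilon$-absorbable part and an error of the form $\int r^2(r^2+a^2)\Sigma|\partial_{t^*}\nablaslash\widetilde{\phi^0}|^2\,dr\,d\sigma$. This error is then controlled via the \emph{differentiated} transport identity (applying $\partial_{t^*}$, which commutes with $V$ since the coefficients depend only on $r$): $\partial_{t^*}\widetilde{\phi^1}$ carries $\Delta V\partial_{t^*}\widetilde{\phi^0}$ at leading order, so that up to commutators and lower-order terms, $\int|\nablaslash\partial_{t^*}\widetilde{\phi^1}|^2\Sigma$ dominates the required weighted integral, and this in turn is absorbed into $E_0(\widetilde{\phi^1})$.

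Repeating the entire procedure with the second transport identity and the pair $(\widetilde{\phi^1},\phi^2_{-2})$ gives $\int_{\Sigma_0} r|\nablaslash\widetilde{\phi^1}|^2 \lesssim E_0(\widetilde{\phi^1}) + E_0(\phi^2_{-2})$, and summing the two estimates yields \eqref{eq:ControlInitrPhi01} with a constant depending only on $M$ and $\Sigma_0$.
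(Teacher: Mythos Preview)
Your transport-identity approach has a genuine gap at exactly the step you flag as ``the hard part.'' The $\partial_{t^*}$-component of $\Delta V$ carries the coefficient $r^2+a^2$, which is of order $r^2$ and has \emph{no} smallness (it is present already on Schwarzschild). After Cauchy--Schwarz you are left with an error comparable to $\int_{\Sigma_0} r^3\,|\nablaslash\partial_{t^*}\widetilde{\phi^0}|^2\,\Sigma$, a second-order quantity in $\widetilde{\phi^0}$ with an $r$-weight higher than anything on the right of \eqref{eq:ControlInitrPhi01}. Your proposed remedy is to invoke the differentiated transport identity and claim that $\int_{\Sigma_0}|\nablaslash\partial_{t^*}\widetilde{\phi^1}|^2\,\Sigma$ absorbs into $E_0(\widetilde{\phi^1})$; but $E_0(\widetilde{\phi^1})=\int_{\Sigma_0}|\partial\widetilde{\phi^1}|^2$ controls only \emph{first} derivatives of $\widetilde{\phi^1}$, whereas $\nablaslash\partial_{t^*}\widetilde{\phi^1}$ is second order. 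That absorption is simply false, and iterating the transport identity only pushes the same deficit up to $\phi^2$ and beyond.

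The paper avoids this obstruction by abandoning the first-order transport relation in favour of the \emph{wave equation} \eqref{eq:ReggeWheeler Phi^0KerrNegaS2}, rewritten as \eqref{def:Qphi0phi1S2}. The point is that the wave operator already contains $\triangle_{\mathbb{S}^2}\phi^0$: multiplying by $r^{-1}\overline{\phi^0}$ and integrating by parts in the angular variables produces $\int r|\nablaslash\phi^0|^2$ directly, with no need to extract an $r$-weight from a zeroth-order coefficient. All remaining first-order terms ($Y\phi^1$, $\partial_\phi\phi^0$, etc.) pair against $r^{-1}\overline{\phi^0}$ with favourable weights and are bounded by $E_\tau(\widetilde{\phi^0})+E_\tau(\widetilde{\phi^1})$ via Cauchy--Schwarz. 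The only genuinely transversal second-order piece is $a^2\cos^2\theta\,\partial_{tt}^2\phi^0$, which (i) carries an $a^2$ prefactor and (ii) is handled by writing $\partial_t=\tfrac{\Delta}{r^2+a^2}V-\tfrac{a}{r^2+a^2}\partial_\phi-\partial_{r^*}$ and using the transport relation $V\phi^0=-r^{-2}\phi^1-r^{-1}\phi^0$ to convert both $\partial_t$'s into tangential derivatives and zeroth-order terms in $\phi^0,\phi^1$. The same argument applied to \eqref{def:Qphi1phi2S2} gives the $\widetilde{\phi^1}$ estimate, with $E_0(\phi^2_{-2})$ entering through the $Y\phi^2$ term.
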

\begin{proof}
Rewrite the equations \eqref{eq:ReggeWheeler Phi^0KerrNegaS2} and \eqref{eq:ReggeWheeler Phi^1KerrNegaS2} as
\begin{subequations}
\begin{align}
\label{def:Qphi0phi1S2}
0
={}&
\Big(\tfrac{1}{\sin{\theta}} \partial_{\theta}(\sin \theta \partial_{\theta})
+\tfrac{1}{\sin^2\theta}\partial_{\phi\phi}^2
-4i\tfrac{\cos\theta}{\sin^2\theta}\partial_{\phi}
-\tfrac{4}{\sin^2\theta}
+\tfrac{2r^2 -6Mr + 6a^2}{r^2}\Big)\phi^0
\notag\\
&
+\tfrac{\Delta}{r^2}Y\phi^1
+4ia\cos\theta \partial_t\phi^0
+a^2\cos^2\theta\partial_{tt}^2\phi^0
+2a\partial_{t\phi}^2\phi^0\notag\\
&
+\tfrac{2(a^2\partial_t+a\partial_{\phi})}{r}\phi^0
+\tfrac{2ar}{\R}\partial_{\phi}\phi^0
-\tfrac{3\Delta+a^2}{r^3}\phi^1
,\\
\label{def:Qphi1phi2S2}
0
={}&
\Big(\tfrac{1}{\sin{\theta}} \partial_{\theta}(\sin \theta \partial_{\theta})
+\tfrac{1}{\sin^2\theta}\partial_{\phi\phi}^2
-4i\tfrac{\cos\theta}{\sin^2\theta}\partial_{\phi}
-\tfrac{4}{\sin^2\theta}
+\tfrac{6Mr-6a^2}{r^2}\Big)\phi^1
\notag\\
&
+\tfrac{\Delta}{r^2}Y\phi^2
+4ia\cos\theta \partial_t\phi^1
+a^2\cos^2\theta\partial_{tt}^2\phi^1
+2a\partial_{t\phi}^2\phi^1
+\tfrac{6(a^2\partial_t+a\partial_{\phi})}{r}\phi^1
\notag\\
&+\tfrac{2ar}{\R}\partial_{\phi}\phi^1
-6(a^2 \partial_t + a\partial_{\phi})\phi^0-\tfrac{\Delta}{r^3}\phi^2
+\tfrac{12a^2-6Mr}{r}\phi^0.
\end{align}
\end{subequations}
Note from Remark \ref{rem:EigenvalueSpinWeightedAngular} that in each equation above, the operator on the first line of the RHS is an elliptic operator on $\mathbb{S}^2(r)$ ($r\geq r_+$).
By multiplying $r^{-1}\overline{\phi^0}$ on both sides of \eqref{def:Qphi0phi1S2}, taking the real part and integrating over $\Sigma_{\tau}\cap \{r\geq R_3\}$ with $R_3\geq 5M$ to be fixed, it follows from integration by parts that
\begin{align}\label{eq:ControlOnemorerweightenergyStep1}
\int_{\Sigma_{\tau}\cap \{r\geq R_3\}}r|\nablaslash \widetilde{\phi^0}|^2\lesssim \sum_{i=0,1}E_{\tau}(\widetilde{\phi^i}) + \bigg|\int_{\Sigma_{\tau}\cap \{r\geq R_3\}}\frac{a^2\cos^2\theta}{r}\Re(\partial_{tt}^2 \phi^0 \overline{\phi^0})\bigg|.
\end{align}
We substitute into the last integral the relation
\begin{equation}
\partial_{tt}^2 =\left(\tfrac{\Delta}{\R}V-\tfrac{a}{\R}\partial_{\phi}
-\partial_{r^*}\right)\left(\tfrac{\Delta}{\R}V-\tfrac{a}{\R}\partial_{\phi}
-\partial_{r^*}\right),
\end{equation}
make the replacement $V\phi^0= -r^{-2}\phi^1 -r^{-1}\phi^0$, and perform integration by parts, arriving at
\begin{align}\label{eq:ControlOnemorerweightenergyStep2}
\bigg|\int_{\Sigma_{\tau}\cap \{r\geq R_3\}}\frac{a^2\cos^2\theta}{r}\Re(\partial_{tt}^2 \phi^0 \overline{\phi^0})\bigg|\lesssim \sum_{i=0,1}E_{\tau}(\phi^i) +\int_{\Sigma_{\tau}\cap \{r=R_3\}} |\partial \phi^0|^2.
\end{align}
We can appropriately choose $R_3$ such that the last term is bounded by $CE_{\tau}(\phi^0)$, and obtain
$\int_{\Sigma_{\tau}\cap \{r\geq R_3\}}r|\nablaslash \widetilde{\phi^0}|^2\lesssim E_{\tau}(\widetilde{\phi^0}) +E_{\tau}(\widetilde{\phi^1})$.
By adding the integral over $\Sigma_{\tau}\cap \{r\leq R_3\}$ on both sides, we conclude
\begin{align}\label{eq:ControlInitrPhi0}
\int_{\Sigma_{\tau}\cap \{r\geq R_3\}}r|\nablaslash \widetilde{\phi^0}|^2\lesssim E_{\tau}(\widetilde{\phi^0}) +E_{\tau}(\widetilde{\phi^1}).
\end{align}

Similarly, we can obtain from \eqref{def:Qphi1phi2S2} that
\begin{align}\label{eq:ControlInitrPhi1}
\int_{\Sigma_{\tau}}r|\nablaslash \widetilde{\phi^1}|^2\lesssim E_{\tau}(\widetilde{\phi^0})+E_{\tau}(\widetilde{\phi^1})
+E_{\tau}(\widetilde{\phi^2})+\int_{\Sigma_{\tau}}r|\nablaslash \widetilde{\phi^0}|^2.
\end{align}
The inequality \eqref{eq:ControlInitrPhi01} then follows from combining \eqref{eq:ControlInitrPhi0} with \eqref{eq:ControlInitrPhi1}.
\qed
\end{proof}

\subsection{Proof of Theorem \ref{thm:EneAndMorEstiExtremeCompsNoLossDecayVersion2} for $n\geq 1$}\label{sect:highorderS2}

Let
\begin{align}
\mathbb{X}_0={}&\{\chi_0rYr, \partial_t,\partial_{\phi}\},\\
\mathbb{X}_1={}&\{\chi_1rYr, \partial_t,\partial_{\phi}\},\\
\mathbb{X}_2={}&\{rYr, \partial_t,\partial_{\phi}\},
\end{align}
and define for a multi-index $\mathrm{m}=(\mathrm{m}_1, \mathrm{m}_2, \mathrm{m}_3)$ with $\mathrm{m}_i\geq 0$ $(i=1,2,3)$ that
\begin{align}
\mathbb{X}_0^{\mathrm{m}}\psi= {} (\chi_0 rYr)^{\mathrm{m}_1}\partial_{t}^{\mathrm{m}_2}
\partial_{\phi}^{\mathrm{m}_3}\psi.
\end{align}
Similarly define $\mathbb{X}_1^{\mathrm{m}}\psi$ and $\mathbb{X}_2^{\mathrm{m}}\psi$.
The length of such a multi-index $\mathrm{m}$ is $|\mathrm{m}|=\mathrm{m}_1+\mathrm{m}_2+\mathrm{m}_3$.
Let
\begin{align}
\mathbf{\Phi}_{+2}={}&\{r^{4-\delta}\phi^0_{+2}, r^{2-\delta}\phi^1_{+2}, \phi^2_{+2}\},&
\mathbf{\Phi}_{-2}={}&\{\widetilde{\phi^0_{-2}}, \widetilde{\phi^1_{-2}}, \phi^2_{-2}\}, \notag\\
\mathbf{\Phi}_{+2}'={}&\{\phi^0_{+2}, \phi^1_{+2}, \phi^2_{+2}\},& \mathbf{\Phi}_{-2}'={}&\{\widetilde{\phi^0_{-2}}, r^2\widetilde{\phi^1_{-2}}, \phi^2_{-2}\}.
\end{align}
We will show the estimates \eqref{eq:estiMoraphi012Kerrposicomp} and \eqref{eq:estiMoraphi012Kerrnegacomp} still hold true if we replace all $\varphi_s^i\in \mathbf{\Phi}_{s}$ therein by $\mathbb{X}_0^{\mathrm{m}}\varphi_s^i$ and then sum over all $|\mathrm{m}|\leq n$. For simplicity, we write these estimates required to prove in a unified way
\begin{align}
\label{eq:highordertoprovemain}
\hspace{4ex}&\hspace{-4ex}\sum_{\varphi_s^i\in\mathbf{\Phi_{s}}}\sum_{|\mathrm{m}|\leq n}\left(\mathbf{E}_{\Sigma_{\tau}}(\mathbb{X}_0^{\mathrm{m}}\varphi_s^i)
+\mathbf{E}_{\mathcal{H}^+(0,\tau)}(\mathbb{X}_0^{\mathrm{m}}\varphi_s^i)
+\mathbf{M}_{\mathcal{D}(0,\tau)}(\mathbb{X}_0^{\mathrm{m}}\varphi_s^i)\right)\notag\\
\lesssim{}&\sum_{\varphi_s^i\in\mathbf{\Phi_{s}}}\sum_{|\mathrm{m}|\leq n}\mathbf{E}_{\Sigma_0}(\mathbb{X}_0^{\mathrm{m}}\varphi_s^i),
\end{align}
where $\mathbf{E}$ terms are energy terms and $\mathbf{M}$ are Morawetz terms. We claim that to prove the estimate \eqref{eq:MoraEstiFinal(2)KerrRegularpsiBothSpinComp} for general $n\geq 0$, it suffices to prove \eqref{eq:highordertoprovemain} for general $n\geq 0$.
The reason is as follows. Consider $n=1$ first. Note that the span of $\chi_1 Y$, $\partial_t$, and $\partial_{\phi}$ contains a vector field which is timelike everywhere in $\mathcal{D}(0,\tau)$, and we denote it by $U$. Therefore, by elliptic estimates,
we have for any $\tau\geq 0$,
\begin{align}
\hspace{4ex}&\hspace{-4ex}
\sum_{j=0,1}\sum_{\varphi_s^i\in\mathbf{\Phi_{s}}}
\left(\mathbf{E}_{\Sigma_{\tau}}(U^j\varphi_s^i)
+\mathbf{E}_{\mathcal{H}^+(0,\tau)}(U^j\varphi_s^i)
+\mathbf{M}_{\mathcal{D}(0,\tau)}(U^j\varphi_s^i)\right)
\notag\\
\gtrsim{}&
\sum_{\varphi_s^i\in\mathbf{\Phi_{s}}}\sum_{|\mathrm{k}|\leq1}
\left(
\mathbf{E}_{\Sigma_{\tau}}(\partial^{\mathrm{k}}\varphi_s^i)
+\mathbf{E}_{\mathcal{H}^+(0,\tau)}(\partial^{\mathrm{k}}\varphi_s^i)
+\mathbf{M}_{\mathcal{D}(0,\tau)}(\partial^{\mathrm{k}}\varphi_s^i)
\right).
\end{align}
The LHS of this estimate is bounded by the LHS of \eqref{eq:highordertoprovemain}, and the RHS of \eqref{eq:highordertoprovemain} is manifestly controlled by $\sum_{\varphi_s^i\in\mathbf{\Phi_{s}}}\sum_{|\mathrm{k}|\leq1}
\mathbf{E}_{\Sigma_{\tau}}(\partial^{\mathrm{k}}\varphi_s^i)$. Hence we arrive at
\begin{align}
\hspace{4ex}&\hspace{-4ex}\sum_{\varphi_s^i\in\mathbf{\Phi_{s}}}\sum_{|\mathrm{k}|\leq1}
\left(
\mathbf{E}_{\Sigma_{\tau}}(\partial^{\mathrm{k}}\varphi_s^i)
+\mathbf{E}_{\mathcal{H}^+(0,\tau)}(\partial^{\mathrm{k}}\varphi_s^i)
+\mathbf{M}_{\mathcal{D}(0,\tau)}(\partial^{\mathrm{k}}\varphi_s^i)
\right)\notag\\
\lesssim{}&\sum_{\varphi_s^i\in\mathbf{\Phi_{s}}}\sum_{|\mathrm{k}|\leq1}
\mathbf{E}_{\Sigma_{\tau}}(\partial^{\mathrm{k}}\varphi_s^i).
\end{align}
This is equivalent to the $n=1$ case of the estimate \eqref{eq:MoraEstiFinal(2)KerrRegularpsiBothSpinComp}. The general $n\geq 1$ cases follow by induction.

To prove the estimate \eqref{eq:highordertoprovemain} for any nonnegative $n$, we prove it by induction in $n$.  Assume the estimate \eqref{eq:highordertoprovemain} is valid for $n=n_0-1\geq 0$, we prove it for $n=n_0$. From this assumption and the fact that $\partial_{t}$ and $\partial_{\phi}$ commute with systems \eqref{eq:ReggeWheeler Phi^012KerrS2} and \eqref{eq:ReggeWheeler Phi^012KerrNegaS2}, one obtains
\begin{align}
\label{eq:highordertoprovemain:trivial}
\hspace{4ex}&\hspace{-4ex}\sum_{\varphi_s^i\in\mathbf{\Phi_{s}}}
\sum_{|\mathrm{m}|\leq n_0, \mathrm{m}_1\leq n_0-1}\left(\mathbf{E}_{\Sigma_{\tau}}(\mathbb{X}_0^{\mathrm{m}}\varphi_s^i)
+\mathbf{E}_{\mathcal{H}^+(0,\tau)}(\mathbb{X}_0^{\mathrm{m}}\varphi_s^i)
+\mathbf{M}_{\mathcal{D}(0,\tau)}(\mathbb{X}_0^{\mathrm{m}}\varphi_s^i)\right)\notag\\
\lesssim{}&\sum_{\varphi_s^i\in\mathbf{\Phi_{s}}}\sum_{|\mathrm{m}|\leq n_0, \mathrm{m}_1\leq n_0-1}\mathbf{E}_{\Sigma_0}(\mathbb{X}_0^{\mathrm{m}}\varphi_s^i),
\end{align}
To close the induction, one needs to prove that the LHS of \eqref{eq:highordertoprovemain} with $|\mathrm{m}|=\mathrm{m}_1=n_0$ is bounded by the RHS of \eqref{eq:highordertoprovemain} with $n=n_0$, i.e. to show that one can use $\sum_{\varphi_s^i\in\mathbf{\Phi_{s}}}\sum_{|\mathrm{m}|\leq n_0}\mathbf{E}_{\Sigma_0}(\mathbb{X}_0^{\mathrm{m}}\varphi_s^i)$ to bound the following terms for all $i=0,1,2$:
\begin{align}
\label{eq:highorderphi+22tobound}
\mathbf{E}_{\Sigma_{\tau}\cap [r_+,r_0]}((rYr)^{n_0}\varphi_s^i)
+\mathbf{E}_{\mathcal{H}^+(0,\tau)}((rYr)^{n_0}\varphi_s^i)
+\mathbf{M}_{\mathcal{D}(0,\tau)\cap [r_+,r_0]}((rYr)^{n_0}\varphi_s^i).
\end{align}

Recall in the proofs of Propositions \ref{prop:RedShiftEstiInhomoSWRWE} and \ref{prop:RedShiftEstiInhomoSWRWEtildephi01} that the equation of $\varphi_s^i\in \mathbf{\Phi}_{+2}'\cup\mathbf{\Phi}_{-2}'$ can be put into the following form
\begin{align}
\Sigma\widetilde{\Box}_g\varphi_s^i=G_s^i.
\end{align}
See \eqref{eq:phi+20toRS},
\eqref{eq:eqPsi[-2]} and
\eqref{eq:r2tildephi1toRS} for the source term $G_s^i$ for $\varphi_s^i\in\{\phi_{+2}^0, \widetilde{\phi^0_{-2}}, r^2\widetilde{\phi^1_{-2}}\}$, and the source term $G_s^i$ for $\varphi_s^i\in \{\phi_{+2}^1, \phi_{+2}^2,\phi_{-2}^2\}$ is
$4ias\cos\theta\partial_t\phi^i_{s}+4(\Delta+a^2)/{r^2})\phi^i_{s}+F_{s}^i$.
Note from \eqref{eq:SigmaTildeBoxpsi} and \eqref{def:Lsoperator} that $\Sigma\widetilde{\Box}_g=\mathbf{L}_s+s^2+2ias\cos\theta\partial_t$, which implies the commutator $[\Sigma\widetilde{\Box}_g, rYr]\psi$ is exactly the RHS of \eqref{eq:wavecommutatorwithrYr}. In fact, by commuting with $rYr$ operator $n_0$ times, one arrives at
\begin{align}
[\Sigma\widetilde{\Box}_g, (rYr)^{n_0}]\varphi_s^i
={}&-\tfrac{2n_0(r^2-3Mr+2a^2)}{r}Y((rYr)^{n_0}\varphi_s^i)\notag\\
&
+\sum_{|\mathrm{m}|\leq n_0+1, \mathrm{m}_1\leq n_0} h_{\mathrm{m}}(r) \mathbb{X}_2^{\mathrm{m}}\varphi_s^i.
\end{align}
Therefore, by commuting the equations of $\varphi_s^i\in \mathbf{\Phi}_{+2}'\cup\mathbf{\Phi}_{-2}'$ which are in the form of \eqref{eq:RewrittenFormofSWRWEOpeForm}  with $(\chi_1rYr)^{n_0}$, we get
\begin{align}
\label{eq:phi+22commutewithrYr}
\Sigma \widetilde{\Box}_g((\chi_1rYr)^{n_0}\varphi_s^i)=
{}&f_{n_0}(r)Y((\chi_1rYr)^{n_0}\varphi_s^i)\notag\\
&+\sum_{|\mathrm{m}|\leq n_0+1, \mathrm{m}_1\leq n_0}h_{\mathrm{m},2}(r)\mathbb{X}_2^{\mathrm{m}}\varphi_s^i
\notag\\
&+\sum_{j=0,1}\sum_{|\mathrm{m}|\leq n_0+1, \mathrm{m}_1\leq n_0}h_{\mathrm{m},j}(r)\mathbb{X}_2^{\mathrm{m}}\phi_{+2}^i.
\end{align}
Here, for $i=0,1,2$, $h_{\mathrm{m},i}(r)$ are bounded functions supported in $[r_+,r_1]$, and
\begin{align}
f_{n_0}(r)=
\left\{
  \begin{array}{ll}
    -\tfrac{2n_0(r^2-3Mr+2a^2)}{r}, & \quad \text{for}\  \varphi_s^i\in \{\phi_{+2}^0, \phi_{+2}^1, \phi_{+2}^2,\phi_{-2}^2\},  \\
 \frac{4(r-M)r-5\Delta}{r}-\tfrac{2n_0(r^2-3Mr+2a^2)}{r}, & \quad \text{for}\  \varphi_s^i=\widetilde{\phi^0_{-2}},\\
 \frac{4(r-M)r-9\Delta}{2r}-\tfrac{2n_0(r^2-3Mr+2a^2)}{r}, & \quad \text{for}\  \varphi_s^i=r^2\widetilde{\phi^1_{-2}}.
  \end{array}
  \right.
\end{align}
 Denote the first line on the RHS \eqref{eq:phi+22commutewithrYr} by $G_{s,n_0,m}^i$ and the extra two lines on the RHS by $G_{s,n_0,e}^i$. We apply Lemma \ref{lem:Redshiftspinweightedwavegeneral} with $\tau_1=0$, $\tau_2=\tau$, $\psi=(\chi_1rYr)^{n_0}\phi_{+2}^2$ and $G=G_{+2,n_0,m}^2+G_{+2,n_0,e}^2$ to this equation, and since $f_{n_0}(r)\geq 0$ in $r\leq r_0$, one utilizes the Cauchy--Schwarz inequality to obtain for each $\varphi_s^i\in \mathbf{\Phi}_{+2}'\cup\mathbf{\Phi}_{-2}'$ that
\begin{align}
\hspace{4ex}&\hspace{-4ex}\int_{\mathcal{D}(0,\tau)\cap \{r\leq r_1\}}\Re(\overline{G_{s,n_0,m}^i}\cdot N_{\chi_0}(\chi_1rYr)^{n_0}\varphi_s^i)\notag\\
\lesssim{}& \veps\mathbf{M}_{\mathcal{D}(0,\tau)\cap [r_+,r_1]}((rYr)^{n_0}\varphi_s^i)
+\veps^{-1}\sum_{|\mathrm{m}|\leq n_0, \mathrm{m}_1\leq n_0-1}\mathbf{M}_{\mathcal{D}(0,\tau)\cap [r_+,r_0]}(\mathbb{X}_2^{\mathrm{m}}\varphi_s^i).
\end{align}
This leads to an estimate for any $\varphi_s^i\in \mathbf{\Phi}_{+2}'\cup\mathbf{\Phi}_{-2}'$:
\begin{align}
\label{eq:highorder:laststep1}
\hspace{4ex}&\hspace{-4ex}\mathbf{E}_{\Sigma_{\tau}\cap [r_+,r_0]}((rYr)^{n_0}\varphi_s^i)
+\mathbf{E}_{\mathcal{H}^+(0,\tau)}((rYr)^{n_0}\varphi_s^i)
+\mathbf{M}_{\mathcal{D}(0,\tau)\cap [r_+,r_0]}((rYr)^{n_0}\varphi_s^i)\notag\\
\lesssim{}&\mathbf{E}_{\Sigma_0\cap[r_+,r_1]}((rYr)^{n_0}\varphi_s^i)
+\mathbf{M}_{\mathcal{D}(0,\tau)\cap [r_+,r_0]}((rYr)^{n_0}\varphi_s^i)\notag\\
&+\veps\mathbf{M}_{\mathcal{D}(0,\tau)\cap [r_0,r_1]}((rYr)^{n_0}\varphi_s^i)\notag\\
&
+\veps^{-1}\sum_{j=0}^2\sum_{|\mathrm{m}|\leq n_0, \mathrm{m}_1\leq n_0-1}\mathbf{M}_{\mathcal{D}(0,\tau)\cap [r_+,r_0]}(\mathbb{X}_2^{\mathrm{m}}\varphi_s^j).
\end{align}
Here, the Cauchy--Schwarz inequality is applied to the other bulk integral term $\int_{\mathcal{D}(0,\tau)\cap \{r\leq r_1\}}\Re(\overline{G_{s,n_0,e}^i}\cdot N_{\chi_0}(\chi_1rYr)^{n_0}\varphi_{s}^i)$. Taking $\veps$ small enough allows us to absorb the second term on the RHS of \eqref{eq:highorder:laststep1} by the LHS.
Recall that the span of $\partial_t$ and $\partial_{\phi}$ contains a timelike vector which is timelike in the interior of domain of outer communication, hence by elliptic estimates, the second line on the RHS of \eqref{eq:highorder:laststep1} is bounded by the LHS of \eqref{eq:highordertoprovemain:trivial}. By adding a large multiple of the estimate \eqref{eq:highordertoprovemain:trivial} to \eqref{eq:highorder:laststep1}, the terms in \eqref{eq:highorderphi+22tobound} for any $\varphi_s^i\in\mathbf{\Phi_{s}}$ are bounded by $\sum_{\varphi_s^i\in\mathbf{\Phi_{s}}}\sum_{|\mathrm{m}|\leq n_0}\mathbf{E}_{\Sigma_0}(\mathbb{X}_0^{\mathrm{m}}\varphi_s^i)$, hence the estimate \eqref{eq:highordertoprovemain} follows in the case of $n=n_0$.
\qed

\subsection*{Acknowledgment}
The author is grateful to Lars Andersson,  Pieter Blue and Claudio Paganini for many helpful discussions and comments.


\appendix
\section{Commutators of a Spin-weighted Wave Operator and $rYr$ (or $rVr$)}\label{sect:commutatorwaveandYV}

\begin{prop}
\label{prop:commutatorwaveandYV}
Let $\mathbf{L}_s$ be a spin-weighted wave operator
\begin{align}
\label{def:Lsoperator}
\mathbf{L}_s={}&\Sigma \Box_g+\tfrac{2is\cos\theta}{\sin^2 \theta}\partial_{\phi}-s^2\cot^2 \theta-s^2.
\end{align}
For any scalar $\psi$ with spin weight $s$, we have the following commutators
\begin{align}
[\mathbf{L}_s, -rVr]\psi={}&-\tfrac{2(r^2-3Mr+2a^2)}{r^3}r^2V(rV(r\psi))
+\tfrac{4}{r}(a^2\partial_t +a\partial_{\phi})(rV(r\psi))\notag\\
&
-\tfrac{2(r^2-Mr+3a^2)}{r^2}rV(r\psi)
-2(a^2\partial_t +a\partial_{\phi})\psi
+\tfrac{2Mr-4a^2}{r}\psi,
\label{eq:wavecommutatorwithrVr}\\
[\mathbf{L}_s, rYr]\psi={}&-\tfrac{2(r^2-3Mr+2a^2)}{r^3}r^2Y(rY(r\psi))
+\tfrac{4}{r}(a^2\partial_t +a\partial_{\phi})(rY(r\psi))\notag\\
&
+\tfrac{2(r^2-Mr+3a^2)}{r^2}rY(r\psi)
+2(a^2\partial_t +a\partial_{\phi})\psi
+\tfrac{2Mr-4a^2}{r}\psi.
\label{eq:wavecommutatorwithrYr}
\end{align}
\end{prop}

\begin{proof}
Expand $\mathbf{L}_s \psi$ into the form of
\begin{align}
\mathbf{L}_s\psi
={}&\left(\tfrac{1}{\sin{\theta}} \partial_{\theta}(\sin \theta \partial_{\theta})+\tfrac{\partial_{\phi\phi}^2}{\sin^2\theta}
+2a\partial_{t\phi}^2+a^2 \sin^2 \theta\partial_{tt}^2
+\tfrac{2is\cos\theta}{\sin^2 \theta}\partial_{\phi}-\tfrac{s^2}{\sin^2 \theta}\right)\psi\notag\\
&
-rY\left(\tfrac{\Delta}{r^2} V(r\psi)\right)
+\tfrac{a^2\Delta}{r^2(r^2+a^2)}(V+Y)(r\psi)
+\tfrac{2ar}{\R}\partial_{\phi}\psi
-2ias\cos\theta\partial_t \psi
\notag\\
&
-\left[\tfrac{2Mr^3+a^2r^2-4a^2Mr+a^4}{(\R)^2}+r\sqrt{\R}\partial_r\left(\tfrac{a^2 \Delta}{r^2 (\R)^{3/2}}\right)\right]\psi.
\label{eq:expandformofLswaveop}
\end{align}
We prove the commutator relation \eqref{eq:wavecommutatorwithrVr} below, and the commutator \eqref{eq:wavecommutatorwithrYr} is manifest from \eqref{eq:wavecommutatorwithrVr} by letting $t\to -t$ and $\phi \to -\phi$ (hence $\partial_t\to -\partial_t$, $\partial_{\phi} \to -\partial_{\phi}$ and $V \to -Y$). We calculate the commutators between each term and $-rVr$. The first line of \eqref{eq:expandformofLswaveop} commutes with $-rVr$, and hence their commutators vanish. The last term on the second line commutes with $-rVr$, and for the other terms on the second line, we have
\begin{align}
\hspace{4ex}&\hspace{-4ex}[rY\left(\tfrac{\Delta}{r^2} Vr\right), -rVr]\psi\notag\\
={}&r^3 [Y,V]\left(\tfrac{\Delta}{r^2}V(r\psi)\right)
-2r^2 V\left(\tfrac{\Delta}{r^2}V(r\psi)\right)\notag\\
&- 2r\tfrac{\Delta}{r^2}V(r\psi)
-rY\left(r^2 \partial_r\left(\tfrac{\Delta}{r^2}\right)V(r\psi)\right),
\label{eq:secondcommutatorrela}\\
\hspace{4ex}&\hspace{-4ex}
[\tfrac{a^2\Delta}{r^2(r^2+a^2)}(V+Y)r,-rVr]\psi\notag\\
={}&-\tfrac{a^2\Delta}{\R}[Y,V](r\psi)
+\tfrac{2a^2r\Delta}{r^2(\R)} V(r\psi)\notag\\
&
+r\partial_r\left(\tfrac{a^2\Delta}{r(\R)}\right)Y(r\psi)
+r\partial_r\left(\tfrac{a^2\Delta}{r^3(\R)}\right)r^2V(r\psi),
\label{eq:thirdcommutatorrela}\\
\hspace{4ex}&\hspace{-4ex}
[\tfrac{2ar}{\R}\partial_{\phi},-rVr]\psi
={}-\tfrac{2ar^2(r^2-a^2)}{(\R)^2}\partial_{\phi}\psi.
\end{align}
The commutator of the last line with $-rVr$ equals
\begin{align}
\hspace{4ex}&\hspace{-4ex}-r^2\partial_r\left(\tfrac{2Mr^3+a^2r^2-4a^2Mr+a^4}{(\R)^2}+r\sqrt{\R}\partial_r\left(\tfrac{a^2 \Delta}{r^2 (\R)^{3/2}}\right)\right)\psi\notag\\
={}&-r^2 \partial_r\left(\tfrac{2(Mr-a^2)}{r^2}\right)\psi\notag\\
={}&\tfrac{2Mr-4a^2}{r}\psi.
\end{align}
It remains to calculate the commutator $[Y,V]$ which is present in both \eqref{eq:secondcommutatorrela} and \eqref{eq:thirdcommutatorrela}. For a general field $\psi$,
\begin{align}
[Y,V]\psi={}&-2\partial_r\left(\tfrac{\R}{\Delta}\right)\partial_t \psi
-2\partial_r\left(\tfrac{a}{\Delta}\right)\partial_{\phi}\psi\notag\\
={}&\tfrac{4M(r^2-a^2)}{\Delta^2}\partial_t\psi
+\tfrac{4a(r-M)}{\Delta^2}\partial_{\phi}\psi.
\end{align}
Collecting the above discussions and calculations, we arrive at
\begin{align}
\hspace{4ex}&\hspace{-4ex}[\mathbf{L}_s, -rVr]\psi\notag\\
={}&-\left(\tfrac{2\Delta}{r}-\tfrac{2(Mr-a^2)}{r}\right)V(rV(r\psi))
+\tfrac{4}{r}(a^2\partial_t +a\partial_{\phi})(rV(r\psi))\notag\\
&-\tfrac{2(r^2-Mr+3a^2)}{r^2}rV(r\psi)
-2a^2\partial_t \psi
-\tfrac{2a^3(3r^2+a^2)+2ar^2(r^2-a^2)}{(\R)^2}\partial_{\phi}\psi\notag\\
&+\tfrac{2Mr-4a^2}{r}\psi.
\end{align}
The relation \eqref{eq:wavecommutatorwithrVr} follows by calculating the coefficient of each single term in the above equation.
\qed
\end{proof}

\newcommand{\mnras}{Monthly Notices of the Royal Astronomical Society}
\newcommand{\prd}{Phys. Rev. D}
\newcommand{\apj}{Astrophysical J.}
\providecommand{\MR}{\relax\ifhmode\unskip\space\fi MR }
\bibliographystyle{amsplain}

\end{document}